\newcommand\abs[1]{\left|#1\right|}
\renewcommand{\d}[1]{\ensuremath{\operatorname{d}\!{#1}}}
\DeclareMathOperator*{\esssup}{ess\,sup}
\DeclareMathOperator*{\essinf}{ess\,inf}
\DeclareMathOperator*{\argmin}{arg\,min}
\newtheorem{theorem}{Theorem}[subsection]
\newtheorem{proposition}[theorem]{Proposition}
\newtheorem{corollary}[theorem]{Corollary}
\newtheorem{lemma}[theorem]{Lemma}
\theoremstyle{definition}
\newenvironment{definition}
  {\pushQED{\qed}\definitionx}
  {\popQED\enddefinitionx}
\newenvironment{remark}
  {\pushQED{\qed}\remarkx}
  {\popQED\endremarkx}
\title{\textbf{Forward indifference valuation and hedging of basis risk under\\partial information}
\vspace{3em}
\begin{figure}[ht]
\centering
\includegraphics[scale=0.45]{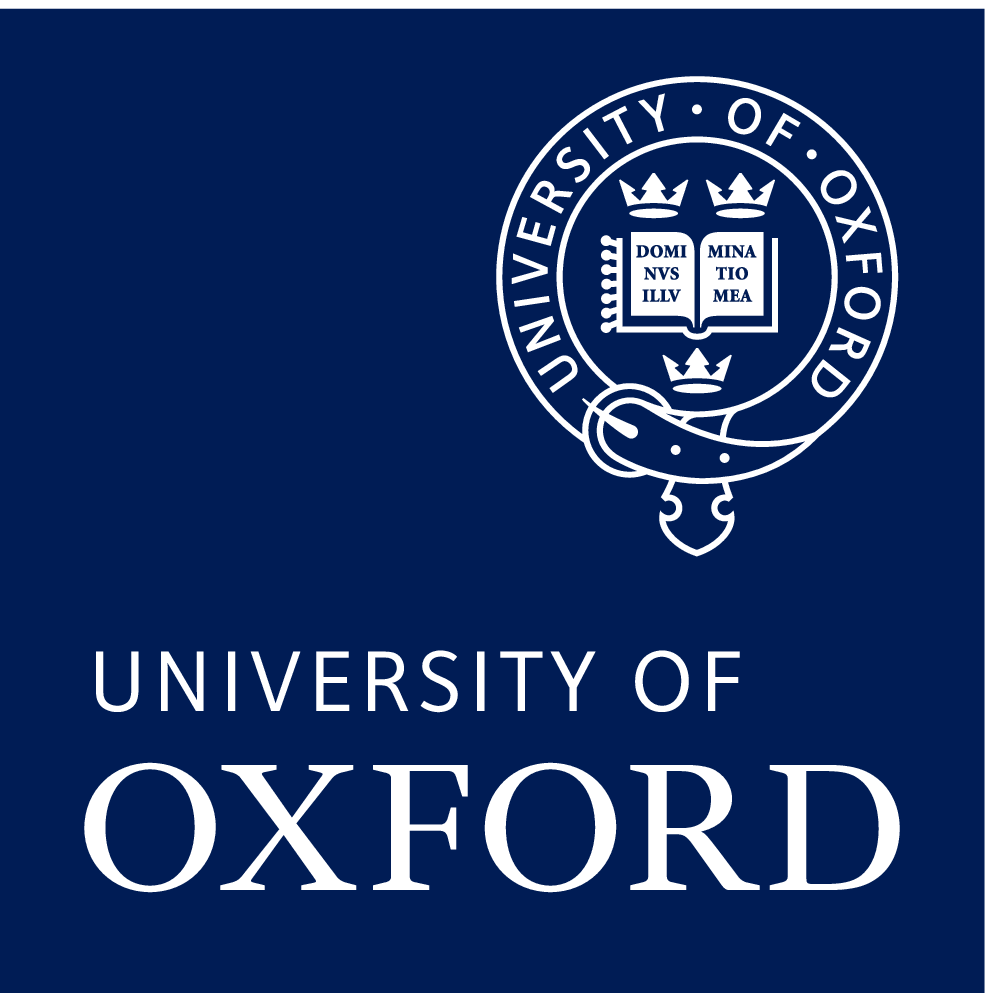}
\end{figure}
\vspace{3em}
}
\author{Mahan Tahvildari\\
Lincoln College\\
Mathematical Institute\\
University of Oxford
\vspace{3em}\\
A thesis submitted in partial fulfilment of the\\
\textit{Master of Science in Mathematical Finance}}
\date{December 16, 2018}
\begin{document}

\maketitle
\thispagestyle{empty}

\newpage
\thispagestyle{empty}
\begin{center}
\textit{I dedicate this work to my parents.}
\end{center}
\vspace{3em}
\begin{center}
``Be happy for this moment. This moment is your life.''\\
\textit{Omar Khayy\'{a}m, Persian mathematician, astronomer, and poet}
\end{center}
\newpage
\thispagestyle{empty}
\section*{Acknowledgments}
I would like to thank my supervisor Professor Dr Michael Monoyios for his guidance, constructive suggestions and encouragement for this research work. I would also like to express my great appreciation to my employer d-fine, that made it possible for me to study at Oxford University by granting me a full scholarship. Finally, I wish to thank my parents for their restless support and encouragement throughout my studies.
\newpage
\thispagestyle{empty}
\begin{abstract}
We study the hedging and valuation of European and American claims on a non-traded asset $Y$, when a traded stock $S$ is available for hedging, with $S$ and $Y$ following correlated geometric Brownian motions. This is an \textit{incomplete market}, often called a \textit{basis risk model}. The market agent's risk preferences are modelled using a so-called \textit{forward performance process (forward utility)}, which is a time-decreasing utility of exponential type. Moreover, the market agent (investor) does not know with certainty the values of the asset price drifts. This market setting with drift parameter uncertainty is the \textit{partial information scenario}. We discuss the stochastic control problem obtained by setting up the hedging portfolio and derive the \textit{optimal hedging strategy}. Furthermore, a \textit{(dual) forward indifference price} representation of the claim and its PDE are obtained. With these results, the \textit{residual risk process} representing the basis risk (hedging error), \textit{pay-off decompositions} and \textit{asymptotic expansions} of the indifference price in the European case are derived. We develop the analogous \textit{stochastic control and stopping problem} with an American claim and obtain the corresponding forward indifference price valuation formula.
\end{abstract}
\newpage
\thispagestyle{empty}
\tableofcontents
\newpage
\setcounter{page}{1}
\section{Introduction}
A fundamental theory of mathematical finance is the problem of a market agent who invests in a financial market in order to maximise trxivehe \textit{expected utility} of his terminal wealth under his individual preferences. Problems of expected utility maximisation go back at least to the two seminal articles of Merton~\cite{Me1969}, \cite{Me1971} (see also Merton~\cite{Me1992}), who studied the framework of time-continuous models, and the seminal article of Samuelson~\cite{Sa1969} treating the time-discrete case. Merton derived a non-linear partial differential equation \textit{(Hamilton-Jacobi-Bellman (HJB) equation)} for the value function of the maximisation problem using methods from stochastic control theory.\par
The modern approach for solving such problems uses \textit{dual characterisations} of portfolios through defining an appropriate set of martingale measures. Harrison and Pliska~\cite{HP1981} developed a general stochastic model of a continuous, multi-dimensional, \textit{complete} market and obtained the corresponding general Black-Scholes pricing formula. The setting of a complete market, where the martingale measure is unique, was also studied by Pliska~\cite{P1986}, Cox and Huang~\cite{CH1989}, \cite{CH1991} and by Karatzas, Lehoczky and Shreve~\cite{KLS1987}. One of the main results is, that the \textit{marginal utility} of the optimal terminal wealth is equal to the density of the martingale measure modulo a constant.\par
The setting of an \textit{incomplete} market, where perfect hedging is not possible, is a more difficult case and was studied via time-discrete models by He and Pearson~\cite{HP1991} and by Karatzas, Lehoczky, Shreve and Xu~\cite{KLSX1991}, who realised that the use of \textit{dual methods} from convex analysis provided comprehensive solutions to stochastic control problems. The \textit{dual variational problem} of the \textit{primal problem} is formulated and solved with \textit{convex dual} relationship as Bismut~\cite{Bi1973} demonstrated. Kramkov and Schachermayer~\cite{KS1999} studied the classical utility maximisation problem under weaker assumptions on the model and on the utility function. Rogers \cite{R2003} delved deeper into the theory by applying methods from functional analysis and presented various examples solved with duality methods (see also \v{Z}itkovi\'{c} \cite{Zi2009} and Berrier, Rogers and Tehranchi \cite{BRT2009}). Davis~\cite{D2000}, \cite{D2006}, Rouge and Karoui~\cite{RK2000}, Henderson and Hobson~\cite{H2002}, \cite{HH2002}, Musiela and Zariphopoulou \cite{MZ2004} investigated utility-based hedging in an incomplete market case, where hedging becomes imperfect and a \textit{hedging error}, the \textit{basis risk}, remains. It is the risk associated with the trading of a derivative security on a non-traded underlying asset, hedged with a imperfectly correlated traded asset. Examples are weather derivatives or options on illiquid securities. Ankirchner and Imkeller~\cite{AI2001} introduced a typical example for a cross hedge, where an airline company wants to manage kerosene price risk. Ankirchner et al.~\cite{ADHP2010}, \cite{AID2010}, \cite{AIP2008} dealt also with applied basis risk models. Monoyios~\cite{M2004} derived \textit{perturbation series} giving accurate analytic approximations for the price and hedging strategy of the claim using an exponential utility and carried out an numerical performance analysis between the improved optimal hedge and the naive hedge with the traded asset. Kallsen and Rheinl\"ander dealt with classical utility-based pricing and hedging using an quadratic hedging approach and extended the results obtained by Mania and Schweizer \cite{MS2005}, Becherer \cite{B2006} and Kramkov and S\^{i}rbu \cite{KS2007}. Zariphopoulou \cite{Z2001} studied optimisation models with power utility and produced reduced form solutions of the indifference price by applying a \textit{distortion method} to the indifference price PDE. The setting with exponential utility in a multi-dimensional model was treated by Musiela and Zariphopoulou~\cite{MZ2004}. Monoyios~\cite{M2006} derived representations for the optimal martingale measures in a two-factor Markovian model by using the distortion power solution for the primal problem to obtain a dual \textit{entropic} representation of the stochastic control problem. We refer to the introductions of the aforementioned papers for more references in the field of classical utility-based optimisation problems.\par
Monoyios~\cite{M2007} explored the impact of drift parameter uncertainty in an incomplete market model having an European option on a non-traded asset hedging a correlated traded stock. He developed analytic expansions for the indifference price and hedging strategies. The key approach is the development of a \textit{filtering approach}, the \textit{Kalman-Bucy filter}, in which the investor updates the \textit{market price of risk} parameter from the observations of the asset prices. Applications of filtering can be found in Kallianpur \cite{K1980}, Rogers and Williams \cite{RW2001} and Fujisaki et al. \cite{FKK1972}. Filtering originates from signal processing by Wiener \cite{W1964} and Kolmogorov \cite{I1989} during the 1940s. In the 1960s, it was further development by Kalman and Bucy~\cite{K1960},~\cite{KB1961}. The setting, in which the investor does not observe the assets' Brownian motions is called the \textit{partial information scenario}. Problems under partial information scenarios were also studied by Rogers~\cite{R2001}, Lakner~\cite{L1998} and Brendle~\cite{Br2006}. Monoyios~\cite{M2010} used a two-dimensional Kalman-Bucy filter with Gaussian prior distribution in a partial information model and derived the optimal hedging strategy and indifference price representations using dual methods. Dependent on the prior estimations of the asset price drifts, the price representations formulas uses the \textit{minimal entropy martingale measure} or the \textit{minimal martingale measure}.\par
Musiela and Zariphopoulou~\cite{MZ2007}, \cite{MZ2009}, \cite{MZ2010} introduced a new class of \textit{forward utilities (forward performances)} that are generated forward in time. They discussed associated value functions, optimal investment strategies and indifference price representations. They defined the concept of forward performance processes in order to quantify the dynamically changing preferences of an investor. Independently, Henderson defined in \cite{H2007} and Henderson and Hobson~\cite{HH2007} analysed the same class of dynamic utilities, but called them \textit{horizon-unbiased}. Forward utilities are defined by the \textit{dynamic programming principle} and ensure more flexibility as they are specified for today and not for a fixed future time. Berrier and Theranchi~\cite{BT2009} broadened the definitions by adding a process for the investor's consumption.\par
In this work, we investigate the utility-based valuation of European and American claims on a non-traded asset $Y$, when a correlated traded stock $S$ is available for hedging, with $S$ and $Y$ following correlated geometric Brownian motions, and when the agent's risk preferences are modelled using the forward performance process from Musiela and Zariphopoulou~\cite{MZ2009}, and when the agent does not know with certainty the values of the asset price drifts. Since the market becomes incomplete, we retain an unhedgeable (basis) risk. The basis risk model will first be constructed under a standard full information hypothesis, where the drifts of both assets are known constants. In this setting, the utility-based valuation of European claims on $Y$ has been well studied, using classical (as opposed to forward) exponential utility. The partial information case, where the asset drifts are taken as unknown constants, whose values are filtered from price observations, has also been studied for European claim valuation using classical utility (see, for example, Monoyios~\cite{M2010}). The thesis will investigate the valuation and hedging of European and then American claims on $Y$ with a exponential forward utility under partial information. We apply the partial information model with the Kalman-Bucy filter from Monoyios~\cite{M2010} to get analogous results for valuation and hedging with forward instead of classical utility. The novel approach is the embedding of the specific partial information model, making the market prices of risks depending on both asset prices, into the aforementioned forward performance framework. We compare the optimal hedging strategies and indifference price representations for European and American claims associated with forward and classical utility under the partial information scenario. One of the key results is the change of optimal measure from the minimal entropy martingale measure $\mathbb{Q}^E$ to the minimal martingale measure $\mathbb{Q}^M$. In the European option's case, we obtain the dual representation of the forward indifference price with its semi-linear PDE of second order, the residual risk, a pay-off decomposition of the European claim and an asymptotic expansion of the forward indifference price. In the case of an American claim, we define the control and stopping problem and derive the dual representation of the forward indifference price under the partial information model.\par
Oberman and Zariphopoulou~\cite{OZ2003} and Leung and Sircar~\cite{LS2009} studied the valuation and hedging of American options in a basis risk model under full information using classical utility. Leung, Sircar and Zariphopoulou~\cite{LSZ2012} investigated the full information model using forward utility to price \textit{executive stock options (ESOs)}. ESOs are American calls issued by a company to its employees (mostly executives) as a form of variable payments as instruments for motivation (cf.~Kraizberg et al.~\cite{KTW2002}, Chen et al.~\cite{CCC2014}, Brandes et al.~\cite{BDLH2003}). We extend the framework of Leung, Sircar and Zariphopoulou~\cite{LSZ2012} to the partial information model, derive the indifference price valuation and prepare the groundwork for future applications.\par
The remainder of the dissertation is organised as follows. In Section~\ref{sec:basis risk model} the basis risk market model in the full and partial information scenarios, the concept of filtering and forward utilities defined via a certain class of risk tolerance functions are treated. The forward utility-based valuation and hedging problem with an European option is dealt in Section~\ref{sec:european option}. It begins with the setting of perfect hedging in a complete market and continues with the incomplete market case, followed by the formulation of the performance maximisation of the investor's hedging portfolio. The problem is solved with dual methods and results in the optimal hedging strategy and the dual representation formula for the forward indifference price. Furthermore, the residual risk of the strategy, option's pay-off decompositions and an asymptotic expansion for the forward indifference price are derived. In Section~\ref{sec:american option} we set up the partial information model with an American option, which can be early exercised and develop the (dual) optimal control and stopping optimisation problem and obtain the entropic representation of the forward indifference price. We conclude in Section~\ref{sec:conclusions} by performing an analysis of essential model assumptions and results obtained in this work, and discuss alternatives from present topics as well as future directions for research.
\section{Basis risk model}
\label{sec:basis risk model}
In this section the financial market is modelled by a \textit{basis risk model} premised on the geometric Brownian motion. A distinction is made in the assumption of the asset price \textit{Sharpe ratios}, which leads to the \textit{full information scenario} for certain Sharpe ratios and \textit{partial information scenario} for uncertain Sharpe ratios. The \textit{Kalman-Bucy filtering approach} is developed and applied to the partial information scenario to transform it into the case of full information. Lastly, the concept of \textit{forward utility} is introduced as a dynamic extension of the classical utility theory and used within the basis risk model. Herein, a useful function called \textit{local risk tolerance} serves the classification of \textit{forward utility functions}.
\subsection{Full information scenario}
\label{subsec:full information scenario}
The classical basis risk model defined in this subsection was initially explored by Davis~\cite{D2006}.
Consider a filtered probability space $(\Omega, \mathcal{F}, \mathbb{F}:=(\mathcal{F}_t)_{0\leq t\leq T}, \mathbb{P})$ as the setting of a financial market, where the terminal filtration $\mathbb{F}$ is generated by the two-dimensional standard $\mathbb{P}$-Brownian motion $(W^S,W^\perp)$ with correlation between the Wiener processes $W^S:=(W_t^S)_{0\leq t\leq T}$ and $W^\perp:=(W_t^\perp)_{0\leq t\leq T}$. A traded stock price $S:=(S_t)_{0\leq t\leq T}$ follows a geometric Brownian motion process given by
\begin{equation}
\label{eq:dS}
  \d S_t = \sigma^S S_t (\lambda^S\d t + \d W_t^S),
\end{equation}
in which the stock's volatility $\sigma^S > 0$ and its \textit{market price of risk (MPR)} or \textit{Sharpe ratio} $\lambda^S = \frac{\mu^S - r_m}{\sigma^S}$ with drift $\mu^S$ are known constants. For simplicity, the risk-free market interest rate $r_m$ is taken to be zero. A non-traded asset $Y:=(Y_t)_{0\leq t\leq T}$ follows the correlated geometric Brownian motion
\begin{equation}
\label{eq:dY}
  \d Y_t = \sigma^Y Y_t (\lambda^Y\d t + \d W_t^Y),
\end{equation}
with $\sigma^Y>0$ and $\lambda^Y$ known constants. The Brownian motion $W^Y:=(W_t)_{0\leq t\leq T}$ from the non-traded asset dynamics is correlated with the stock's Brownian motion $W^S$ according to $W_t^Y = \rho W_t^S + \sqrt{1-\rho^2}W_t^\perp$ with a known constant $\rho\in[-1,1]$ as the correlation coefficient. In the case $\abs{\rho}=1$, the market is called \textit{complete} and perfect hedging is possible; see Subsection~\ref{subsec:Perfect hedging in a complete market}. If $\abs{\rho}\neq 1$, the market is called \textit{incomplete}.\par
An investor with initial wealth $x>0$ dynamically rebalances his portfolio allocations between the stock and the riskless money market account according to his $\mathbb{F}$-predictable \textit{(portfolio or trading) strategy} $\theta:=(\theta_t)_{0\leq t\leq T}$ ($\pi:=(\pi_t)_{0\leq t\leq T}$), that is an $S$-integrable process representing the number of shares held in the portfolio (respectively the cash amount $\pi_t:=\theta_tS_t$ invested in the stock). Contextually, both $\theta$ and $\pi$ are called strategy. Under self-financing trading condition, the investor's \textit{portfolio wealth} is denoted by the positive process $X:=(X_t)_{0\leq t \leq T}$ and satisfies
\begin{equation}
\label{eq:dX}
\d X_t = \theta_t \d S_t = \sigma^S\pi_t (\lambda^S\d t + \d W_t^S),\quad X_0=x_0.
\end{equation}
The process $(\theta\cdot S)=((\theta\cdot S)_t)_{0\leq t\leq T}$ given by the stochastic integral
\begin{equation*}
(\theta\cdot S)_t:=\int_0^t \theta_u\d S_u = \int_0^t\d X_u = X_t-x_0,
\end{equation*}
represents the profit and loss from trading up to time $t\in[0,T]$. The next definition gives the space of \textit{admissible trading strategies} to make the market model suitable for measure changes.
\begin{definition}[Relative entropy and admissible strategies]
\label{def:relative entropy and admissible strategies}
The set of \textit{equivalent local martingale measures} $\mathcal{M}_e:=\{\mathbb{Q}\sim\mathbb{P}\,\vert\, S \text{ is a local }(\mathbb{Q},\mathbb{F})\text{-martingale}\}$ and its subset $\mathcal{M}_{e,f}:=\{\mathbb{Q}\in\mathcal{M}_e\,\vert\, \mathcal{H}(\mathbb{Q},\mathbb{P})<\infty\}$ of measures with finite \textit{relative entropy}
\begin{equation}
\label{eq:relative entropy}
\mathcal{H}(\mathbb{Q},\mathbb{P}):=\mathbb{E}\left[\frac{\d{\mathbb{Q}}}{\d{\mathbb{P}}}\log\frac{\d{\mathbb{Q}}}{\d{\mathbb{P}}}\right]
\end{equation}
between $\mathbb{Q}$ and $\mathbb{P}$ are assumed to be non-empty. The set of \textit{admissible strategies} is
\begin{equation}
\label{eq:admissible strategies}
\Theta:=\{\theta\in\Theta^{p}\,\vert\, (\theta\cdot S)\text{ is a }(\mathbb{Q},\mathbb{F})\text{-martingale for all }\mathbb{Q}\in\mathcal{M}_{e,f}\},
\end{equation}
where $\Theta^p$ is the superset of $(\mathbb{P},\mathbb{F})$-predictable and $S$-integrable strategies. An \textit{admissible strategy} satisfies $\int_0^t \pi_u^2 \d u<\infty$ almost surely for each $t\in[0,T]$.
\end{definition}
Condition (\ref{eq:admissible strategies}) for admissible strategies is taken from Becherer \cite[pp.~28--29]{B2004} (see also Mania and Schweizer \cite[p.~2116]{MS2005}) and appears as one of the candidate sets ($\Theta_2$) examined in Delbaen et al. \cite[p.~104]{DGRSSS2002}. The latter paper prove that for three different choices of $\Theta$ the resulting primal and dual problem have the same value and thus establish in particular a robustness result for the duality of classical exponential utility-based hedging.\par
The relative entropy was introduced in information theory by Kullback and Leibler \cite{KL1951} and developed by Kullback in his book \cite{K1959}. It is $\mathcal{H}(\mathbb{Q},\mathbb{P})\geq 0$ with equality if and only if $\mathbb{Q}=\mathbb{P}$. The profit and loss process $(\theta\cdot S)$ is identical to $(X_t-x)_{0\leq t\leq T}$ and hence the martingale property in (\ref{eq:admissible strategies}) holds also for the wealth process $X$. With the choice of admissible strategies, arbitrage opportunities for the investor are excluded. More on arbitrage and self-financing strategies can be found in Jeanblanc et al. \cite[pp.~81--84]{JYC2009}. Since the MPRs $\lambda^S, \lambda^Y$ are assumed to be constants, the investor has access to the so-called \textit{background filtration} $\mathbb{F}$ and hence is able to observe the Brownian motion process $(W^S,W^\perp)$, as well as the stock price process $S$. This set-up is referred to as a \textit{full information scenario}.
\begin{remark}[Solution to the stock price SDE]
\label{rem:solution to the stock price SDE}
To solve (\ref{eq:dS}), firstly apply It\^{o}'s lemma on the logarithmic stock prices,
\begin{equation*}
\d{\left(\log S_t\right)}=\frac{\d S_t}{S_t}-\frac12\frac{\d\langle S\rangle_t}{S_t^2}=\sigma^S\left(\left(\lambda^S-\frac{\sigma^S}{2}\right)\d t+\d W_t^S\right),
\end{equation*}
and the integrate over $[0,t]$ to obtain
\begin{equation*}
S_t=S_0\exp\left(\sigma^S\left(\lambda^S-\frac{\sigma^S}{2}\right)t + W_t^S\right).
\end{equation*}
If the MPR or volatility were not constant, then an integral would remain in the expressed solution.
\end{remark}
\subsection{Partial information scenario}
\label{subsec:partial information scenario}
Based on historical data analyses from Ang and Bekaert \cite{AB2007}, Clarke et al. \cite{CdT2006} and French et al. \cite{FSS1987}, one might assume that the parameter values for an annual stock return (drift) and volatility are $\mu^S=8\%$ and $\sigma^S=16\%$ respectively, so that the Sharpe ratio is $\lambda^S=0.5$ per annum. Nevertheless, as outlined in Rogers \cite[pp.~144--145]{R2001}, a proper estimation of $\lambda^S,\lambda^Y$ in the asset price dynamics (\ref{eq:dS}), (\ref{eq:dY}) is practically impossible, due to the lack of long-term historical market data. The subsequent argument for the \textit{MPR parameter uncertainty} is taken from Monoyios \cite[pp.~342--343]{M2007}. The normalised stock returns $\frac{1}{\sigma^S}\frac{\d S_t}{S_t}=\lambda^S\d t+\d W_t^S$ can be observed by the investor over a time interval $[0,t]$, to make the best estimate
\begin{equation*}
\overline{\lambda^S}(t)=\frac{1}{t}\int_0^t\frac{1}{\sigma^S}\frac{\d S_u}{S_u}=\lambda^S + \frac{W_t^S}{t}\sim\mathcal{N}\left(\lambda^S,\frac{1}{t}\right),
\end{equation*}
which leads to a 95\% confidence interval $\left[\overline{\lambda^S}(t)-\frac{1.96}{\sqrt{t}},\overline{\lambda^S}(t)+\frac{1.96}{\sqrt{t}}\right]$ for $\lambda^S$. In order to determine with 95\% confidence the observation time $t$ for the estimated value $\overline{\lambda^S}(t)$ being 5\% to within of its true value $\lambda^S$, meaning $\abs{\overline{\lambda^S}(t)-\lambda^S}\leq 0.05$, the equality $\frac{1.96}{\sqrt{t}}=0.05$ needs to be solved, which gives $t\approx 1537$ years. This calculation shows, how intrinsic the MPR parameter uncertainty in log-normal models is, since reliable historical price data for such a long period is not available, considering that two of the first formal exchanges worldwide, the Frankfurt Stock Exchange and the London Stock Exchange, were established in the late 16th and 17th centuries respectively, according to Holtfrerich \cite[p.~77]{H1999} and Michie \cite[p.~15]{Mi1999}.\par
The asset volatilities $\sigma^S, \sigma^Y$ and the correlation $\rho$ are assumed to be known constants, because they can be inferred from quadratic and co-variations
\begin{equation*}
\d\langle S\rangle_t =(\sigma^S)^2S_t^2 \d t, \quad\d\langle Y\rangle_t =(\sigma^Y)^2Y_t^2 \d t,\quad \d\langle S, Y\rangle_t =\rho\sigma^S\sigma^Y S_tY_t\d t,
\end{equation*}
through the best estimators
\begin{equation*}
\sigma^S=\sqrt{\frac{1}{t}\int_0^t\frac{\d\langle S\rangle_u}{S_u^2}},\quad
\sigma^Y=\sqrt{\frac{1}{t}\int_0^t\frac{\d\langle Y\rangle_u}{Y_u^2}},\quad
\rho=\frac{1}{\sigma^S\sigma^Yt}\int_0^t\frac{\d\langle S, Y\rangle_u}{S_uY_u}.
\end{equation*}
when price observations are taken to be approximately continuous. The problem of estimating quadratic variation using realised variance is discussed in Barndorff-Nielsen and Shephard \cite{BNS2002}. Chakraborti et al. \cite{CKKKO2003} analysed asset correlations on an empirical basis. If the requirement of constant MPRs is omitted, the agent will have no access to the background filtration $\mathbb{F}$, but instead, only to the so-called \textit{observation filtration} $\widehat{\mathbb{F}}:=(\widehat{\mathcal{F}}_t)_{0\leq t\leq T}$, which is generated by the asset price processes $S$ and $Y$. Hence, only the observation of $(S,Y)$ but not the Brownian motion process $(W^S,W^\perp)$ is possible. The values of the parameters $\lambda^S,\lambda^Y$ become uncertain, so they can be modelled as random variables. This set-up is referred to as a \textit{partial information scenario}. An agent with full information (partial information) is called \textit{outsider (insider)} (see Henderson, Klad\'{i}vko and Monoyios~\cite{HKM2018}).
\subsection{Kalman-Bucy filtering}
\label{subsec:Kalman-Bucy filtering}
General filtering theory deals with the estimation of an unobservable stochastic process given a related observable process. Treatments of filtering theory can be found in Kallianpur \cite[Chapter 10]{K1980}, Rogers and Williams \cite[pp.~322--331]{RW2001} and Fujisaki et al. \cite{FKK1972}. Wiener \cite{W1964} and Kolmogorov \cite{I1989} paved the way for filtering problems in the frequency domain in signal processing theory during the 1940s. In the 1960s linear filtering theory was developed further by Kalman~\cite{K1960} and Kalman and Bucy~\cite{KB1961}, where filtering problems were considered in the time rather than frequency domain with state space representations.\par
The partial information scenario can be converted into a full information scenario by changing from the background to the observation filtration using the so-called \textit{Bayesian approach} in a \textit{Kalman-Bucy filtering framework}. Following Monoyios \cite{M2010}, the asset MPRs are modelled as $\mathcal{F}_0$-random variables with a given initial distribution conditional on $\widehat{\mathcal{F}}_0$.
\begin{definition}[Observation and signal process]
\label{def:observation and signal process}
Define the two-dimensional\linebreak \textit{observation process} $\Xi:=(\Xi_t)_{0\leq t\leq T}$ by
\begin{equation}
\label{eq:observation}
\Xi_t:=\begin{pmatrix}
          \xi_t^S\\
          \xi_t^Y
       \end{pmatrix}:=\begin{pmatrix}
                         \frac{1}{\sigma^S}\int_0^t \frac{\d S_u}{S_u}\\
                         \frac{1}{\sigma^Y}\int_0^t \frac{\d Y_u}{Y_u}
                      \end{pmatrix}=\begin{pmatrix}
                                       \lambda^S t + W_t^S\\
                                       \lambda^Y t + W_t^Y
                                    \end{pmatrix},
\end{equation}
given the dynamics (\ref{eq:dS}) and (\ref{eq:dY}), generating the \textit{observation filtration} $\widehat{\mathbb{F}}:=(\widehat{\mathcal{F}}_t)_{0\leq t\leq T}$, $\widehat{\mathcal{F}}_t:=(\xi_u^S, \xi_u^Y\;\vert\; 0\leq u \leq t)$. The corresponding unobservable \textit{signal process} is given by
\begin{equation}
\label{eq:signal}
\Lambda:=\begin{pmatrix}
          \lambda^S\\
          \lambda^Y
       \end{pmatrix},
\end{equation}
which is an unknown two-dimensional constant in this market model. Moreover, assume a Gaussian \textit{prior distribution}
\begin{equation}
\label{eq:prior}
\Lambda\,\vert\,\widehat{\mathcal{F}}_0 \sim\mathcal{N}(\Lambda_0,\Sigma_0),\; \Lambda_0:=\begin{pmatrix}
              \lambda_0^S\\
              \lambda_0^Y
           \end{pmatrix},\; \Sigma_0:=\begin{pmatrix}
                                         z_0^S & c_0\\
                                         c_0   & z_0^Y
                                      \end{pmatrix},\; c_0:=\rho\min\{z_0^S,z_0^Y\},
\end{equation}
for given constant parameters $\lambda_0^S,\lambda_0^Y, z_0^S,z_0^Y$.
\end{definition}
From Remark \ref{rem:solution to the stock price SDE}, the solutions of the asset prices to the SDEs (\ref{eq:dS}), (\ref{eq:dY}) are
\begin{equation}
\label{eq:asset price solutions full}
\begin{aligned}
S_t&=S_0\exp\left(\sigma^S((\lambda^S-\tfrac12\sigma^S)t+W_t^S)\right),\\
Y_t&=Y_0\exp\left(\sigma^Y((\lambda^Y-\tfrac12\sigma^Y)t+W_t^Y)\right),
\end{aligned}
\end{equation}
from which the observation process may be expressed as deterministic functions of the asset prices and time,
\begin{equation*}
\xi_t^S=\xi^S(t,S_t)=\frac{1}{\sigma^S}\log\left(\frac{S_t}{S_0}\right)+\frac12\sigma^St,\quad \xi_t^Y=\xi^Y(t,Y_t)=\frac{1}{\sigma^Y}\log\left(\frac{Y_t}{Y_0}\right)+\frac12\sigma^Yt.
\end{equation*}
For any process $\zeta$ expressed by a function of time and current asset prices, the abbreviation $\zeta_t:=\zeta(t,S_t,Y_t)$ may be used. The SDEs of the observation and signal process (\ref{eq:observation}), (\ref{eq:signal}) are
\begin{equation*}
\d\Xi_t = \Lambda\d t+\begin{pmatrix}
                         1    & 0\\
                         \rho & \sqrt{1-\rho^2}
                      \end{pmatrix}\begin{pmatrix}
                                      W_t^S\\
                                      W_t^\perp
                                   \end{pmatrix}, \quad \d\Lambda=\begin{pmatrix}
                                                 0 \\
                                                 0
                                               \end{pmatrix}.
\end{equation*}
According to (\ref{eq:prior}), an unbiased estimator of $\Lambda$ is Gaussian with initial estimations for $\lambda_0^S,\lambda_0^Y, z_0^S, z_0^Y$.\par
The idea behind the Kalman-Bucy filter is to choose a prior distribution with specific parameter values for the MPR process $\Lambda$ and continuously update it over time. The prior distribution initialises the probability law of $\Lambda$ conditional on $\widehat{\mathcal{F}}_0$, and through filtering done in the next definition, this is updated with the evolution of the asset prices under the observation filtration $\widehat{\mathbb{F}}$. An in-depth discussion of this filtering procedure is made in Section~\ref{sec:conclusions}. We describe in Remark~\ref{rem:OU process} a partial information model apart from the Kalman-Bucy filter.
\begin{definition}[Kalman-Bucy filter]
\label{def:Kalman-Bucy filter}
The optimal filter process $\widehat{\Lambda}:=(\widehat{\Lambda}_t)_{0\leq t \leq T}$ defined by $\widehat{\Lambda}_t:=\mathbb{E}[\Lambda\,\vert\,\widehat{\mathcal{F}}_t]$, is the two-dimensional MPR process under conditional expectation, $\widehat{\lambda}_t^i:=\mathbb{E}[\lambda^i\,\vert\,\widehat{\mathcal{F}}_t]$ for $0\leq t\leq T$, $i\in\{S,Y\}$. The conditional covariance matrix process $\Sigma=(\Sigma_t)_{0\leq t\leq T}$ is given by
\begin{equation}
\label{eq:covariance}
\Sigma=\begin{pmatrix}
           z_t^S & c_t\\
           c_t   & z_t^Y
        \end{pmatrix},\quad
     z_t^i:=\mathbb{E}[(\lambda^i-\widehat{\lambda}_t^i)^2\,\vert\,\widehat{\mathcal{F}}_t],\quad
c_t:=\mathbb{E}[(\lambda^S-\widehat{\lambda}_t^S)(\lambda^Y-\widehat{\lambda}_t^Y)\,\vert\,\widehat{\mathcal{F}}_t],
\end{equation}
for $0\leq t\leq T$, $i\in\{S,Y\}$.
\end{definition}
The Kalman-Bucy filter transforms the partial into a full information scenario by replacing the constant parameters $\lambda^S,\lambda^Y$ by stochastic processes $\widehat{\lambda}^S,\widehat{\lambda}^Y$ and changing the filtration of the probability space from $\mathbb{F}$ to the observation filtration $\widehat{\mathbb{F}}$. The upcoming result of the partial information model under $\widehat{\mathbb{F}}$ come from Monoyios \cite{M2010}.
\begin{proposition}[Model under partial information]
\label{thm:model under partial information}
The Kalman-Bucy filter from Definition~\ref{def:Kalman-Bucy filter} converts the model from the partial to the full information scenario with asset price SDEs
\begin{equation}
\label{eq:dS, dY partial}
\d S_t = \sigma^S S_t (\widehat{\lambda}_t^S\d t + \d{\widehat{W}}_t^S),\quad \d Y_t = \sigma^Y Y_t (\widehat{\lambda}_t^Y\d t + \d{\widehat{W}}_t^Y), 
\end{equation}
on the filtered probability space $(\Omega, \widehat{\mathcal{F}}, \widehat{\mathbb{F}}, \mathbb{P})$, where $\widehat{W}^S,\widehat{W}^Y$ are $(\mathbb{P},\widehat{\mathbb{F}})$-Brownian motions with correlation $\rho$ according to $\widehat{W}_t^Y=\rho\widehat{W}_t^S+\sqrt{1-\rho^2}\widehat{W}_t^\perp$ and $\widehat{\lambda}^S, \widehat{\lambda}^Y$ are $\widehat{\mathbb{F}}$-adapted processes. For $\abs{\rho}\neq 1$, $z_0^i\leq z_0^j$ with $i,j\in\{S,Y\}$, the drift processes are
\begin{equation*}
\widehat{\lambda}_t^i=\frac{\lambda_0^i+z_0^i\xi_t^i}{1+z_0^i t},\quad \widehat{\lambda}_t^j=\frac{\lambda_0^j+w_0\xi_t^j}{1+w_0 t}-\rho\left(\frac{\lambda_0^i+w_0\xi_t^i}{1+w_0 t}-\widehat{\lambda}_t^i\right),\quad 0\leq t\leq T
\end{equation*}
where $w_0:=\frac{z_0^j-\rho^2 z_0^i}{1-\rho^2}$ for $z_0^i<z_0^j$ and $w_0:=z_0^i$ for $z_0^S=z_0^Y$. They satisfy the SDEs
\begin{equation}
\label{eq:dlambda}
\begin{aligned}
\d{\widehat{\lambda}}_t^i&=z_t^i\d{\widehat{W}}_t^i=z_t^i(\d\xi_t^i-\widehat{\lambda}_t^i\d t),\quad \widehat{\lambda}_0^i=\lambda_0^i,\\
\d{\widehat{\lambda}}_t^j-\rho\d{\widehat{\lambda}}_t^i&=w_t(\d{\widehat{W}}_t^j-\rho\d{\widehat{W}}_t^i)=w_t(\d{(\xi_t^j-\rho\xi_t^j)}-(\widehat{\lambda}_t^j-\rho\widehat{\lambda}_t^i)\d t),\quad \widehat{\lambda}_0^j=\lambda_0^j,
\end{aligned}
\end{equation}
with the entries of the covariance matrix in (\ref{eq:covariance}), given by
\begin{equation*}
z_t^i=\frac{z_0^i}{1+z_0^i t},\quad z_t^j=\rho^2 z_t^i+(1-\rho^2)w_t,\quad w_t:=\frac{w_0}{1+w_0 t},\quad c_t=\rho z_t^i,\quad 0\leq t\leq T.
\end{equation*}
\end{proposition}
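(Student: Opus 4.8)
The plan is to recognise $(\Xi,\Lambda)$ as a linear Gaussian system and apply the Kalman--Bucy filter, using the correlation structure to split the two-dimensional problem into two scalar ones. Writing \eqref{eq:observation} and \eqref{eq:signal} as $\d\Xi_t=\Lambda\,\d t+B\,\d W_t$ with $W_t=(W_t^S,W_t^\perp)^\top$, $B=\left(\begin{smallmatrix}1&0\\\rho&\sqrt{1-\rho^2}\end{smallmatrix}\right)$ and $\d\Lambda=0$, and recalling $\widehat{\Lambda}_t=\mathbb{E}[\Lambda\mid\widehat{\mathcal{F}}_t]$ from Definition~\ref{def:Kalman-Bucy filter}, I would first introduce the innovations processes $\d{\widehat{W}}_t^i:=\d\xi_t^i-\widehat{\lambda}_t^i\,\d t$, $i\in\{S,Y\}$. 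By the innovations theorem each is a continuous $(\mathbb{P},\widehat{\mathbb{F}})$-martingale whose bracket inherits that of the observation noise, namely $\d\langle\widehat{W}^S\rangle_t=\d\langle\widehat{W}^Y\rangle_t=\d t$ and $\d\langle\widehat{W}^S,\widehat{W}^Y\rangle_t=\rho\,\d t$; L\'{e}vy's characterisation then identifies $\widehat{W}^S,\widehat{W}^Y$ as $(\mathbb{P},\widehat{\mathbb{F}})$-Brownian motions correlated by $\rho$, with $\widehat{W}_t^Y=\rho\widehat{W}_t^S+\sqrt{1-\rho^2}\,\widehat{W}_t^\perp$. Substituting $\d\xi_t^i=\widehat{\lambda}_t^i\,\d t+\d{\widehat{W}}_t^i$ into the logarithmic dynamics behind \eqref{eq:asset price solutions full} returns the filtered SDEs \eqref{eq:dS, dY partial}.

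The crux is to avoid integrating the coupled nonlinear matrix Riccati equation $\dot{\Sigma}_t=-\Sigma_t(BB^\top)^{-1}\Sigma_t$ that the standard filter produces for the conditional covariance. Relabelling so that $z_0^i\le z_0^j$, I would consider, alongside $\lambda^i$, the combination $\widetilde{\lambda}:=\lambda^j-\rho\lambda^i$, observed through $\widetilde{\xi}:=\xi^j-\rho\xi^i$. Since $\d{\widetilde{\xi}}_t=\widetilde{\lambda}\,\d t+\d(W_t^j-\rho W_t^i)$ and $\d\langle W^j-\rho W^i,W^i\rangle_t=(\rho-\rho)\,\d t=0$, the signal $\widetilde{\lambda}$ is observed through a noise independent of the noise $W^i$ driving $\xi^i$ (equal to $\sqrt{1-\rho^2}\,W^\perp$ when $i=S$), and this residual noise has variance $1-\rho^2$ per unit time. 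Consequently the conditional law of $(\lambda^i,\widetilde{\lambda})$ factorises and the filter decouples into two independent scalar Kalman--Bucy filters: one for $\lambda^i$ with unit observation-noise variance and one for $\widetilde{\lambda}$ with observation-noise variance $1-\rho^2$.

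For each scalar filter I would solve the associated scalar Riccati equation and linear filtering equation. For $\lambda^i$ the Riccati reads $\dot{z}_t^i=-(z_t^i)^2$, giving $z_t^i=z_0^i/(1+z_0^i t)$, while $\d{\widehat{\lambda}}_t^i=z_t^i(\d\xi_t^i-\widehat{\lambda}_t^i\,\d t)$ is linear in $\widehat{\lambda}^i$ and integrates to $\widehat{\lambda}_t^i=(\lambda_0^i+z_0^i\xi_t^i)/(1+z_0^i t)$, matching \eqref{eq:dlambda}. For $\widetilde{\lambda}$ the noise variance $1-\rho^2$ rescales the Kalman gain, so it is the quantity $w_t:=p_t/(1-\rho^2)$, with $p_t$ the conditional variance of $\widetilde{\lambda}$, that obeys the unit-form Riccati $\dot{w}_t=-w_t^2$; using the prior variance $p_0=z_0^j+\rho^2 z_0^i-2\rho c_0=z_0^j-\rho^2 z_0^i$, where $c_0=\rho z_0^i$ from \eqref{eq:prior} under $z_0^i\le z_0^j$, I obtain $w_0=(z_0^j-\rho^2 z_0^i)/(1-\rho^2)$ and $w_t=w_0/(1+w_0 t)$, together with $\d{(\widehat{\lambda}_t^j-\rho\widehat{\lambda}_t^i)}=w_t(\d{\widetilde{\xi}}_t-(\widehat{\lambda}_t^j-\rho\widehat{\lambda}_t^i)\,\d t)$. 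Integrating this linear equation and adding $\rho\widehat{\lambda}^i$ produces the stated closed form for $\widehat{\lambda}^j$.

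Finally I would push the (now diagonal) conditional covariance from the coordinates $(\lambda^i,\widetilde{\lambda})$ back to $(\lambda^i,\lambda^j)$ through $\lambda^j=\widetilde{\lambda}+\rho\lambda^i$. As the two components are conditionally uncorrelated, $z_t^j=\mathrm{Var}(\widetilde{\lambda}\mid\widehat{\mathcal{F}}_t)+\rho^2 z_t^i=(1-\rho^2)w_t+\rho^2 z_t^i$ and $c_t=\mathrm{Cov}(\lambda^i,\lambda^j\mid\widehat{\mathcal{F}}_t)=\rho z_t^i$, which completes \eqref{eq:covariance}. The main obstacle is precisely the nonlinearity and coupling of the matrix Riccati equation caused by the correlated observation noise; the decisive step that removes it is the change of variables $\lambda^j\mapsto\lambda^j-\rho\lambda^i$, available because $W^Y-\rho W^S=\sqrt{1-\rho^2}\,W^\perp$ is independent of $W^S$. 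I would also treat the degenerate prior $z_0^S=z_0^Y$ (where $w_0=z_0^i$) separately and use $\abs{\rho}\neq1$ to ensure $BB^\top$ is invertible, so that the filter is well posed.
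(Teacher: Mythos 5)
Your derivation is correct. The paper itself gives no proof of this proposition --- it defers entirely to Monoyios \cite{M2010}, Proposition 1 --- and your argument reconstructs precisely the strategy of that reference: pass to the innovations processes, then decouple the two-dimensional filter via the change of coordinates $\lambda^j\mapsto\lambda^j-\rho\lambda^i$, which works because $W^j-\rho W^i$ has zero cross-variation with $W^i$ and because the prior choice $c_0=\rho\min\{z_0^S,z_0^Y\}=\rho z_0^i$ makes the prior covariance of $\lambda^i$ and $\lambda^j-\rho\lambda^i$ vanish. The one point to state explicitly is that last one: independence of the observation noises alone does not factorise the conditional law, you also need the prior independence of $\lambda^i$ and $\widetilde{\lambda}$, which is exactly what the asymmetric choice of $c_0$ in \eqref{eq:prior} is engineered to deliver; and note that your general formula $w_0=(z_0^j-\rho^2 z_0^i)/(1-\rho^2)$ already reduces to $z_0^i$ when $z_0^S=z_0^Y$, so that case needs no separate treatment.
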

\begin{proof}
A proof can be found in Monoyios \cite[Proposition 1]{M2010}.
\end{proof}
Thus, under partial information, the investor's portfolio wealth dynamics from (\ref{eq:dX}) is transformed into
\begin{equation}
\label{eq:dX partial}
\d X_t=\sigma^S \pi_t \left(\widehat{\lambda}_t^S\d t + \d{\widehat{W}}_t^S\right).
\end{equation}
If the prior variances $z_0^S, z_0^Y$ are identical, then $z_t:=z_t^S=z_t^Y=w_t$ and hence $\d{\widehat{\lambda}}_t^Y=z_t\d{\widehat{W}}_t^Y$ holds for all $t\in[0,T]$. This case of filtering is similar to the two one-dimensional Kalman-Bucy filters on each asset as developed in \cite{M2007}. According to Proposition~\ref{thm:model under partial information}, the MPR of the asset prices have the dependencies
\begin{alignat}{3}
\notag
\widehat{\lambda}_t^S&=\widehat{\lambda}^S(t,S_t),\quad &&\widehat{\lambda}_t^Y=\widehat{\lambda}^Y(t,S_t,Y_t),\quad &&\text{if }z_0^S<z_0^Y,\\
\notag
\widehat{\lambda}_t^S&=\widehat{\lambda}^S(t,S_t),\quad &&\widehat{\lambda}_t^Y=\widehat{\lambda}^Y(t,Y_t),\quad &&\text{if }z_0^S=z_0^Y,\\
\label{eq:MPR dependencies}
\widehat{\lambda}_t^S&=\widehat{\lambda}^S(t,S_t,Y_t),\quad &&\widehat{\lambda}_t^Y=\widehat{\lambda}^Y(t,Y_t),\quad &&\text{if }z_0^S>z_0^Y,
\end{alignat}
solving the SDEs
\begin{alignat*}{3}
\d{\widehat{\lambda}}_t^S&=z_t^S\d{\widehat{W}}_t^S,\quad &&\d{\widehat{\lambda}}_t^Y-\rho\d{\widehat{\lambda}}_t^S=w_t(\d{\widehat{W}}_t^Y-\rho\d{\widehat{W}}_t^S),\quad &&\text{if }z_0^S<z_0^Y,\\
\d{\widehat{\lambda}}_t^S&=w_t\d{\widehat{W}}_t^S,\quad &&\d{\widehat{\lambda}}_t^Y=w_t\d{\widehat{W}}_t^Y,\quad &&\text{if }z_0^S=z_0^Y,\\
\d{\widehat{\lambda}}_t^Y&=z_t^Y\d{\widehat{W}}_t^Y,\quad &&\d{\widehat{\lambda}}_t^S-\rho\d{\widehat{\lambda}}_t^Y=w_t(\d{\widehat{W}}_t^S-\rho\d{\widehat{W}}_t^Y),\quad &&\text{if }z_0^S>z_0^Y.
\end{alignat*}
In the remainder of this thesis, except where otherwise stated, we are working with the partial information model of Proposition \ref{thm:model under partial information}.\newpage
\begin{remark}[Ornstein-Uhlenbeck model for MPRs]
\label{rem:OU process}
A more complicated method of modelling the MPRs would implicate an unknown stochastic process for each unknown MPR. For instance, the MPR dynamics could be expressed as processes of \textit{Ornstein-Uhlenbeck type},
\begin{equation}
\label{eq:Ornstein-Uhlenbeck}
\d\lambda_t^i = \eta^i (\nu^i-\lambda_t^i)\d t + \delta^i\d B_t^i,\quad i=S,Y,
\end{equation}
with Brownian motions $B^S,B^Y$ and constant \textit{mean reversion rates} $\eta^S, \eta^Y$, \textit{mean reversion levels} $\nu^S,\nu^Y$ and volatilities $\delta^S,\delta^Y$. The mean reversion level represents the equilibrium or long-term mean of the MPR variable and the mean reversion rate the velocity by which the MPR variable reverts to its equilibrium. The volatility defines the impact of stochastic shocks on the MPR change. The resulting issue would contain the estimations of these unknown parameters. Brendle \cite{Br2006} modelled the drifts by a multidimensional Ornstein-Uhlenbeck model in the context of a power-utility based optimal portfolio problem under partial information. However, it is not clear how the model parameters from above can be estimated using real market data, because they are assumed as known constants. This model is not pursued here to seek maximally explicit formulas for the valuation and optimal hedge. More applications of stochastic differential equations of Ornstein-Uhlenbeck type in financial economics were treated by, for instance, Barndorff-Nielsen and Shephard \cite{BNS2001}.
\end{remark}
\subsection{Forward utility and local risk tolerance}
\label{subsec:forward utility and local risk tolerance}
In the \textit{classical utility framework}, the expected utility criteria is typically formulated through a deterministic, concave and increasing function of terminal wealth, where both the investment time horizon $T$ and the associated risk preferences are chosen a priori. The \textit{value function} as the optimal solution in the relevant market model has the fundamental property of supermartingality for arbitrary investment strategies and martingality at an optimum, which is a consequence of the \textit{dynamic programming principle} (see, for example, Merton \cite[p.~249]{Me1969}). Since the classical utility $U(x)$ is fixed at a time $T$ and its \textit{value function} $v(t,x)$ generated at previous times $t\in[0,T]$ with the wealth argument $x$, it is also called the \textit{backward utility} by Musiela and Zariphopoulou \cite[pp.~304,~315]{MZ2007}. As depicted therein, backward utilities does not accurately capture future changes in the risk preference as the market environment evolves. Therefore, they introduce a new class of dynamic utilities that are constructed forward in time, which offers flexibility with regards to the a priori choices mentioned above while the natural optimality properties of the value function process is preserved. Contrary to the classical utility framework, the \textit{forward utility} $U_t(x)$ is normalised at present time $t$ but not for a fixed investment horizon $T$, and generated for all future times via a self-financing criterion. The forward measurement criterion is defined by Musiela and Zariphopoulou \cite{MZ2009} in terms of a family of stochastic processes on $[0,\infty)$ indexed by a wealth argument.\par
In this section, only proofs are outlined in cases that they are instructive, otherwise the reference to the original source is given.
\begin{definition}[Forward performance process]
\label{def:forward performance process}
Let $\Theta_t\subseteq\Theta$ be the subset of strategies starting at $t$. An $\widehat{\mathcal{F}}_t$-adapted stochastic process $U:=(U_t(x))_{t\geq 0}$, where
\begin{itemize}
\item[(i)] for each $t\geq 0$ the function $U_t\colon x\mapsto U_t(x)$ is concave and increasing in $x\in\mathbb{R}$,
\item[(ii)] for each $t\geq 0$ and each self-financing strategy $\theta\in\Theta_t,\;\mathbb{E}[U_t(X_t)]^+<\infty$,
\begin{equation*}
\quad\mathbb{E}[U_s(X_s)\,\vert\,\widehat{\mathcal{F}}_t]\leq U_t(X_t),\quad s\geq t,
\end{equation*}
\item[(iii)] there exists a self-financing (optimal) strategy $\theta^*\in\Theta_t$, for which
\begin{equation*}
\mathbb{E}[U_s(X_s^*)\,\vert\,\widehat{\mathcal{F}}_t]=U_t(X_t^*),\quad s\geq t,
\end{equation*}
\item[(iv)] it satisfies the initial datum $U_0(x)=u_0(x)$ at $t=0$ for all $x\in\mathbb{R}$, with a concave and increasing function $u_0\colon\mathbb{R}\to\mathbb{R}$ of wealth,
\end{itemize}
is called a \textit{forward performance process}.
\end{definition}
The function $U_t$ in (i) is the \textit{(forward) performance function}. The conditions (ii) and (iii) represent the supermartingality and martingality properties, respectively.\par
Among others, forward formulations of optimal control problems were proposed and studied in the past by Seinfeld and Lapidus \cite{SL1968} and Vit \cite{V1977} for the deterministic case and Kurtz \cite{Ku1984} for the stochastic case. As in \cite{MZ2009}, we will consider a special class of time-decreasing and time-monotone forward performance processes expressed by a deterministic function $u(x,t)$ of wealth and time, where the time argument is replaced by an increasing process $A:=(A_t)_{t\geq 0}$ depending on the market coefficients and not the investor's preferences. On the contrary, the function $u$ is independent of market changes and only depends on the initial datum $u_0$ satisfying a market independent differential constraint for $t\geq 0$.
\begin{definition}[Mean-variance trade-off process]
\label{def:mean-variance trade-off}
For the stock's market price of risk process $\widehat{\lambda}^S$, define by
\begin{equation}
\label{eq:mean-variance trade-off}
A_t:=\int_0^t\left(\widehat{\lambda}_u^S\right)^2\d u
\end{equation}
the \textit{(mean-variance) trade-off process} $A:=(A_t)_{t\geq 0}$.
\end{definition}
The naming originates from Pham et al. \cite[pp.~173--174]{PRS1998} in the context of mean-variance hedging of continuous processes. By Definition~\ref{def:observation and signal process} and Definition~\ref{def:Kalman-Bucy filter}, the process $\lambda^S(t,S_t,Y_t)$ and likewise the trade-off process $A(t,S_t,Y_t)$ are, in general, dependent of the asset prices $S,Y$. In the full information scenario, when $\widehat{\mathbb{F}}=\mathbb{F}$, the trade-off process simplifies to $A_t=\left(\lambda^S\right)^2t$ as the MPR $\widehat{\lambda}_t^S=\lambda^S$ becomes constant.
\begin{proposition}[Forward performance process and general optimal strategy]
\label{thm:forward performance process and general optimal strategy}
Let $u\colon\mathbb{R}\times[0,\infty)\to\mathbb{R}$ be a concave and increasing function of the wealth argument with $u\in\mathcal{C}^{3,1}$, satisfying the non-linear partial differential equation
\begin{equation}
\label{eq:u PDE quadratic form}
u_tu_{xx}=\frac12u_{x}^2
\end{equation}
and the initial condition $u(x,0)=u_0(x)$, where $u_0\in\mathcal{C}^3(\mathbb{R})$. Then, the time-decreasing process $U:=(U_t(x))_{t\geq 0}$ defined by
\begin{equation}
\label{eq:forward performance process}
U_t(x):=u(x,A_t),
\end{equation}
is a forward performance process with the time argument replaced by the mean-variance process (\ref{eq:mean-variance trade-off}) of Definition~\ref{def:mean-variance trade-off}. Moreover, the optimal trading strategy is given by
\begin{equation}
\label{eq:general optimal strategy forward utility}
\pi_t^*=-\frac{\widehat{\lambda}_t^S}{\sigma^S}\frac{u_x(X_t^*,A_t)}{u_{xx}(X_t^*,A_t)},
\end{equation}
where $X^*$ is the associated wealth process following (\ref{eq:dX}) with $\pi_t^*=\theta_t^*S_t$.
\end{proposition}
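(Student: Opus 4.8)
The plan is to verify the three defining properties of a forward performance process (Definition~\ref{def:forward performance process}) for the candidate $U_t(x)=u(x,A_t)$, exploiting the PDE~(\ref{eq:u PDE quadratic form}) together with the dynamics of $A$ and of the wealth process $X$. First I would record that, since $u(\cdot,\cdot)$ is concave and increasing in its wealth argument for each fixed value of the second argument, and $A_t$ is a well-defined $\widehat{\mathbb{F}}$-adapted (indeed continuous and increasing) process by Definition~\ref{def:mean-variance trade-off}, the composition $U_t(x)=u(x,A_t)$ is an $\widehat{\mathcal{F}}_t$-adapted process satisfying property~(i) and the initial datum~(iv), the latter because $A_0=0$ gives $U_0(x)=u(x,0)=u_0(x)$. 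The substance of the argument lies in properties~(ii) and~(iii).

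The central computation is an It\^o expansion of $U_t(X_t)=u(X_t,A_t)$ along an arbitrary admissible wealth process $X$ following~(\ref{eq:dX partial}), namely $\d X_t=\sigma^S\pi_t(\widehat{\lambda}_t^S\d t+\d\widehat{W}_t^S)$. Writing $\d A_t=(\widehat{\lambda}_t^S)^2\,\d t$ (which has no martingale part and no quadratic variation), It\^o's lemma yields
\begin{equation*}
\d u(X_t,A_t)=\left(u_t(\widehat{\lambda}_t^S)^2+u_x\sigma^S\pi_t\widehat{\lambda}_t^S+\tfrac12 u_{xx}(\sigma^S\pi_t)^2\right)\d t+u_x\sigma^S\pi_t\,\d\widehat{W}_t^S,
\end{equation*}
where all partials of $u$ are evaluated at $(X_t,A_t)$. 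The key algebraic step is to recognise the drift as a perfect square in $\pi_t$. Completing the square in $\sigma^S\pi_t$ and substituting the PDE~(\ref{eq:u PDE quadratic form}) in the form $u_t=\tfrac12 u_x^2/u_{xx}$, the constant (in $\pi_t$) term $u_t(\widehat{\lambda}_t^S)^2$ is exactly cancelled, leaving the drift equal to $\tfrac12 u_{xx}\bigl(\sigma^S\pi_t+\widehat{\lambda}_t^S u_x/u_{xx}\bigr)^2$. Since $u$ is concave, $u_{xx}\le 0$, so this drift is nonpositive for every strategy $\pi$, which gives the supermartingale property~(ii) (after justifying that the stochastic integral is a genuine local martingale and that the admissibility and integrability conditions upgrade the local-supermartingale statement to the required conditional inequality). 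The drift vanishes precisely when $\sigma^S\pi_t=-\widehat{\lambda}_t^S u_x/u_{xx}$, i.e. when $\pi_t$ equals the claimed optimiser~(\ref{eq:general optimal strategy forward utility}); choosing $\pi=\pi^*$ makes $U_t(X_t^*)$ a local martingale, and under the stated integrability this delivers the martingale property~(iii), identifying $\pi^*$ as the optimal strategy.

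I would organise the write-up as: (a) the It\^o expansion; (b) the completion-of-the-square identity and the substitution of the PDE, yielding the single-square drift; (c) the sign argument from concavity for~(ii); and (d) the zero-drift characterisation for~(iii) and the optimal strategy. The main obstacle is not the formal calculation but the \emph{integrability/localisation} bookkeeping: passing from ``local martingale with nonpositive drift'' to the conditional-expectation inequalities in Definition~\ref{def:forward performance process} requires controlling the stochastic integral $\int u_x(X_u,A_u)\sigma^S\pi_u\,\d\widehat{W}_u^S$, and this is exactly where the admissibility class $\Theta$ from Definition~\ref{def:relative entropy and admissible strategies} and the smoothness hypothesis $u\in\mathcal{C}^{3,1}$ are needed. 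I would handle this by a standard localising-sequence argument, applying Fatou's lemma (using $u$ increasing and the finiteness of $\mathbb{E}[U_t(X_t)]^+$) to pass to the limit for the supermartingale inequality, and invoking the admissibility of $\pi^*$ together with the explicit Gaussian structure of $\widehat{\lambda}^S$ from Proposition~\ref{thm:model under partial information} to confirm that the optimal stochastic integral is a true martingale, so that equality is retained in~(iii).
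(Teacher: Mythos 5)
Your proposal is correct and is essentially the standard verification argument of Musiela and Zariphopoulou \cite[Proposition 3]{MZ2009}, which is the proof the paper defers to by citation rather than reproducing. The It\^o expansion of $u(X_t,A_t)$, the completion of the square in $\sigma^S\pi_t$ combined with the PDE $u_tu_{xx}=\frac12u_x^2$ to reduce the drift to the single nonpositive term $\frac12 u_{xx}\bigl(\sigma^S\pi_t+\widehat{\lambda}_t^S u_x/u_{xx}\bigr)^2$, and the zero-drift characterisation of the optimiser are exactly the key steps there, and you correctly identify the localisation and integrability bookkeeping as the only remaining technical work.
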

\begin{proof}
We refer to Musiela and Zariphopoulou \cite[Proposition 3]{MZ2009}.
\end{proof}
The monotonicity of $u(x,A_t)$ follows from the related assumptions on $u$ and the time-monotonicity is obtained from the definition of $A$ and from the fact that the time-derivative of $u$ is negative, $u_t<0$. The process $U$ will be simply also referred to as \textit{forward utility} or \textit{dynamic utility}. The mean-variance trade-off process $A$ behaves as a stochastic time change of the deterministic utility function $u(x,t)$. Optimal portfolio choice problems under space-time monotonicity was studied in detail by Musiela and Zariphopoulou \cite{MZ2010}. As the representation (\ref{eq:general optimal strategy forward utility}) shows, the optimal strategy does not directly depend on $u$ but on the differential quantity $-\frac{u_x}{u_{xx}}$, which was separately analysed by Zariphopoulou and Zhou \cite{ZZ2007}.
\begin{definition}[Local risk tolerance]
\label{def:local risk tolerance}
The \textit{local risk tolerance function} is
\begin{equation}
\label{eq:local risk tolerance}
r\colon\mathbbm{R}\times[0,\infty)\longrightarrow [0,\infty),\quad (x,t)\longmapsto -\frac{u_x(x,t)}{u_{xx}(x,t)},
\end{equation}
with initial function $r(x,0)=r_0(x)=-\frac{u_x(x,0)}{u_{xx}(x,0)}=-\frac{u_0'(x)}{u_0''(x)}$ and $u$ satisfying (\ref{eq:u PDE quadratic form}). The \textit{local risk tolerance process} $R:=(R_t)_{t\geq 0}$ is defined by $R_t:=r(X_t,A_t)$.
\end{definition}
Using Definition~\ref{def:local risk tolerance} in (\ref{eq:general optimal strategy forward utility}), the dynamics (\ref{eq:dX partial}) of the optimal wealth $X^*$ can be expressed as
\begin{equation}
\label{eq:dX optimum risk tolerance}
\d X_t^*=R_t^*\widehat{\lambda}_t^S\left(\widehat{\lambda}_t^S\d t +\d{\widehat{W}}_t^S\right)
\end{equation}
with $R_t^*=r(X_t^*,A_t)$ as the \textit{local risk tolerance process benchmarked at optimal wealth}. This brings up the question whether $u$ can be indirectly derived from $r$.
\begin{corollary}[Transport equation]
\label{thm:transport equation}
The utility function $u$ satisfies the \textit{transport equation}
\begin{equation}
\label{eq:transport equation}
u_t+\frac12r(x,t)u_x=0.
\end{equation}
With the knowledge of $r$ the first-order partial differential equation (\ref{eq:transport equation}) can be solved to yield $u$.
\end{corollary}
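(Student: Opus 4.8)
The plan is to read off the transport equation directly from the quadratic PDE (\ref{eq:u PDE quadratic form}) satisfied by $u$ together with the definition (\ref{eq:local risk tolerance}) of the local risk tolerance, and then to handle the solvability claim by the method of characteristics. First I would note that, since $u$ is concave and increasing in the wealth argument, $u_{xx}<0$ and in particular $u_{xx}\neq 0$, so dividing (\ref{eq:u PDE quadratic form}) by $u_{xx}$ is legitimate and yields
\begin{equation*}
u_t=\frac12\frac{u_x^2}{u_{xx}}.
\end{equation*}
Substituting $r(x,t)=-u_x/u_{xx}$ rewrites the right-hand side as $-\tfrac12 r(x,t)u_x$, so that $u_t=-\tfrac12 r(x,t)u_x$, and rearranging gives exactly (\ref{eq:transport equation}). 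As a consistency check, this recovers $u_t<0$ from $u_x>0$ and $u_{xx}<0$, in agreement with the time-monotonicity noted after Proposition~\ref{thm:forward performance process and general optimal strategy}. This first step is pure algebra and poses no real difficulty; the only point needing care is the non-degeneracy $u_{xx}\neq 0$, which is guaranteed by strict concavity.

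For the second assertion, that knowledge of $r$ determines $u$, I would treat $r(x,t)$ as a prescribed coefficient and regard (\ref{eq:transport equation}) as a linear first-order PDE for $u$ with initial datum $u(x,0)=u_0(x)$. The plan is to introduce characteristic curves $t\mapsto x(t)$ solving the ODE $\dot x(t)=\tfrac12 r(x(t),t)$, along which the chain rule gives $\frac{d}{dt}u(x(t),t)=u_t+\tfrac12 r\,u_x=0$, so that $u$ is constant on characteristics. Propagating the initial profile $u_0$ along these curves then reconstructs $u$ and establishes solvability.

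The hard part will not be the algebraic derivation but the rigorous justification of the characteristic construction: one must ensure that the characteristics are well defined and non-crossing on $[0,\infty)$, which requires local Lipschitz regularity of $r$ in $x$ so that the characteristic flow is a diffeomorphism and the reconstruction of $u$ from $u_0$ is unambiguous. Since for the purposes of this corollary $r$ is taken as a known, sufficiently smooth input, I would invoke the standard existence and uniqueness theory for first-order linear transport equations rather than re-proving it, and emphasise only that the resulting $u$ is consistent with the smoothness $u\in\mathcal{C}^{3,1}$ assumed throughout.
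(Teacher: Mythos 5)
Your proposal is correct and follows essentially the same route as the paper: the transport equation is obtained by dividing (\ref{eq:u PDE quadratic form}) by $u_{xx}$ and substituting $r=-u_x/u_{xx}$, and solvability is established via characteristics $\tilde{x}'(t)=\tfrac12 r(\tilde{x}(t),t)$ along which $u$ is constant and can be propagated from the initial datum $u_0$. Your added remarks on non-degeneracy of $u_{xx}$ and Lipschitz regularity of $r$ are sensible refinements but do not change the argument.
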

\begin{proof}
The transport equation (\ref{eq:transport equation}) follows from (\ref{eq:u PDE quadratic form}) and (\ref{eq:local risk tolerance}). It can be solved using the method of characteristics. Consider $\frac{\d{}}{\d t}u(\tilde{x}(t),t)=\tilde{x}'(t)u_x(\tilde{x}(t),t)+u_t(\tilde{x}(t),t)$ with the characteristic curves $\tilde{x}(t)$. The solution are the curves whose slope is equal to half of the risk tolerance, i. e. $\tilde{x}'(t)=\frac12 r(\tilde{x}(t),t)$ with initial value $\tilde{x}(0)=x$. Then, the function $u$ can be successively constructed through the initial condition $u_0$ computed from Definition~\ref{def:local risk tolerance} and its evaluation along the characteristic curves.
\end{proof}
This means that for an infinitesimal time interval $(0,\epsilon)$, the performance level $u(x+\frac12 r_0(x)\epsilon,\epsilon)$ at time $\epsilon$ is identical to $u(x,0)$ at $t=0$, when the wealth is moved from $x$ to a higher level $x+\frac12 r_0(x)\epsilon$. The infinitesimal amount $\frac12 r_0(x)\epsilon$ can be interpreted as the compensation required by the investor in order to satisfy his impatience in the time interval $(0,\epsilon)$. More about the theory of investor's impatience can be found in Fisher \cite[Chapter IV]{F1930}, Koopmans \cite[p.~296]{Ko1960} and Diamond et al. \cite{DKW1964}.\par
Apart from the transport equation (\ref{eq:transport equation}), the local risk tolerance function $r$ solves an \textit{autonomous non-linear heat equation}, which gives an alternative approach to construct $u$ from $r$.
\begin{corollary}[Fast diffusion equation]
Let $u\in\mathcal{C}^4(\mathbbm{R}\times[0,\infty))$ satisfy the conditions from Proposition~\ref{thm:forward performance process and general optimal strategy}. Then, the associated local risk tolerance $r$ is the solution of an equation of fast diffusion type, namely
\begin{equation}
\label{eq:r fast diffusion equation}
r_t+\frac12r^2r_{xx}=0\quad\text{and}\quad r(x,0)=-\frac{u_0'(x)}{u_0''(x)}.
\end{equation}
\end{corollary}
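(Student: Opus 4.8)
The plan is to derive the fast diffusion equation directly by differentiating the already-established transport equation and substituting the definition of $r$, rather than attacking the second-order PDE (\ref{eq:u PDE quadratic form}) head on. The initial condition $r(x,0)=-u_0'(x)/u_0''(x)$ is immediate from the definition (\ref{eq:local risk tolerance}) of $r$ together with $u(x,0)=u_0(x)$, so the whole content lies in establishing the evolution equation.

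First I would record the algebraic identities that drive the computation: from (\ref{eq:local risk tolerance}) we have $r\,u_{xx}=-u_x$ and $u_x/u_{xx}=-r$, and from the transport equation (\ref{eq:transport equation}) we have $u_t=-\tfrac12 r u_x$. Differentiating the transport equation once in $x$ and using $r u_{xx}=-u_x$ to eliminate the awkward term gives $u_{xt}=\tfrac12 u_x(1-r_x)$. Differentiating this relation once more in $x$ yields $u_{xxt}=\tfrac12 u_{xx}(1-r_x)-\tfrac12 u_x r_{xx}$. These two expressions are exactly the mixed time derivatives that appear when one differentiates $r=-u_x/u_{xx}$ in $t$.

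Next I would apply the quotient rule to $r=-u_x/u_{xx}$ to write $r_t=-u_{xt}/u_{xx}+u_x u_{xxt}/u_{xx}^2$, and substitute the two expressions for $u_{xt}$ and $u_{xxt}$ obtained above. Using $u_x/u_{xx}=-r$ and $u_x^2/u_{xx}^2=r^2$ throughout, the first term contributes $\tfrac12 r(1-r_x)$, while the second term splits into $-\tfrac12 r(1-r_x)-\tfrac12 r^2 r_{xx}$. The two $(1-r_x)$ contributions cancel identically, leaving $r_t=-\tfrac12 r^2 r_{xx}$, which is precisely (\ref{eq:r fast diffusion equation}).

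The computation itself is routine once the bookkeeping is set up; the only genuine points to watch are that every manipulation divides by $u_{xx}$, so strict concavity $u_{xx}<0$ must hold for $r$ and its derivatives to be well defined, and that the hypothesis $u\in\mathcal{C}^4$ is needed both to guarantee existence of $u_{xxt}$ and $r_{xx}$ and to interchange the orders of the mixed partial derivatives. These are exactly the assumptions carried over from Proposition~\ref{thm:forward performance process and general optimal strategy}, so no additional regularity is required. I expect the main obstacle to be purely organisational: keeping the cancellation of the $(1-r_x)$ terms transparent, which is why I would eliminate $u_x$ in favour of $r u_{xx}$ as early as possible rather than carrying a quotient of $u$-derivatives all the way to the end.
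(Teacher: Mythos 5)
Your proof is correct. The paper's own argument (following Musiela and Zariphopoulou) works directly from the quadratic form $u_tu_{xx}=\frac12u_x^2$: it differentiates $u_t=\frac12 u_x^2/u_{xx}$ twice in $x$, records $r_x=-1+u_xu_{xxx}/u_{xx}^2$ and $r_{xx}=\left(u_xu_{xxx}/u_{xx}^2\right)_x$, and carries the quotient $u_xu_{xxx}/u_{xx}^2$ through to the final cancellation. You instead start from the transport equation $u_t=-\frac12 ru_x$ of Corollary~\ref{thm:transport equation} --- which is the same PDE rewritten via $r=-u_x/u_{xx}$ and is legitimately available, since that corollary precedes this one --- and eliminate $u_x$ in favour of $ru_{xx}$ as early as possible, so the mixed derivatives come out as $u_{tx}=\frac12u_x(1-r_x)$ and $u_{txx}=\frac12u_{xx}(1-r_x)-\frac12u_xr_{xx}$ and the cancellation of the $(1-r_x)$ contributions in the quotient rule for $r_t$ is immediate. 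The two computations are term-by-term equivalent (the paper's $u_xu_{xxx}/u_{xx}^2$ is exactly your $1+r_x$), so nothing is gained or lost in generality, but your bookkeeping is arguably more transparent. Your observations that every step divides by $u_{xx}$ (so strict concavity is needed for $r$ to be defined) and that $\mathcal{C}^4$ regularity justifies the third-order mixed partials and their symmetry are exactly the hypotheses the corollary carries, so no gap remains.
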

\begin{proof}
We will quote the proof from Musiela and Zariphopoulou \cite[Proposition 6]{MZ2009}. Differentiating the non-linear partial differential equation $u_t=\frac12\frac{u_x^2}{u_{xx}}$ from (\ref{eq:u PDE quadratic form}) with respect to $t$ yields
\begin{equation}
\label{eq:u derivative tx}
u_{tx}=u_x-\frac12 u_x\left(\frac{u_xu_{xxx}}{u_{xx}^2}\right),
\end{equation}
and a second derivation with respect to $x$ gives
\begin{equation}
\label{eq:u derivative txx}
u_{txx}=u_{xx}-u_{xx}\left(\frac{u_xu_{xxx}}{u_{xx}^2}\right)-\frac12 u_x\left(\frac{u_xu_{xxx}}{u_{xx}^2}\right)_x.
\end{equation}
The spatial derivatives of the local risk tolerance (\ref{eq:local risk tolerance}) are
\begin{equation}
\label{eq:r derivatives}
r_x = -\frac{u_{xx}^2-u_xu_{xxx}}{u_{xx}^2}=-1+\frac{u_xu_{xxx}}{u_{xx}^2},\quad r_{xx}=\left(\frac{u_xu_{xxx}}{u_{xx}^2}\right)_x.
\end{equation}
The preceding identities (\ref{eq:r derivatives}), (\ref{eq:u derivative txx}) and (\ref{eq:u derivative tx}) imply
\begin{align*}
r_t+\frac12r^2r_{xx}&=-\frac{u_{tx}}{u_{xx}}+\frac{u_xu_{txx}}{u_{xx}^2}+\frac12\frac{u_x^2}{u_{xx}^2}\left(\frac{u_xu_{xxx}}{u_{xx}^2}\right)_x\\
&=-\frac{u_{tx}}{u_{xx}}+\frac{u_x}{u_{xx}}-\frac12\frac{u_x}{u_{xx}}\left(\frac{u_xu_{xxx}}{u_{xx}^2}\right)\\
&=0,
\end{align*}
which proves the assertion.
\end{proof}
After choosing an initial condition $r_0(x)=r(x,0)=-\frac{u'(x)}{u''(x)}$, the initial datum $u(x,0)=u_0(x)$ and furthermore, with (\ref{eq:r fast diffusion equation}) and $r_0$, the values of $r(x,t)$ for $t>0$ can be retrieved. The function $u(x,t)$, $t>0$ can be computed through successive integration from (\ref{eq:local risk tolerance}) if certain quantities are correctly specified.
\begin{corollary}[Autonomous SDE system for $(X^*,R^*)$]
Let $r$ satisfy (\ref{eq:r fast diffusion equation}) and let $A$ be as in (\ref{eq:mean-variance trade-off}). Then, the processes $X^*$ and $R^*$ solve the system
\begin{align*}
\d X_t^*=R_t^*\widehat{\lambda}_t^S\left(\widehat{\lambda}_t^S\d t +\d{\widehat{W}}_t^S\right),\quad \d R_t^*=r_x(X_t^*,A_t)\d X_t^*,
\end{align*}
for $t>0$.
\end{corollary}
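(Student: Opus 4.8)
The plan is to prove the two equations of the system in turn. The first is essentially a restatement of an identity already derived in the text, while the second is the genuine content and follows from an application of It\^{o}'s formula to the local risk tolerance process, combined with the fast diffusion equation~(\ref{eq:r fast diffusion equation}) satisfied by $r$.

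For the first equation, I would simply recall that the dynamics $\d X_t^* = R_t^*\widehat{\lambda}_t^S(\widehat{\lambda}_t^S\d t + \d{\widehat{W}}_t^S)$ are exactly the representation~(\ref{eq:dX optimum risk tolerance}) of the optimal wealth process, obtained by inserting the local risk tolerance process $R_t^* = r(X_t^*,A_t)$ of Definition~\ref{def:local risk tolerance} into the optimal strategy~(\ref{eq:general optimal strategy forward utility}) and into~(\ref{eq:dX partial}). So no new computation is needed for this component beyond citing that derivation.

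For the second equation, I would apply It\^{o}'s formula to $R_t^* = r(X_t^*,A_t)$, treating $r$ as a function of the diffusion $X^*$ and the process $A$. The structural point to flag is that $A_t = \int_0^t(\widehat{\lambda}_u^S)^2\,\d u$ from~(\ref{eq:mean-variance trade-off}) is of finite variation with $\d A_t = (\widehat{\lambda}_t^S)^2\,\d t$, so there is no cross-variation term between $X^*$ and $A$, and It\^{o}'s formula collapses to
\begin{equation*}
\d R_t^* = r_x(X_t^*,A_t)\,\d X_t^* + r_t(X_t^*,A_t)\,\d A_t + \tfrac12 r_{xx}(X_t^*,A_t)\,\d\langle X^*\rangle_t.
\end{equation*}
Reading off the martingale part $R_t^*\widehat{\lambda}_t^S\,\d{\widehat{W}}_t^S$ of $X^*$ from the first equation gives $\d\langle X^*\rangle_t = (R_t^*)^2(\widehat{\lambda}_t^S)^2\,\d t = r(X_t^*,A_t)^2(\widehat{\lambda}_t^S)^2\,\d t$.

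Substituting $\d A_t$ and $\d\langle X^*\rangle_t$ and collecting the finite-variation terms, the correction beyond $r_x\,\d X_t^*$ becomes $(\widehat{\lambda}_t^S)^2\bigl[r_t + \tfrac12 r^2 r_{xx}\bigr]$ evaluated at $(X_t^*,A_t)$. By the fast diffusion equation~(\ref{eq:r fast diffusion equation}) this bracket vanishes identically, so the entire drift and second-order correction drops out and one is left with $\d R_t^* = r_x(X_t^*,A_t)\,\d X_t^*$, as claimed. The argument is routine It\^{o} calculus; the only real content, and the main thing to get right, is recognising that the PDE for $r$ is precisely what annihilates the $\d t$ terms, which is also the reason the pair $(X^*,R^*)$ closes into an autonomous system.
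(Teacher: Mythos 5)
Your proposal is correct and follows essentially the same route as the paper: the first equation is cited from the earlier derivation of the optimal wealth dynamics, and the second is obtained by applying It\^{o}'s formula to $r(X_t^*,A_t)$, computing $\d\langle X^*\rangle_t = (R_t^*)^2\,\d A_t$, and observing that the fast diffusion equation annihilates the remaining finite-variation terms. No substantive differences from the paper's argument.
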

\begin{proof}
The first equation of the optimal wealth dynamics is taken from (\ref{eq:dX optimum risk tolerance}). Its quadratic variation is
\begin{equation}
\label{eq:dX optimum quadratic variation}
\d\langle X^*\rangle_t=R_t^2\left(\widehat{\lambda}_t^S\right)^2\d t\xlongequal{(\ref{eq:mean-variance trade-off})}R_t^2\d A_t.
\end{equation}
Using It\^{o}'s lemma, the dynamics of the risk tolerance process at optimum wealth can be deduced by
\begin{align*}
\d R_t^*&=\d r(X_t^*,A_t)=r_x(X_t^*,A_t)\d X_t^*+r_t(X_t^*,A_t)\d A_t+\frac12r_{xx}(X_t^*,A_t)\d\langle X^*\rangle_t\\
&=r_x(X_t^*,A_t)\d X_t^*+\left(r_t(X_t^*,A_t)+\frac12r_{xx}(X_t^*,A_t)R_t^2\right)\d A_t\\
&\xlongequal{(\ref{eq:dX optimum quadratic variation})}r_x(X_t^*,A_t)\d X_t^*+\left(r_t(X_t^*,A_t)+\frac12r^2(X_t^*,A_t)r_{xx}(X_t^*,A_t)\right)\d A_t\\
&\xlongequal{(\ref{eq:r fast diffusion equation})}r_x(X_t^*,A_t)\d X_t^*,
\end{align*}
because $r$ solves the fast diffusion equation.
\end{proof}
The reciprocal of the local risk tolerance is called \textit{local risk aversion}, which solves a similar partial differential equation of second order. The risk aversion is a well-known parameter in utility theory to express the investor's \textit{risk preference}.
\begin{corollary}[Local risk aversion]
\label{thm:local risk aversion}
The \textit{local risk aversion function}, defined as
\begin{equation}
\label{eq:local risk aversion}
\gamma\colon\mathbb{R}\times[0,\infty)\longrightarrow(0,\infty),\quad(x,t)\longmapsto\frac{1}{r(x,t)},
\end{equation}
satisfies the partial differential equation
\begin{equation}
\label{eq:gamma PDE}
\gamma_t=\frac12\left(\frac{1}{\gamma}\right)_{xx},\quad \gamma(x,0)=-\frac{u_0''(x)}{u_0'(x)},
\end{equation}
where $u$ is the local risk tolerance function from Definition~\ref{def:local risk tolerance}.
\end{corollary}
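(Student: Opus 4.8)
The plan is to derive the equation for $\gamma$ directly from the fast diffusion equation \eqref{eq:r fast diffusion equation} already established for the local risk tolerance $r$, exploiting the reciprocal relation $\gamma = 1/r$ that \emph{defines} the local risk aversion in \eqref{eq:local risk aversion}. Under the standing hypothesis $u\in\mathcal{C}^4(\mathbb{R}\times[0,\infty))$, the function $r=-u_x/u_{xx}$ is of class $\mathcal{C}^{2,1}$, and since $\gamma$ takes values in $(0,\infty)$ the tolerance $r$ is strictly positive; hence $\gamma=1/r$ is well-defined, strictly positive and inherits the same regularity, so every derivative appearing below is legitimate.

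First I would differentiate $\gamma = 1/r$ in time by the chain rule to obtain $\gamma_t = -r_t/r^2$. Next I would substitute $r_t = -\tfrac12 r^2 r_{xx}$, read off from the fast diffusion equation \eqref{eq:r fast diffusion equation}, whereupon the two factors of $r^2$ cancel and one is left with $\gamma_t = \tfrac12 r_{xx}$. Finally, since $1/\gamma = r$ by definition, the spatial derivatives satisfy $(1/\gamma)_{xx} = r_{xx}$, so that $\gamma_t = \tfrac12 (1/\gamma)_{xx}$, which is precisely \eqref{eq:gamma PDE}. The initial condition is then verified by inverting $r(x,0) = -u_0'(x)/u_0''(x)$, giving $\gamma(x,0) = 1/r(x,0) = -u_0''(x)/u_0'(x)$, matching the claim.

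There is no genuine analytic obstacle here: the content of the statement is that the fast diffusion equation $r_t + \tfrac12 r^2 r_{xx} = 0$ is \emph{algebraically equivalent} to $\gamma_t = \tfrac12 (1/\gamma)_{xx}$ under the substitution $\gamma = 1/r$, and the computation above exhibits this equivalence in essentially one line. The only point deserving a word of care is to ensure that $r$ stays bounded away from zero on the relevant domain, so that the reciprocal and its first two spatial derivatives exist and are continuous; this is exactly what the codomain $(0,\infty)$ built into \eqref{eq:local risk aversion} provides. (Alternatively one could bypass $r$ altogether and differentiate $\gamma = -u_{xx}/u_x$ directly, using the master equation \eqref{eq:u PDE quadratic form}, but this reproduces more of the computation carried out for the fast diffusion equation and is less economical than the reciprocal route.)
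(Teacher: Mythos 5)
Your proof is correct and follows essentially the same route as the paper: both derive the equation for $\gamma$ by substituting the reciprocal relation $\gamma=1/r$ into the fast diffusion equation (\ref{eq:r fast diffusion equation}), the only cosmetic difference being that the paper rewrites the fast diffusion equation in terms of $\gamma$ and factors out $-1/\gamma^2$, whereas you solve for $r_t$ first and then convert. The added remarks on regularity and strict positivity of $r$ are sound but not part of the paper's (purely algebraic) argument.
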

\begin{proof}
By (\ref{eq:local risk aversion}), insert $r=\frac{1}{\gamma}$ into the fast diffusion equation (\ref{eq:r fast diffusion equation}) solved by $r$ to get
\begin{equation*}
0\xlongequal{\text{(\ref{eq:r fast diffusion equation})}}r_t+\frac12r^2r_{xx}\xlongequal{\text{(\ref{eq:local risk aversion})}}-\frac{\gamma_t}{\gamma^2}+\frac{1}{2\gamma^2}\left(\frac{1}{\gamma}\right)_{xx}=-\frac{1}{\gamma^2}\left(\gamma_t-\frac12\left(\frac{1}{\gamma}\right)_{xx}\right),
\end{equation*}
which directly implies the partial differential equation (\ref{eq:gamma PDE}) for $\gamma$.
\end{proof}
The partial differential equation (\ref{eq:gamma PDE}) is of \textit{porous medium type}; see for example Vasquez \cite{V2006}. This and the \textit{fast diffusion equation} (\ref{eq:r fast diffusion equation}) may not have well-defined global solutions for arbitrary initial conditions. Zariphopoulou and Zhou \cite{ZZ2007} introduced a two-parameter family of so-called \textit{asymptotically linear local risk tolerance functions} solving (\ref{eq:r fast diffusion equation}), which includes the most common cases that lead to \textit{exponential}, \textit{power}, and \textit{logarithmic utilities}.
\begin{proposition}[Asymptotically linear local risk tolerance]
\label{thm:asymptotically linear local risk tolerance}
Let $\alpha,\beta>0$ be constant parameters, then the function
\begin{equation}
\label{eq:asymptotically linear local risk tolerance}
r(x,t)=\sqrt{\alpha x^2+\beta e^{-\alpha t}},\quad (x,t)\in\mathbbm{R}\times[0,\infty),
\end{equation}
solves (\ref{eq:r fast diffusion equation}) with initial datum $r_0(x)=\sqrt{\alpha x^2+\beta}$. The limiting cases lead to local risk tolerance functions with corresponding utilities for $t\geq 0$ as follows:
\begin{alignat}{3}
\label{eq:exponential}
&\lim_{\alpha\to 0}r(x,t)=\sqrt{\beta}=:r_e,\quad &&u(x,t)=-e^{-\frac{x}{\sqrt{\beta}}+\frac{t}{2}},\;x\in\mathbbm{R},\quad &&\text{(exponential)};\\
\label{eq:power}
&\lim_{\beta\to 0}r(x,t)=\sqrt{\alpha}x,\quad &&u(x,t)=\frac{x^\delta}{\delta}e^{-\frac12\frac{\delta}{1-\delta}t},\;x\geq 0,\;\alpha\neq 1,\quad &&\text{(power)};\\
\label{eq:logarithmic}
&\lim_{\beta\to 0}r(x,t)=x,\quad &&u(x,t)=\log(x)-\frac{t}{2},\;x>0,\;\alpha=1,\quad &&\text{(logarithmic)},
\end{alignat}
where $\delta:=\frac{\sqrt{\alpha}-1}{\sqrt{\alpha}}$.
\end{proposition}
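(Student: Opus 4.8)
The plan is to separate the statement into two independent assertions: first, that the candidate $r(x,t)=\sqrt{\alpha x^2+\beta e^{-\alpha t}}$ solves the fast diffusion equation (\ref{eq:r fast diffusion equation}) with the indicated initial datum; and second, that the three stated parameter limits reproduce the exponential, power and logarithmic risk tolerances together with their utilities.

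For the first assertion I would work from the squared relation $r^2=\alpha x^2+\beta e^{-\alpha t}$, so that square roots never have to be carried through the derivatives. Implicit differentiation gives $2rr_t=-\alpha\beta e^{-\alpha t}$ and $2rr_x=2\alpha x$, hence $r_t=-\frac{\alpha\beta e^{-\alpha t}}{2r}$ and $r_x=\frac{\alpha x}{r}$. Differentiating the latter once more and using $r^2-\alpha x^2=\beta e^{-\alpha t}$ yields $r_{xx}=\frac{\alpha\beta e^{-\alpha t}}{r^3}$, whence $\frac12 r^2 r_{xx}=\frac{\alpha\beta e^{-\alpha t}}{2r}$. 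Adding the two contributions, $r_t$ and $\frac12 r^2 r_{xx}$ cancel exactly, so (\ref{eq:r fast diffusion equation}) holds, and setting $t=0$ gives $r_0(x)=\sqrt{\alpha x^2+\beta}$ immediately. This step is routine once the implicit-differentiation bookkeeping is organised, and I expect no real difficulty.

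For the limits of $r$ the computations are direct: as $\alpha\to0$ the term $\alpha x^2$ vanishes and $e^{-\alpha t}\to1$, giving $r\to\sqrt{\beta}=r_e$ as in (\ref{eq:exponential}); as $\beta\to0$ the term $\beta e^{-\alpha t}$ vanishes and $r\to\sqrt{\alpha}\,x$ on the relevant domain $x\geq0$, which equals $x$ precisely when $\alpha=1$, recovering (\ref{eq:power}) and (\ref{eq:logarithmic}). To confirm the associated utilities I would not integrate the transport equation (\ref{eq:transport equation}) but instead verify each candidate $u$ directly, checking both that it satisfies the constraint (\ref{eq:u PDE quadratic form}), $u_tu_{xx}=\tfrac12 u_x^2$, and that the ratio $-u_x/u_{xx}$ from (\ref{eq:local risk tolerance}) returns the correct limiting $r$. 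For the exponential $u(x,t)=-e^{-x/\sqrt{\beta}+t/2}$ one finds $u_x$ and $u_{xx}$ proportional to $u$, so that $-u_x/u_{xx}=\sqrt{\beta}$; for the power $u(x,t)=\frac{x^\delta}{\delta}e^{-\frac12\frac{\delta}{1-\delta}t}$ the monomial structure gives $-u_x/u_{xx}=\frac{x}{1-\delta}$, and the identity $1-\delta=1/\sqrt{\alpha}$ coming from $\delta=\frac{\sqrt{\alpha}-1}{\sqrt{\alpha}}$ turns this into $\sqrt{\alpha}\,x$.

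The one step deserving care, and the place where I expect the main obstacle, is the logarithmic case, since at $\alpha=1$ one has $\delta=0$ and the power utility $\frac{x^\delta}{\delta}$ is formally singular. I would obtain $u(x,t)=\log x-t/2$ as the $\delta\to0$ limit of the power expression after discarding an additive constant: expanding $x^\delta=1+\delta\log x+O(\delta^2)$ and $e^{-\frac12\frac{\delta}{1-\delta}t}=1-\frac{\delta t}{2}+O(\delta^2)$ gives $\frac{x^\delta}{\delta}e^{-\frac12\frac{\delta}{1-\delta}t}=\frac1\delta+\log x-\frac t2+O(\delta)$, and since a forward performance function is determined only up to an additive constant the divergent term $\frac1\delta$ may be dropped. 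A direct check then confirms $u_tu_{xx}=\tfrac12u_x^2$ and $-u_x/u_{xx}=x$, completing the case. This limiting and normalisation argument is the only point where a naive substitution fails, so it is where I would be most careful in the write-up.
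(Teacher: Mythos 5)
Your proof is correct, and on the second half it takes a genuinely different route from the paper. The verification that $r(x,t)=\sqrt{\alpha x^2+\beta e^{-\alpha t}}$ solves the fast diffusion equation is the same direct-differentiation computation as in the paper (your implicit differentiation of $r^2$ is just a cleaner bookkeeping of the identical derivatives $r_t=-\tfrac12\alpha\beta e^{-\alpha t}r^{-1}$, $r_x=\alpha x r^{-1}$, $r_{xx}=\alpha\beta e^{-\alpha t}r^{-3}$). For the utilities, however, the paper \emph{constructs} each $u$ by solving the transport equation $u_t+\tfrac12 r u_x=0$ with a multiplicative (exponential, power) or additive (logarithmic) separation ansatz, whereas you \emph{verify} the stated candidates directly against the constraint $u_tu_{xx}=\tfrac12u_x^2$ together with $-u_x/u_{xx}=r$; these two checks are jointly equivalent to the transport equation, so your verification is logically complete, and it is arguably the more airtight write-up (the paper's constructive derivation ends with a formula containing sign and $\beta$-versus-$\sqrt{\beta}$ typos that your direct check avoids). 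What the paper's approach buys is an explanation of where the formulas come from; what yours buys is brevity and robustness, plus the nice observation that the logarithmic utility arises as the $\delta\to0$ limit of the power utility after discarding the divergent additive constant $1/\delta$ (legitimate, since $u$ is only determined up to the affine normalisation $Mu+N$ and the defining PDE is invariant under additive shifts) --- the paper instead handles $\alpha=1$ by a separate additive ansatz. Since you close the logarithmic case with a direct check of the PDE and of $-u_x/u_{xx}=x$, the limit argument is only motivational and introduces no gap.
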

\begin{proof}
The first partial derivatives of $r$ are $r_t=-\frac12\alpha\beta e^{-\alpha t}r^{-1}$ and $r_x=\alpha x r^{-1}$. The second derivative with respect to $x$ is $r_{xx}=\alpha r^{-1}-(\alpha x )^2r^{-3}$. As a result, (\ref{eq:asymptotically linear local risk tolerance}) solves the fast diffusion equation
\begin{align*}
r_t+\frac12 r^2r_{xx}&=-\frac12\alpha\beta e^{-\alpha t}r^{-1}+\frac12\left(\alpha r-(\alpha x)^2 r^{-1}\right)=-\frac12\alpha\underbrace{\left(\alpha x^2+\beta e^{-\alpha t}\right)}_{=r^2}r^{-1}+\frac12\alpha r\\
&=-\frac12(r-r)=0.
\end{align*}
To construct the utilities in the limiting cases, Definition~\ref{def:local risk tolerance} and the transport equation from Corollary~\ref{thm:transport equation} can be applied. For the exponential case (\ref{eq:exponential}), consider $\sqrt{\beta}=\frac{u_x(x,t)}{u_{xx}(x,t)}$ from (\ref{eq:local risk tolerance}) and make the exponential ansatz $u_0(x)=e^{-\frac{x}{\sqrt{\beta}}}$ for $t=0$ as the left side is independent of time. By (\ref{eq:transport equation}), it is $u_t+\frac12\sqrt{\beta}u_{x}=0$, which yields the product solution $u(x,t)=u_0(x)e^{\frac{t}{2}}=e^{-\frac{x}{\beta}+\frac{t}{2}}$.\par
In the power case (\ref{eq:power}), the risk tolerance is expressed by $\sqrt{\alpha}x=-\frac{u_0'(x)}{u_0''(x)}$. Similar to the exponential case, make a multiplicative ansatz $u(x,t)=u_0(x)\tilde{u}(t)$, but with a monomial initial function $u_0=\frac{x^\delta}{\delta}$ to solve the problem, because of
\begin{equation*}
-\frac{u_0'}{u_0''}=-\frac{x^{\delta-1}}{(\delta-1)x^{\delta-2}}=-\frac{x}{\delta-1}=\sqrt{\alpha}x.
\end{equation*}
Further, the transport equation gives the homogeneous ordinary differential equation of first order $\frac{x^\delta}{\delta}\tilde{u}'+\frac12\sqrt{\alpha}x^\delta\tilde{u}=0$. Excluding the trivial solution $u=\tilde{u}=0$ for $x=0$ simplifies the equation to $\tilde{u}'+\frac12\frac{\delta}{1-\delta}\tilde{u}=0$. After a rearrangement, we get the logarithmic derivative $\frac{\d{}}{\d t}\log(\tilde{u}(t))=\frac{\tilde{u}'(t)}{\tilde{u}(t)}=-\frac12\frac{\delta}{1-\delta}$, which is solved by simple integration and taking the inverse function, i. e. $\tilde{u}(t)=e^{-\frac12\frac{\delta}{1-\delta}t}$. The restriction $x\geq 0$ is needed to ensure only real solutions.\par
The logarithmic case (\ref{eq:logarithmic}) is easier after the power case. The logarithm function $u_0(x)=\log(x)$ solves $x=-\frac{u_0'}{u_0''}$ and the transport equation becomes $u_t+\frac12xu_x=0$. Since an attempt to solve the problem through a multiplicative separation fails, we try an additive approach to get an appropriate solution $u(x,t)=u_0(x)+\tilde{u}(t)$. With this, the transport equation is apparently solved by $\tilde{u}(t)=-\tfrac{t}{2}$. Obviously, the domain of the utility is defined only for $x>0$.
\end{proof}
The family (\ref{eq:asymptotically linear local risk tolerance}) is called \textit{asymptotically linear} due to its limiting behaviour
\begin{equation*}
\lim_{\abs{x}\to\infty}\frac{r(x,t)}{\abs{x}}=\sqrt{\alpha},\quad t\geq 0.
\end{equation*}
The local risk tolerances in (\ref{eq:exponential}), (\ref{eq:power}) and (\ref{eq:logarithmic}) are referred to as \textit{exponential}, \textit{power} and \textit{logarithmic} \textit{risk tolerance}, respectively. The form of $u$ only depends on the range of the parameter $\alpha$, specifically, one the cases $\alpha=1$ and $\alpha\neq 1$.
\begin{proposition}[Class of forward utility functions]
\label{thm:class of forward utility functions}
Let $r$ be an asymptotically linear local risk tolerance function as defined in (\ref{eq:asymptotically linear local risk tolerance}) with $\alpha,\beta>0$. The corresponding utility function is given by
\begin{align*}
u(x,t)=\begin{cases}
          M\frac{\left(\sqrt{\alpha}\right)^{1+\frac{1}{\sqrt{\alpha}}}}{\alpha-1}e^{\frac{1-\sqrt{\alpha}}{2}t}\frac{\left(\frac{\beta}{\sqrt{\alpha}}e^{-\alpha t}+(1+\sqrt{\alpha})x\left(\sqrt{\alpha}x+\sqrt{\alpha x^2+\beta e^{-\alpha t}}\right)\right)}{\left(\sqrt{\alpha x}+\sqrt{\alpha x^2+\beta e^{-\alpha t}}\right)^{1+\frac{1}{\sqrt{\alpha}}}}+N,\quad \alpha\neq 1\\
          \frac{M}{2}\left(\log\left(x+\sqrt{x^2+\beta e^{-t}}\right)-\frac{e^t}{\beta}x\left(x-\sqrt{x^2+\beta e^{-t}}\right)-\frac{t}{2}\right)+N,\;\; \alpha=1,
       \end{cases}
\end{align*}
for $(x,t)\in\mathbbm{R}\times[0,\infty)$, where $M>0,N\in\mathbbm{R}$ are constants derived from integration.
\end{proposition}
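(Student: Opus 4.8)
The plan is to build $u$ constructively from the local risk tolerance by one spatial quadrature for $u_x$, a second quadrature in $x$ for $u$, and a final determination of the time dependence through the transport equation; the case $\alpha=1$ is then treated as a degeneration of the generic formula. First I would read the definition (\ref{eq:local risk tolerance}) with $r$ given by (\ref{eq:asymptotically linear local risk tolerance}) as the linear second-order ODE $u_{xx}=-u_x/\sqrt{\alpha x^{2}+\beta e^{-\alpha t}}$ in $x$ at each frozen $t$. Separating in $u_x$ and using the elementary antiderivative $\int \frac{\d x}{\sqrt{\alpha x^{2}+\beta e^{-\alpha t}}}=\frac{1}{\sqrt{\alpha}}\log\!\left(\sqrt{\alpha}\,x+\sqrt{\alpha x^{2}+\beta e^{-\alpha t}}\right)$ integrates the ODE once to
\begin{equation*}
u_x(x,t)=C(t)\left(\sqrt{\alpha}\,x+\sqrt{\alpha x^{2}+\beta e^{-\alpha t}}\right)^{-1/\sqrt{\alpha}},
\end{equation*}
with a still-undetermined positive factor $C(t)$; the spatial ODE fixes $u$ only up to the transformation $u\mapsto a(t)u+b(t)$, under which the ratio $u_x/u_{xx}$ is invariant.

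Next I would integrate $u_x$ in $x$. The substitution $\sqrt{\alpha}\,x=\sqrt{\beta}\,e^{-\alpha t/2}\sinh\phi$, under which the bracket becomes $\sqrt{\beta}\,e^{-\alpha t/2}e^{\phi}$, reduces the quadrature to the exponential integral $\int e^{-\phi/\sqrt{\alpha}}\cosh\phi\,\d\phi$, which is elementary. Resubstituting and rewriting everything in terms of $x$ and $\sqrt{\alpha x^{2}+\beta e^{-\alpha t}}$ reproduces the rational combination displayed in the statement, up to the multiplicative factor $C(t)$ and an additive function of $t$. To pin these down I would impose that $u$ actually solves the forward PDE (\ref{eq:u PDE quadratic form}), equivalently the transport equation (\ref{eq:transport equation}) of Corollary~\ref{thm:transport equation}; substituting the candidate $u$ yields a first-order linear ODE for $C(t)$ whose solution is proportional to $e^{\frac{1-\sqrt{\alpha}}{2}t}$ (consistent with the power exponent $-\frac12\frac{\delta}{1-\delta}=\frac{1-\sqrt{\alpha}}{2}$ of (\ref{eq:power})) and forces the additive part to be constant. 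The two remaining integration constants are then collected as $M>0$ and $N\in\mathbb{R}$.

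Finally the branch $\alpha=1$ must be handled as a genuine limit rather than by specialising the generic expression, since both the prefactor $1/(\alpha-1)$ and the exponent $1+1/\sqrt{\alpha}$ degenerate. At $\alpha=1$ one has $1/\sqrt{\alpha}=1$, so one of the two exponentials in $\int e^{-\phi/\sqrt{\alpha}}\cosh\phi\,\d\phi$ integrates to $\phi=\log\!\left(x+\sqrt{x^{2}+\beta e^{-t}}\right)$ (up to a function of $t$) rather than to a power, and the multiplicative separation collapses into an additive $-t/2$ term, exactly paralleling the passage from the power to the logarithmic case in the proof of Proposition~\ref{thm:asymptotically linear local risk tolerance} and in (\ref{eq:logarithmic}). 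I expect the \emph{main obstacle} to be not any single step but the bookkeeping: matching the hyperbolic-substitution antiderivative to the stated closed form, and above all extracting the time factor $e^{\frac{1-\sqrt{\alpha}}{2}t}$ from the PDE constraint (the risk-tolerance ODE alone is blind to it), while tracking the $\alpha\to1$ degeneration so that the two branches join up consistently.
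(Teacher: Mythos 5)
Your construction is correct, and it is essentially the argument behind the result: the paper itself gives no proof here, deferring entirely to Zariphopoulou and Zhou \cite[Proposition 3.2]{ZZ2007}, and the route you describe (one quadrature for $u_x$ from $u_{xx}=-u_x/r$, a second quadrature via the hyperbolic substitution, then fixing the time factor $e^{\frac{1-\sqrt{\alpha}}{2}t}$ and the additive part from the transport equation, with $\alpha=1$ handled as the degenerate exponent $1-1/\sqrt{\alpha}=0$) is exactly the derivation carried out in that reference. The steps all check out — in particular, writing $q=\sqrt{\alpha}x+\sqrt{\alpha x^{2}+\beta e^{-\alpha t}}$ the displayed formula is $\tfrac{M(\sqrt{\alpha})^{1/\sqrt{\alpha}}}{2}e^{\frac{1-\sqrt{\alpha}}{2}t}\bigl[\tfrac{q^{1-1/\sqrt{\alpha}}}{\sqrt{\alpha}-1}-\tfrac{\beta e^{-\alpha t}q^{-1-1/\sqrt{\alpha}}}{\sqrt{\alpha}+1}\bigr]+N$, which is precisely your $e^{-\phi/\sqrt{\alpha}}\cosh\phi$ antiderivative, and your flagged point that the risk-tolerance ODE alone cannot see the time factor is the right one to emphasise.
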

\begin{proof}
We refer to Zariphopoulou and Zhou \cite[Proposition 3.2]{ZZ2007}.
\end{proof}
To preserve the monotonicity of $u$, the constraint $M>0$ is necessary. As the utility is well-defined for all $x\in\mathbbm{R}$ with exception of the situation $\beta\to 0$, the non-negativity limitation on the investor's wealth is omitted. This property is useful for indifference valuation.
\section{Exponential forward valuation and hedging of European options under partial information}
\label{sec:european option}
The previous section prepared for the \textit{option's indifference pricing} and \textit{optimal hedging} of basis risk in an incomplete market model with partial information using a forward exponential utility approach. In this section, we derive the \textit{optimal hedging strategy}, the \textit{dual representation} of the forward indifference price with a PDE, the \textit{residual risk}, \textit{pay-off decompositions} and \textit{asymptotic expansions} of the indifference price as results.
\subsection{Perfect hedging in a complete market}
\label{subsec:Perfect hedging in a complete market}
Suppose the market is complete, this means that the Brownian motions $\widehat{W}^S,\widehat{W}^Y$ of the assets $S,Y$ are perfectly negatively or positively correlated with correlation coefficient $\abs{\rho}=1$. In this case, $Y$ effectively becomes a traded asset and perfect hedging of the stock $S$ by an \textit{European contingent claim} (European option) $C$ on $Y$ is possible due to the \textit{no-arbitrage requirement} of the market. More about the arbitrage theory of capital asset pricing can be found in Delbaen and Schachermayer \cite[p.~473]{DS1994} and Ross \cite{R1976}. The complete market case under full information was treated by Monoyios \cite[p.~334]{M2007}. An important result is that the perfect hedge does not require the knowledge of the MPR processes $\widehat{\lambda}^S,\widehat{\lambda}^Y$, making the \textit{hedging strategy} in the full and the partial information scenario identical.
\begin{proposition}[Pricing in a complete market]
In a complete market, that means a correlation of $\abs{\rho}=1$, the claim price process $C:=(C(t,Y_t))_{0\leq t\leq T}$ is given by the Black-Scholes pricing formula.
\end{proposition}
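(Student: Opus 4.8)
The plan is to use completeness to price the claim by perfect replication with the traded stock, and then to show that the replication argument collapses to the Black--Scholes expectation once the market price of risk has been removed by a change of measure. First I would fix $\rho=1$ (the case $\rho=-1$ being symmetric), so that $\widehat{W}^Y=\widehat{W}^S=:\widehat{W}$ and both $S$ and $Y$ are driven by a single $(\mathbb{P},\widehat{\mathbb{F}})$-Brownian motion. With one source of randomness and one traded asset of strictly positive volatility $\sigma^S>0$, the market is complete and there is a unique measure $\mathbb{Q}\in\mathcal{M}_e$, obtained from $\mathbb{P}$ by the Girsanov density generated by the market price of risk $\widehat{\lambda}^S$, under which $\d{\widehat{W}}_t^{\mathbb{Q}}=\d{\widehat{W}}_t+\widehat{\lambda}_t^S\,\d t$ and $\d S_t=\sigma^S S_t\,\d{\widehat{W}}_t^{\mathbb{Q}}$, so that $S$ is a $\mathbb{Q}$-martingale.

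Next I would set up the self-financing hedging portfolio $X$ holding $\theta_t$ shares of $S$ against the claim value $C(t,Y_t)$, applying It\^{o}'s lemma to $C(t,Y_t)$ with $\d\langle Y\rangle_t=(\sigma^Y)^2 Y_t^2\,\d t$ and comparing $\d C_t$ with the wealth dynamics $\d X_t=\theta_t\,\d S_t$ from (\ref{eq:dX partial}). Matching the coefficients of $\d{\widehat{W}}_t$ yields the replicating hedge
\begin{equation*}
\theta_t=\frac{\sigma^Y Y_t}{\sigma^S S_t}\,C_y(t,Y_t),
\end{equation*}
which, crucially, contains no market price of risk; this is precisely the assertion, quoted from Monoyios~\cite{M2007}, that the perfect hedge is identical under full and partial information.

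Matching the drift coefficients then produces a pricing equation in which the drifts enter only through the difference $\widehat{\lambda}_t^Y-\widehat{\lambda}_t^S$ of the two Sharpe ratios. The decisive step is therefore to argue that this difference vanishes: since $\rho=1$ renders $Y$ attainable by trading in $S$, the no-arbitrage requirement of the market forces the two perfectly correlated assets to share a common Sharpe ratio, $\widehat{\lambda}_t^Y=\widehat{\lambda}_t^S$, equivalently $Y$ is itself a $\mathbb{Q}$-martingale with $\d Y_t=\sigma^Y Y_t\,\d{\widehat{W}}_t^{\mathbb{Q}}$. With the drift term removed, and recalling $r_m=0$, the pricing equation reduces to the Black--Scholes PDE
\begin{equation*}
C_t+\tfrac12(\sigma^Y)^2 y^2 C_{yy}=0,\qquad C(T,y)=g(y),
\end{equation*}
where $g$ denotes the claim's pay-off. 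I expect this cancellation of the market price of risk to be the main obstacle, as it is the point at which completeness is genuinely used and at which the partial-information filter becomes irrelevant to the price.

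Finally I would invoke the Feynman--Kac representation to write $C(t,Y_t)=\mathbb{E}^{\mathbb{Q}}[g(Y_T)\mid\widehat{\mathcal{F}}_t]$; since $Y$ is a driftless geometric Brownian motion of volatility $\sigma^Y$ under $\mathbb{Q}$, the conditional law of $Y_T$ given $\widehat{\mathcal{F}}_t$ is log-normal with log-variance $(\sigma^Y)^2(T-t)$, and the conditional expectation evaluates to the Black--Scholes pricing formula, completing the proof.
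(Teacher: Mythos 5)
Your proposal is correct and follows essentially the same route as the paper: reduce to $\rho=1$, invoke no-arbitrage to force $\widehat{\lambda}^Y=\widehat{\lambda}^S$, match the diffusion terms of $\d C(t,Y_t)$ against $\d X_t=\theta_t\,\d S_t$ to obtain the MPR-free hedge $\theta_t^*=\frac{\sigma^Y Y_t}{\sigma^S S_t}C_y$, and conclude that $C$ solves the Black--Scholes PDE $C_t+\tfrac12(\sigma^Y)^2y^2C_{yy}=0$ (your second-derivative term is the correct one; the paper's displayed equation has a typographical $C_y$ where $C_{yy}$ is meant). The only addition is your closing Feynman--Kac step giving $C(t,Y_t)=\mathbb{E}^{\mathbb{Q}}[g(Y_T)\,\vert\,\widehat{\mathcal{F}}_t]$, which the paper leaves implicit but which is a harmless and welcome completion of the argument.
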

\begin{proof}
We will give a proof based on Davis \cite{D2006} and Monoyios \cite{M2004}. Apply the scenario under partial information with its notation from Subsection~\ref{subsec:partial information scenario}, because the calculations and results under full information are exactly the same. Without loss of generality let the correlation coefficient be $\rho=1$, implying the identity $\widehat{W}^Y=\rho \widehat{W}^S+\sqrt{1-\rho^2}\,\widehat{W}^\perp=\widehat{W}^S$. The no-arbitrage theory requires an unique market price of risk, since the random process $W^S$ is the only existing risk factor in the basis risk model. Therefore, the MPRs are related by $\widehat{\lambda}^S=\frac{\widehat{\mu}^S-r_m}{\sigma^S}=\frac{\widehat{\mu}^Y-r_m}{\sigma^Y}=\widehat{\lambda}^Y$ with $r=0$ according to Subsection \ref{subsec:full information scenario}. Like (\ref{eq:asset price solutions full}), the solutions of the asset price dynamics (\ref{eq:dS, dY partial}) are
\begin{equation}
\label{eq:dS, dY complete market}
\begin{aligned}
S_t &=S_0\exp\left(\sigma^S\left(\int_0^t\widehat{\lambda}_u^S\d u-\tfrac12\sigma^St+\widehat{W}_t^S)\right)\right),\\
Y_t &=Y_0\exp\left(\sigma^Y\left(\int_0^t\widehat{\lambda}_u^S\d u-\tfrac12\sigma^Yt+\widehat{W}_t^S)\right)\right).
\end{aligned}
\end{equation}
Thus, the asset $Y$ is a function of the stock $S$, given by
\begin{equation*}
\frac{Y_t}{Y_0}\xlongequal{(\ref{eq:dS, dY complete market})}\left(\frac{S_t}{S_0}\right)^{\frac{\sigma^Y}{\sigma^S}}\exp\left(\frac12\sigma^Y(\sigma^S-\sigma^Y)t\right).
\end{equation*}
Apply the It\^{o} lemma on the \textit{(contingent) claim price process} (value process of the European option on $Y$) $C:=\left(C(t,Y_t)\right)_{0\leq t\leq T}$, so that
\begin{align}
\label{eq:dC}
\d C&=C_t\d t+C_y\d Y_t+\frac12C_{yy}\d\langle Y\rangle_t\notag\\
&= \left(C_t+C_y\sigma^YY_t\widehat{\lambda}_t^S+\frac12C_{yy}\left(\sigma^Y\right)^2Y_t^2\right)\d t+C_y\sigma^YY_t\d{\widehat{W}}_t^S.
\end{align}
The replication conditions are $X_t^*=C(t,Y_t),\;\d X_t^*=\d C(t,Y_t),\;0\leq t\leq T,$ for the investor's optimal wealth $X_t^*$. A comparison of the random terms between the wealth dynamics (\ref{eq:dX partial}) and (\ref{eq:dC}) provides the \textit{perfect hedging strategy}
\begin{equation}
\label{eq:perfect hedge}
\theta_t^*=\frac{\sigma^Y}{\sigma^S}\frac{Y_t}{S_t}C_y(t,Y_t),
\end{equation}
which is independent of the MPRs and so is conform with both the full and partial information scenarios. The claim price process solves the Black-Scholes SDE
\begin{equation*}
C_t(t,Y_t)+\frac12 \left(\sigma^Y\right)^2Y_t^2C_y(t,Y_t)=0,
\end{equation*}
with a bounded continuous process $C(T,y)$ and the non-negative random variable $C(Y_T):=C(T,Y_T)$ as the \textit{pay-off at expiry T} of the European contingent claim.
\end{proof}
\subsection{Forward performance problem in an incomplete market}
\label{subsec:Forward performance problem in an incomplete market}
Now, suppose the market is incomplete, meaning that the correlation of $\widehat{W}^S,\widehat{W}^Y$ is not perfect, $\abs{\rho}\neq 1$. Then the claim is not perfectly replicable in general. The ensuing \textit{indifference valuation and hedging problem} of the claim is embedded in a \textit{exponential forward performance maximisation framework}. Firstly, we define essential terms of the valuation and hedging theory regardless of the specific forward utility and the information scenario.
\begin{definition}[Value process, indifference price and optimal hedging strategy]
\label{def:value process, indifference price and optimal hedging strategy}
Presume, the investor holds a long position in the stock $S$ and a short position in the claim $C$ on the non-traded asset $Y$ to hedge the stock. The maximal $\widehat{\mathcal{F}}_t$-conditional expected forward performance of \textit{terminal portfolio wealth} $X_T-C(Y_T)$ from trading,
\begin{equation}
\label{eq:value process C}
v^C(t,X_t,S_t,Y_t):=\esssup_{\theta\in\Theta_t}{\mathbb{E}\left[U_T(X_T-C(Y_T))\,\middle\vert\,\widehat{\mathcal{F}}_t\right]},\quad 0\leq t\leq T,
\end{equation}
is called \textit{(primal forward) value process}. When no claim is sold, the value process is
\begin{equation}
\label{eq:value process 0}
v^0(t,X_t,S_t,Y_t):=\esssup_{\theta\in\Theta_t}{\mathbb{E}\left[U_T(X_T)\,\middle\vert\,\widehat{\mathcal{F}}_t\right]},\quad 0\leq t\leq T.
\end{equation}
The terminal values (maximal expected performances) are
\begin{equation}
\label{eq:terminal value}
v^C(T,X_T,S_T,Y_T)=U_T(X_T-C(Y_T)),\quad v^0(T,X_T,S_T,Y_T)=U_T(X_T).
\end{equation}
The \textit{(forward performance) indifference price process} $p$ is defined by (Hodges and Neuberger \cite[p.~226]{HN1989})
\begin{equation}
\label{eq:indifference price}
v^C(t,X_t+p(t,S_t,Y_t),S_t,Y_t)=v^0(t,X_t,S_t,Y_t),\quad 0\leq t\leq T.
\end{equation}
Evaluating the value and indifference price processes given the deterministic point $( X_t,S_t,Y_t)=(x,s,y)$ delivers the \textit{value and indifference price functions} $v(t,x,s,y)$ and $p(t,s,y)$, respectively. We abbreviate with $\mathbb{E}_{t,x,s,y}[\,\cdot\,]$ the conditional expectation $\mathbb{E}[\,\cdot\,\vert\,(t,X_t,S_t,Y_t)=(t,x,s,y)]$. As in Becherer \cite[p.~7]{B2003} defined, the \textit{optimal hedging strategy}
\begin{equation}
\label{eq:optimal hedging strategy}
\theta^H:=(\theta_t^H)_{0\leq t\leq T},\quad\theta_t^H:=\theta_t^C-\theta_t^0,\quad 
\end{equation}
is the difference between the \textit{optimal strategy} $\theta^C:=(\theta_t^C)_{0\leq t \leq T}$ for the problem with the claim (\ref{eq:value process C}) and the optimal strategy $\theta:=(\theta_t^0)_{0\leq t\leq T}$ without the claim (\ref{eq:value process 0}).
\end{definition}
The notion of \textit{essential supremum} $\esssup$ (likewise \textit{essential infimum} $\essinf$) is taken from Karatzas and Shreve \cite[p.~323]{KS1998}. For a real-valued function $f$, it is $\esssup f:=\inf\{a\in\mathbbm{R}\,\vert\,\mu_\mathbbm{R}(f^{-1}(a,\infty))=0\}$ with the Lebesgue measure $\mu_\mathbbm{R}$.\par
By Definition~\ref{def:forward performance process}, it is $v^0(t,X_t,S_t,Y_t)=U_t(X_t^0)$ with the associated optimal wealth $X_t^0$ in absence of the claim. In terms of the stock-weighted trading strategy $\pi=\theta S$, the optimal hedging strategy is $\pi^H=\pi^C-\pi^0$. The portfolio strategies are denoted as $\theta_t=\theta(t,S_t,Y_t)$ ($\pi_t=\pi(t,S_t,Y_t)$) to express them as functions of the asset prices. The indifference price $p$ implicitly defined in (\ref{eq:indifference price}) is also called the \textit{writer's indifference price}, since the option $C$ in the portfolio is sold. The solution to the optimisation problem (\ref{eq:value process C}) in classical utility theory is well-studied in Zariphopoulou \cite{Z2001} and Monoyios \cite[p.~248]{M2004} using the so-called \textit{distortion transformation} to linearise the \textit{Hamilton-Jacobi-Bellman (HJB) equation} for the value function. References for the HJB equation are, for example, Pham \cite[pp.~42--46]{P2009} and \cite[p.~130]{KS1998}.
\begin{theorem}[Optimal strategy in terms of the value process]
\label{thm:optimal strategy in terms of the value process}
The general solution to the maximisation problems (\ref{eq:value process C}), (\ref{eq:value process 0}) with terminal performance values (\ref{eq:terminal value}) in terms of the value process is the optimal strategy process
\begin{equation}
\label{eq:optimal strategy value process}
\theta^*(t,S_t,Y_t)=-\left(\frac{\widehat{\lambda}_t^S v_x+\sigma^S S_t v_{xs}+\rho\sigma^Y Y_t v_{xy}}{\sigma^S S_t v_{xx}}\right),\quad 0\leq t\leq T,
\end{equation}
for $\theta^*=\theta^C,\theta^0$ and $v=v^C,v^0$.
\end{theorem}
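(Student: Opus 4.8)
The plan is to exploit the defining supermartingale/martingale structure of the forward performance process from Definition~\ref{def:forward performance process}: evaluated along any admissible wealth trajectory the value process $v(t,X_t,S_t,Y_t)$ is a $(\mathbb{P},\widehat{\mathbb{F}})$-supermartingale, and it is a martingale along the optimal trajectory. Assuming $v$ is sufficiently smooth in all its arguments and applying It\^{o}'s lemma to $v(t,X_t,S_t,Y_t)$ with the dynamics (\ref{eq:dX partial}) for $X$ and (\ref{eq:dS, dY partial}) for $S,Y$, together with $\widehat{W}_t^Y=\rho\widehat{W}_t^S+\sqrt{1-\rho^2}\,\widehat{W}_t^\perp$, I would split $\d v$ into a finite-variation (drift) part $\mathcal{L}^\pi v\,\d t$ and a local-martingale part. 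The supermartingale property forces $\mathcal{L}^\pi v\leq 0$ for every admissible control $\pi$, whereas martingality at the optimum forces $\mathcal{L}^{\pi^*}v=0$; hence the optimal control must maximise the drift pointwise, $\pi_t^*=\max_\pi\mathcal{L}^\pi v$. This is the forward analogue of the HJB argument referenced before the statement.

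Next I would collect the terms of $\mathcal{L}^\pi v$ that depend on the control. Using the covariations $\d\langle X\rangle_t=(\sigma^S\pi_t)^2\d t$, $\d\langle X,S\rangle_t=(\sigma^S)^2\pi_tS_t\d t$ and $\d\langle X,Y\rangle_t=\rho\sigma^S\sigma^Y\pi_tY_t\d t$, the control-dependent part is the quadratic in $\pi$,
\[
\tfrac12(\sigma^S)^2 v_{xx}\,\pi^2+\sigma^S\left(\widehat{\lambda}_t^S v_x+\sigma^S S_t v_{xs}+\rho\sigma^Y Y_t v_{xy}\right)\pi .
\]
Concavity of $U_t$ in wealth (Definition~\ref{def:forward performance process}(i)) propagates to $v$, so $v_{xx}<0$ and the leading coefficient is strictly negative; the quadratic therefore has a unique interior maximiser determined by the first-order condition. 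Solving gives
\[
\pi_t^*=-\frac{\widehat{\lambda}_t^S v_x+\sigma^S S_t v_{xs}+\rho\sigma^Y Y_t v_{xy}}{\sigma^S v_{xx}} ,
\]
and converting back via $\theta_t^*=\pi_t^*/S_t$ yields exactly (\ref{eq:optimal strategy value process}). The computation is identical for $v=v^C$ and $v=v^0$, since the claim enters only through the terminal condition (\ref{eq:terminal value}) and not the dynamics, producing the two optimisers $\theta^C$ and $\theta^0$ respectively.

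The main obstacle is the rigorous verification that pointwise drift-maximisation identifies the genuine optimiser rather than merely a candidate. One must check that $\pi^*$ is admissible in the sense of Definition~\ref{def:relative entropy and admissible strategies}, that the stochastic-integral part of $\d v$ is a true martingale (not merely local) along the relevant trajectories so that taking $\widehat{\mathcal{F}}_t$-conditional expectations in (\ref{eq:value process C})--(\ref{eq:value process 0}) is legitimate, and that the second-order condition $v_{xx}<0$ holds globally rather than just formally. I would control the integrability through the admissibility requirement $\int_0^t\pi_u^2\d u<\infty$ and the finite-entropy hypothesis defining $\mathcal{M}_{e,f}$, and close the argument by invoking the dynamic programming principle encoded in Definition~\ref{def:forward performance process}; a fully rigorous verification would, strictly, require a localisation argument that I would only sketch here.
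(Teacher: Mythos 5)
Your proposal is correct and follows essentially the same route as the paper: both reduce the problem to the first-order condition of the control-dependent quadratic in the HJB drift, and your quadratic in $\pi$ agrees term-by-term with the paper's equation (\ref{eq:HJB supremum v}) after writing $\pi=\theta s$, yielding (\ref{eq:optimal strategy value process}). The only difference is one of explicitness: the paper simply asserts that $v$ solves the HJB equation and differentiates, whereas you additionally sketch where that equation comes from (It\^{o}'s lemma plus the supermartingale/martingale structure of Definition~\ref{def:forward performance process}) and flag the admissibility, true-martingale and $v_{xx}<0$ verification issues that the paper leaves implicit.
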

\begin{proof}
The value function $v(t,x,s,y)$ solves the non-linear HJB equation
\begin{align}
\label{eq:HJB supremum v}
v_t +\sup_{\theta\in\Theta_t} \Big(&\sigma^S s \widehat{\lambda}_t^S(\theta_t v_x + v_s) + \sigma^Y y\widehat{\lambda}_t^Yv_y + \theta_t\left(\sigma^S s\right)^2v_{xs}+\rho\sigma^S\sigma^Y sy(\theta_t v_{xy}+v_{sy})\notag\\
&+\frac12\left(\sigma^S s\right)^2(\theta_t^2v_{xx}+v_{ss}) + \frac12\left(\sigma^Y y\right)^2v_{yy}\Big)=0,
\end{align}
where the supremum is derived through differentiation with respect to $\theta_t$,
\begin{equation*}
\sigma^S s\widehat{\lambda}_t^S v_x + \left(\sigma^S s\right)^2 v_{xs}+ \rho\sigma^S\sigma^Y syv_{xy} + \left(\sigma^S s\right)^2\theta_t^* v_{xx}=0,
\end{equation*}
which gives the \textit{optimal strategy function} $\theta^*(t,s,y)$. Evaluating the optimal strategy function at the random point $(t,S_t,Y_t)$  provides the optimal strategy process (\ref{eq:optimal strategy value process}).
\end{proof}
In terms of the cash value, the optimal strategy is
\begin{equation}
\label{eq:optimal strategy value function cash}
\pi_t^*=\theta_t^*S_t=-\frac{\widehat{\lambda}_t^S v_x}{\sigma^Sv_{xx}}-\frac{S_tv_{xs}}{v_{xx}}-\frac{\rho\sigma^Y Y_t v_{xy}}{\sigma^S  v_{xx}}.
\end{equation}
The first term $\pi_t^M:=-\frac{\widehat{\lambda}_t^S v_x}{\sigma^Sv_{xx}}$ of (\ref{eq:optimal strategy value function cash}) is called the \textit{Merton strategy}, because it made its first appearance in \cite[p.~250]{Me1969} as Merton's optimal solution in the setting of a simplified market with only one asset. Since in the full information scenario the value process does not directly depend on $S$ regarding the known MPR $\lambda^S$, the mixed partial derivative $v_{xs}$ is zero and thus the \textit{partial information component} $\pi_t^S:=\frac{S_tv_{xs}}{v_{xx}}$ of the strategy vanishes. The last term $\pi_t^Y:=\frac{\rho\sigma^Y Y_tv_{xy}}{v_{xx}}$ is induced by the claim on the non-traded asset $Y$ and reflects the \textit{hedging component} of the strategy. The sensitivity of the marginal utility of wealth with respect to changes of the option's price is measured by $v_{xy}$. For the uncorrelated case $\rho=0$, the hedging component $\pi_t^Y$ becomes zero and the stock cannot be hedged by the option. Therefore, the optimal strategy in the uncorrelated full information scenario is identical to the one of Merton.\par
By Proposition~\ref{thm:asymptotically linear local risk tolerance}, the limiting case (\ref{eq:exponential}) of the exponential linear local risk tolerance $\lim_{\alpha\to 0}r(x,t)=r_e=\sqrt{\beta}$ corresponds to the exponential utility function $u(x,t)=-e^{-\frac{x}{\sqrt{\beta}}+\frac{t}{2}}$. By Corollary~\ref{thm:local risk aversion}, the utility has the more familiar form $u(x,t)=-e^{-\gamma x+\frac{t}{2}}$ with the local risk aversion $\gamma=1/r_e=1/\sqrt{\beta}>0$. Applying Proposition~\ref{thm:forward performance process and general optimal strategy}, the \textit{exponential forward performance process} is
\begin{equation}
\label{eq:exponential forward performance}
U_t(x)=u(x,A_t)=-\exp\left(-\gamma x+\frac12\int_0^t\left(\widehat{\lambda}_u^S\right)^2\d u\right),\quad (x,t)\in\mathbbm{R}\times[0,\infty).
\end{equation}
In comparison to the classical exponential utility function $u(x)=-e^{-\gamma x}$ the dynamic utility decreases in time, valuing less future utility. The primal value process (\ref{eq:value process C}) is the maximal expected forward performance
\begin{equation}
\label{eq:primal problem}
\!\!v^C(t,X_t,S_t,Y_t)=\esssup_{\theta\in\Theta_t}\mathbb{E}\left[-\exp\left(-\gamma(X_T-C(Y_T))+\frac12\int_0^T\left(\widehat{\lambda}_u^S\right)^2\d u\right)\,\middle\vert\,\widehat{\mathcal{F}}_t\right].\!\!
\end{equation}
\begin{remark}[Wealth independence of indifference price]
As with classical utility, the forward indifference price $p$ defined in (\ref{eq:indifference price}) is independent of initial wealth $X_t$ under $\widehat{\mathcal{F}}_t$. This becomes clear, when regarding at
\begin{equation*}
v^C(t,X_t,S_t,Y_t)=e^{-\gamma X_t+\frac12\int_0^t\left(\widehat{\lambda}_u^S\right)^2\d u}\esssup_{\theta\in\Theta_t}\mathbb{E}\left[-e^{-\gamma\left(\int_t^T\theta_u\d S_u - C(Y_T)\right)+\frac12\int_t^T\left(\widehat{\lambda}_u^S\right)^2\d u}\,\middle\vert\,\widehat{\mathcal{F}}_t\right],
\end{equation*}
since it is $X_T=X_t+\int_t^T\theta_u\d S_u$ by (\ref{eq:dX}), where $X_t$ factors out of the problem.
\end{remark}
Since under full information, the stock's MPR $\lambda^S$ is observable and therefore a known constant with respect to the background filtration $\mathbb{F}$, the trade-off process $A_t=\int_0^t \left(\lambda^S\right)^2\d s=\left(\lambda^S\right)^2 t$ becomes a deterministic linear function of time. Hence, the investor's risk preference simplifies to the exponential performance process
\begin{equation*}
U_t(x)=-e^{-\gamma x+\frac12\left(\lambda^S\right)^2t},\quad (x,t)\in\mathbbm{R}\times[0,\infty).
\end{equation*}
Factoring out the MPR term, one gets the classical primal problem
\begin{equation*}
v^C(t,X_t,S_t,Y_t)=e^{\frac12\left(\lambda^S\right)^2T}\esssup_{\theta\in\Theta_t}\mathbb{E}\left[-e^{-\gamma (X_T-C(Y_T))}\,\vert\,\mathcal{F}_t\right],\quad 0\leq t\leq T.
\end{equation*}
\subsection{Dual representation of the stochastic control problem}
In the 1970s and 80s, Bismut \cite{Bi1973}, Karatzas et al. \cite{KLS1987}, \cite{KLSX1991} and Cox \& Huang \cite{CH1989} realised that the use \textit{dual methods} from convex analysis provided valuable comprehension of solutions to optimal stochastic control problems, which are more general than the original problem from Merton \cite{Me1969}. Kramkov and Schachermayer \cite{KS1999} studied the dual approach for solving maximisation problems under classical utility. Rogers \cite{R2003} delved deeper into the theory by applying methods from functional analysis and presenting various examples solved with duality methods. We follow \v{Z}itkovi\'{c} \cite{Zi2009} and Berrier et al. \cite{BRT2009} to briefly introduce the dual approach for solving forward performance maximisation problems.  Firstly, recall the notion of relative entropy $\mathcal{H}(\mathbb{Q},\mathbb{P})$ for equivalent local martingale measures $\mathbb{Q}\sim\mathbb{P}$ from Definition~\ref{def:relative entropy and admissible strategies}. For the subset $\mathcal{M}_{e,f}$ of these measures with finite relative entropy, we introduce the \textit{Radon-Nikodym derivative} (see Shreve \cite[pp.~65--79]{S2005} or Platen and Heath \cite[pp.~338--339]{PH2010}), allowing us to perform measure changes.
\begin{definition}[Radon-Nikodym derivative process]
\label{def:Radon-Nikodym derivative process}
For measures $\mathbb{Q}\in\mathcal{M}_{e,f}$, the positive likelihood ratio $(\mathbb{P},\widehat{\mathbb{F}})$-martingale process $Z^\mathbb{Q}=(Z_t^\mathbb{Q})_{0\leq t\leq T}$ defined by
\begin{equation}
\label{eq:Radon-Nikodym derivative process}
Z_t^\mathbb{Q}=\left.\frac{\d{\mathbb{Q}}}{\d{\mathbb{P}}}\right\vert_{\widehat{\mathcal{F}}_t},
\end{equation}
is called the \textit{Radon-Nikodym derivative process}. It is the \textit{density process} of $\mathbb{Q}$ with respect to $\mathbb{P}$.
\end{definition}
For admissible portfolio strategies (\ref{eq:admissible strategies}), $Z^\mathbb{Q}X:=(Z_t^\mathbb{Q}X_t)_{0\leq t\leq T}$ is a non-negative $(\mathbb{P},\widehat{\mathbb{F}})$-local martingale, hence a supermartingale satisfying
\begin{equation*}
Z_0^\mathbb{Q}=1,\quad \mathbb{E}[Z_T^\mathbb{Q}]=1,\quad \mathbb{E}[Z_T^\mathbb{Q}X_T\,\vert\,\widehat{\mathcal{F}}_t]\leq Z_t^\mathbb{Q}X_t\;\text{ almost surely}
\end{equation*}
(see \cite[pp.~2180--2181]{Zi2009} and \cite[p.~1]{BRT2009}).\par
The map $\mathbb{Q}\mapsto Z^\mathbb{Q}$ induced by (\ref{eq:Radon-Nikodym derivative process}) creates an one-to-one correspondence between the class $\mathcal{M}_{e,f}$ of equivalent locale martingale measures with finite relative entropy and the set of density processes $\mathcal{Z}:=\left\{Z^\mathbb{Q}\mid\mathbb{Q}\in\mathcal{M}_{e,f}\right\}$. If $t=T$, then the Radon-Nikodym derivative is $Z_T^\mathbb{Q}=\frac{\d{\mathbb{Q}}}{\d{\mathbb{P}}}$ and the relative entropy (\ref{eq:relative entropy}) can be expressed by
\begin{equation*}
\mathcal{H}(\mathbb{Q},\mathbb{P})=\mathbb{E}\left[Z_T^\mathbb{Q}\log Z_T^\mathbb{Q}\right]=\mathbb{E}^\mathbb{Q}\left[\log Z_T^\mathbb{Q}\right],
\end{equation*}
where $\mathbb{E}^\mathbb{Q}$ denotes the expectation with respect to $\mathbb{Q}$, whereas $\mathbb{E}$ is the $\mathbb{P}$-expectation. The relative entropy can be interpreted as a measure of distance, even though it is not a metric. The density process and relative entropy will be generalised to conditional versions in order to formulate the dual problem to Definition~\ref{def:value process, indifference price and optimal hedging strategy}.
\begin{definition}[Conditional density and conditional relative entropy]
\label{def:conditional relative entropy}
The ratio
\begin{equation}
\label{eq:conditional density}
Z_{t,T}^\mathbb{Q}:=\frac{Z_T^\mathbb{Q}}{Z_t^\mathbb{Q}},\quad 0\leq t\leq T,
\end{equation}
is called \textit{conditional density process} and motivates the \textit{conditional relative entropy} 
\begin{equation}
\label{eq:conditional relative entropy}
\mathcal{H}_t(\mathbb{Q},\mathbb{P}):=\mathbb{E}^\mathbb{Q}\left[\log Z_{t,T}^\mathbb{Q}\,\middle\vert\,\widehat{\mathcal{F}}_t\right]
\end{equation}
of $\mathbb{Q}$ with respect to $\mathbb{P}$ over the interval $[t,T]$.
\end{definition}
At $t=0$, the conditionality becomes trivial with density $Z_{0,T}^\mathbb{Q}=Z_T^\mathbb{Q}$ and relative entropy $\mathcal{H}_0(\mathbb{Q},\mathbb{P})=\mathcal{H}(\mathbb{Q},\mathbb{P})$. Frittelli \cite[p.~42]{F2000} showed the existence and uniqueness of a \textit{minimal entropy martingale measure (MEMM)} $\mathbb{Q}^E$, that minimises $\mathcal{H}(\mathbb{Q},\mathbb{P})$ over all $\mathbb{Q}\in\mathcal{M}_{e,f}$. According to Kabanov and Stricker \cite[pp.~131--132]{KS2002}, $\mathbb{Q}^E$ also minimises the $\mathcal{H}_t(\mathbb{Q},\mathbb{P})$ for an arbitrary $t\in[0,T]$, so that we can write
\begin{equation}
\label{eq:minimal martingale measure}
\mathbb{Q}^E:=\argmin_{\mathbb{Q}\in\mathcal{M}_{e,f}} \mathcal{H}_t(\mathbb{Q},\mathbb{P}).
\end{equation}
We say, the \textit{minimal conditional density process} $(Z_{t,T}^{\mathbb{Q}^E})_{0\leq t\leq T}$ minimises the \textit{conditional relative entropy process} $(\mathcal{H}_t(\mathbb{Q},\mathbb{P}))_{0\leq t\leq T}$.\par
Our aim is to give the optimal strategy from Theorem~\ref{thm:optimal strategy in terms of the value process} in terms of derivatives of the indifference price from Definition~\ref{def:value process, indifference price and optimal hedging strategy}, which we will approach through framing the \textit{dual problem}.
\begin{definition}[Convex conjugate (dual) performance and its inverse marginal]
\label{def:convex conjugate}
The \textit{(convex) conjugate} (or \textit{dual}) $\tilde{U}_t\colon (0,\infty)\to\mathbb{R}$ of the performance $U_t$ is
\begin{equation}
\label{eq:convex dual}
\tilde{U}_t(\tilde{x})=\esssup_{x>0}\left[U_t(x)-x\tilde{x}\right]=U_t(I_t(\tilde{x}))-\tilde{x}I(\tilde{x}),\quad t\geq 0,\,\tilde{x}>0,
\end{equation}
where $I_t:=(U_t')^{-1}$ denotes its \textit{inverse of the marginal} $U_t':=\frac{\d{}}{\d x}U_t$ satisfying
\begin{equation}
\label{eq:inverse of marginal}
U_t'(I_t(\tilde{x}))=I_t(U_t'(\tilde{x}))=\tilde{x},\quad t\geq 0,\,\tilde{x}>0.
\end{equation}
The conjugate function $\tilde{U}_t$ solves the bidual relation
\begin{equation*}
U_t(x)=\essinf_{\tilde{x}>0}\left[\tilde{U}_t(\tilde{x})+x\tilde{x}\right]=\tilde{U}_t(U_t'(x))+xU_t'(x),\quad t\geq 0,\, x>0,
\end{equation*}
as well as $\tilde{U}_t(\tilde{x})\geq U_t(x)-x\tilde{x}$ with equality if and only if $x=I_t(\tilde{x})$. The \textit{marginal dual performance} $\tilde{U}_t'$ satisfies the identity $\tilde{U}_t'(\tilde{x})=-I_t(\tilde{x})$.
\end{definition}
Both $U_t'$ and $I_t$ are continuous, strictly decreasing and map $(0,\infty)$ onto itself satisfying the \textit{Inada conditions} (see F\"are and Primont \cite{FP2002}, Inada \cite{I1963})
\begin{equation*}
I_t(0^+)=U_t'(0^+)=\infty,\quad I_t(\infty)=U_t'(\infty)=0,
\end{equation*}
where we have abbreviated $I_t(0^+)=\infty$ for the limit $\lim_{\tilde{x}\to 0^+}I_t(\tilde{x})=\infty$ (analogous the other limits). The conjugate function $\tilde{U}_t$ is convex, decreasing, continuously differentiable with the limits
\begin{equation*}
\tilde{U}_t'(0+)=-\infty,\quad\tilde{U}_t'(\infty)=0^+,\quad \tilde{U}_t(0^+)=U_t(\infty),\quad \tilde{U}_t(\infty)=U_t(0^+).
\end{equation*}
The dual function $\tilde{U}_t(\tilde{x})$ is the \textit{Legendre-transform} of $-U_t(-x)$ (cf.~Rockafellar \cite[p.~251]{R1970}). Pliska \cite{P1986} showed some useful applicatios for computing value functions and optimal strategies. The methods and the exposition of the results given there are similar to the corresponding methods used by \cite{R1970}.
\begin{lemma}[Dual value process and dual problem]
\label{thm:dual value process}
For the primal value function $v=v^C$ from Definition~\ref{def:value process, indifference price and optimal hedging strategy} the \textit{dual value process} is
\begin{equation}
\label{eq:dual value process}
\tilde{v}(t,\tilde{X}_t,S_t,Y_t):=\essinf_{Z^\mathbb{Q}\in\mathcal{Z}}\mathbb{E}\left[\tilde{U}_T(\tilde{X}_tZ_{t,T}^\mathbb{Q})-\tilde{X}_tZ_{t,T}^\mathbb{Q} C(Y_T)\,\middle\vert\,\widehat{\mathcal{F}}_t\right],\quad 0\leq t\leq T.
\end{equation}
The primal and dual value functions are conjugate with respect to the wealth space,
\begin{equation}
\begin{aligned}
\label{eq:bidual relation}
\tilde{v}(t,\tilde{x},s,y)&=\sup_{x>0}{[v(t,x,s,y)-x\tilde{x}]},\quad \tilde{x}>0,\\
v(t,x,s,y)&=\inf_{\tilde{x}>0}{[\tilde{v}(t,\tilde{x},s,y)+x\tilde{x}]},\quad x>0.
\end{aligned}
\end{equation}
The partial derivatives of he primal and dual value functions at the optimum are related by
\begin{equation}
\label{eq:conjugate derivative}
v_x(t,x^*,s,y)=\tilde{x}^*,\quad\tilde{v}_{\tilde{x}}(t,\tilde{x}^*,s,y)=-x^*.
\end{equation}
\end{lemma}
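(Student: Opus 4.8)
The plan is to establish the stated conjugacy along the standard convex-duality route, in two directions. First I would prove \emph{weak duality}, namely the inequality $\sup_{x>0}[v(t,x,s,y)-x\tilde{x}]\leq\tilde{v}(t,\tilde{x},s,y)$, by combining the pointwise Fenchel--Young inequality from Definition~\ref{def:convex conjugate} with the budget constraint encoded in the admissibility requirement~(\ref{eq:admissible strategies}). Fix $\tilde{x}=\tilde{X}_t>0$, an admissible $\theta\in\Theta_t$ with terminal wealth $X_T$, and a density $Z^{\mathbb{Q}}\in\mathcal{Z}$. Applying $U_T(a)\leq\tilde{U}_T(b)+ab$ with $a=X_T-C(Y_T)$ and $b=\tilde{x}Z_{t,T}^{\mathbb{Q}}$ and taking $\widehat{\mathcal{F}}_t$-conditional expectations yields
\begin{equation*}
\mathbb{E}\!\left[U_T(X_T-C(Y_T))\,\middle\vert\,\widehat{\mathcal{F}}_t\right]\leq\mathbb{E}\!\left[\tilde{U}_T(\tilde{x}Z_{t,T}^{\mathbb{Q}})-\tilde{x}Z_{t,T}^{\mathbb{Q}}C(Y_T)\,\middle\vert\,\widehat{\mathcal{F}}_t\right]+\tilde{x}\,\mathbb{E}\!\left[X_TZ_{t,T}^{\mathbb{Q}}\,\middle\vert\,\widehat{\mathcal{F}}_t\right].
\end{equation*}
The crucial simplification is that, for $\mathbb{Q}\in\mathcal{M}_{e,f}$ and admissible $\theta$, the process $(\theta\cdot S)$ is a $(\mathbb{Q},\widehat{\mathbb{F}})$-martingale, so by Bayes' rule $\mathbb{E}[X_TZ_{t,T}^{\mathbb{Q}}\,\vert\,\widehat{\mathcal{F}}_t]=\mathbb{E}^{\mathbb{Q}}[X_T\,\vert\,\widehat{\mathcal{F}}_t]=X_t=x$. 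Substituting this, subtracting $x\tilde{x}$, and then passing to the essential supremum over $\theta\in\Theta_t$ on the left and the essential infimum over $Z^{\mathbb{Q}}\in\mathcal{Z}$ on the right delivers $v(t,x,s,y)-x\tilde{x}\leq\tilde{v}(t,\tilde{x},s,y)$ for every $x>0$, which is the weak-duality inequality.

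For the reverse inequality, i.e.\ \emph{strong duality} giving equality in the first line of~(\ref{eq:bidual relation}), I would produce the optimisers that saturate each step above. Equality in Fenchel--Young forces the pointwise first-order condition $X_T^{\ast}-C(Y_T)=I_T(\tilde{x}Z_{t,T}^{\mathbb{Q}^{\ast}})$ relating the optimal terminal wealth to the minimising density, while equality in the martingale step requires the optimal strategy to be genuinely admissible (a true martingale, not merely a supermartingale) under $\mathbb{Q}^{\ast}$. This is the main obstacle: one must exhibit a dual minimiser $\mathbb{Q}^{\ast}\in\mathcal{M}_{e,f}$ and verify that the candidate terminal wealth built from $I_T(\tilde{x}Z_{t,T}^{\mathbb{Q}^{\ast}})$ is financed by an admissible strategy matching the budget. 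In the present exponential forward setting this can be carried out more explicitly, since the dual minimiser is the minimal entropy measure (equivalently, in the filtered model, the minimal martingale measure) identified in~(\ref{eq:minimal martingale measure}) and the exponential conjugate $\tilde{U}_T$ has integrable image under these measures; alternatively one may invoke the abstract attainment results of \v{Z}itkovi\'{c}~\cite{Zi2009} and Berrier et al.~\cite{BRT2009} for forward performance processes.

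The bidual relation, the second line of~(\ref{eq:bidual relation}), then follows automatically: $v(t,\cdot,s,y)$ is concave, being an essential supremum of conditional expectations of the concave $U_T$ over a wealth-linear family of terminal values, and is upper semicontinuous, so the Fenchel--Moreau theorem gives $v=\tilde{\tilde{v}}$, that is $v(t,x,s,y)=\inf_{\tilde{x}>0}[\tilde{v}(t,\tilde{x},s,y)+x\tilde{x}]$. Finally, the derivative relations~(\ref{eq:conjugate derivative}) drop out of the conjugacy once differentiability of the value functions is assumed: the first-order condition for the supremum defining $\tilde{v}$ gives $v_x(t,x^{\ast},s,y)=\tilde{x}^{\ast}$ at the conjugate point, and the envelope theorem applied to $\tilde{v}(t,\tilde{x},s,y)=v(t,x^{\ast}(\tilde{x}),s,y)-x^{\ast}(\tilde{x})\tilde{x}$ yields $\tilde{v}_{\tilde{x}}(t,\tilde{x}^{\ast},s,y)=-x^{\ast}$, consistent with the marginal identity $\tilde{U}_t'=-I_t$ recorded in Definition~\ref{def:convex conjugate}.
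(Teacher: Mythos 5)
Your outline is essentially correct, but it is worth noting that the paper does not prove this lemma at all: it simply cites Kramkov--Schachermayer, Rogers and \v{Z}itkovi\'{c}, so what you have written is a sketch of the argument that those references actually carry out. Your weak-duality step is sound: Fenchel--Young applied to $U_T$ at $X_T-C(Y_T)$ against $\tilde{x}Z_{t,T}^{\mathbb{Q}}$, combined with the Bayes identity $\mathbb{E}[X_TZ_{t,T}^{\mathbb{Q}}\,\vert\,\widehat{\mathcal{F}}_t]=\mathbb{E}^{\mathbb{Q}}[X_T\,\vert\,\widehat{\mathcal{F}}_t]=x$, which is exactly what the admissibility class~(\ref{eq:admissible strategies}) is designed to guarantee (this is also where the restriction to $\mathcal{M}_{e,f}$ earns its keep, since finite relative entropy is what makes $\mathbb{E}[\tilde{U}_T(\tilde{x}Z_{t,T}^{\mathbb{Q}})\,\vert\,\widehat{\mathcal{F}}_t]$ well defined for the exponential conjugate). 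You also correctly locate the real difficulty in the attainment step, and deferring it to the abstract existence results of \v{Z}itkovi\'{c} and Berrier et al.\ is no less rigorous than what the paper itself does. One imprecision to fix: for the problem \emph{with} the claim, the dual minimiser of~(\ref{eq:dual value process}) is not the minimal entropy martingale measure of~(\ref{eq:minimal martingale measure}) but the minimiser of the claim-adjusted functional $\mathcal{H}_t(\mathbb{Q},\mathbb{P})-\gamma\mathbb{E}^{\mathbb{Q}}[C(Y_T)+\tfrac{1}{2\gamma}A_T\,\vert\,\widehat{\mathcal{F}}_t]$ appearing in Theorem~\ref{thm:dual forward performance problem}; the minimal martingale measure only emerges later, after the trade-off term cancels in the indifference price, not as the optimiser of $v^C$ itself. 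Finally, when you assert concavity of $v(t,\cdot,s,y)$ for the Fenchel--Moreau step, the correct justification is that the set of attainable terminal gains $\{(\theta\cdot S)_{t,T}:\theta\in\Theta_t\}$ is convex and independent of $x$, so one may average strategies along with initial wealths; the supremum of concave functions is not concave in general, and your phrase ``wealth-linear family'' should be expanded to make this explicit.
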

\begin{proof}
We refer to the theorems in Kramkov and Schachermayer \cite[pp.~908--911]{KS1999}, Rogers \cite[pp.~107--113]{R2003} and \v{Z}itkovi\'{c} \cite[pp.~2184--2188]{Zi2009}.
\end{proof}
As noticed in Mania and Schweizer \cite[p.~2116]{MS2005}, the terminology ``primal'' corresponds for any problem optimising over the portfolio strategy and ``dual'' when the optimiser is the density $Z^\mathbb{Q}$ resp. measure $\mathbb{Q}$.\par
Since the main results of duality theory for solving stochastic control problems are worked out, they will be applied to the primal performance maximisation problem (\ref{eq:primal problem}) of exponential forward type $U_t(X_t)=-\exp\left(-\gamma X_t+\frac12A_t\right)$ to obtain the corresponding dual problem. The next theorem covers the valuation of the dual performance process and the dual entropic representation of the problem.
\begin{theorem}[Dual forward performance problem]
\label{thm:dual forward performance problem}
The dual representation of the primal optimisation problem (\ref{eq:primal problem}) is given by
\begin{equation}
\label{eq:dual problem}
\!\!v^C(t,X_t,S_t,Y_t)=-\exp\!\left(\!-\gamma X_t-\essinf_{Z^\mathbb{Q}\in\mathcal{Z}}\!\left(\!\mathcal{H}_t(\mathbb{Q},\mathbb{P})\!-\!\gamma \mathbb{E}^\mathbb{Q}\!\left[C(Y_T)+\frac{1}{2\gamma} A_T\,\middle\vert\widehat{\mathcal{F}}_t\right]\!\right)\!\!\right)\!\!.\!\!
\end{equation}
\end{theorem}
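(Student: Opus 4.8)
The plan is to specialise the general duality machinery of Lemma~\ref{thm:dual value process} to the exponential forward performance $U_t(x)=-\exp(-\gamma x+\tfrac12 A_t)$ and then carry out explicitly the two optimisations into which the lemma factors the problem: the entropy-type minimisation over $\mathcal{Z}$ in the dual value process (\ref{eq:dual value process}), and the one-dimensional conjugation in the bidual relation (\ref{eq:bidual relation}). First I would compute the convex conjugate $\tilde{U}_t$ of Definition~\ref{def:convex conjugate}. The marginal is $U_t'(x)=\gamma\exp(-\gamma x+\tfrac12 A_t)$, so the inverse marginal is $I_t(\tilde{x})=\tfrac{1}{\gamma}\!\left(\tfrac12 A_t-\log(\tilde{x}/\gamma)\right)$, and substituting into $\tilde{U}_t(\tilde{x})=U_t(I_t(\tilde{x}))-\tilde{x}I_t(\tilde{x})$ gives the explicit entropic conjugate
\[
\tilde{U}_t(\tilde{x})=\frac{\tilde{x}}{\gamma}\left(\log(\tilde{x}/\gamma)-1-\tfrac12 A_t\right).
\]
The essential feature is that, up to the affine-in-$A_t$ correction, $\tilde{U}_t$ is the $x\log x$ integrand whose expectation produces relative entropy; this is exactly what will let $\mathcal{H}_t(\mathbb{Q},\mathbb{P})$ appear.

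Next I would substitute $\tilde{U}_T(\tilde{X}_t Z_{t,T}^{\mathbb{Q}})$ into (\ref{eq:dual value process}), write $\log(\tilde{X}_t Z_{t,T}^{\mathbb{Q}}/\gamma)=\log(\tilde{X}_t/\gamma)+\log Z_{t,T}^{\mathbb{Q}}$, and factor the positive, $\widehat{\mathcal{F}}_t$-measurable quantity $\tilde{X}_t/\gamma$ out of the conditional expectation. I would then evaluate term by term. Since $Z^{\mathbb{Q}}$ is a $(\mathbb{P},\widehat{\mathbb{F}})$-martingale one has $\mathbb{E}[Z_{t,T}^{\mathbb{Q}}\mid\widehat{\mathcal{F}}_t]=1$, which collapses the $\log(\tilde{X}_t/\gamma)$ and $-1$ contributions to $\mathbb{Q}$-independent, $\widehat{\mathcal{F}}_t$-measurable constants; the term $\mathbb{E}[Z_{t,T}^{\mathbb{Q}}\log Z_{t,T}^{\mathbb{Q}}\mid\widehat{\mathcal{F}}_t]$ is, by Definition~\ref{def:conditional relative entropy}, precisely $\mathcal{H}_t(\mathbb{Q},\mathbb{P})$; and the remaining two terms convert to $\mathbb{Q}$-expectations via $\mathbb{E}^{\mathbb{Q}}[\,\cdot\mid\widehat{\mathcal{F}}_t]=\mathbb{E}[Z_{t,T}^{\mathbb{Q}}\,\cdot\mid\widehat{\mathcal{F}}_t]$. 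Collecting these, and noting $\tfrac12 A_T=\gamma\cdot\tfrac{1}{2\gamma}A_T$, the dual value process reduces to
\[
\tilde{v}(t,\tilde{X}_t,S_t,Y_t)=\frac{\tilde{X}_t}{\gamma}\left(\log(\tilde{X}_t/\gamma)-1+\mathcal{G}_t\right),
\]
where
\[
\mathcal{G}_t:=\essinf_{Z^{\mathbb{Q}}\in\mathcal{Z}}\left(\mathcal{H}_t(\mathbb{Q},\mathbb{P})-\gamma\,\mathbb{E}^{\mathbb{Q}}\!\left[C(Y_T)+\tfrac{1}{2\gamma}A_T\,\middle\vert\,\widehat{\mathcal{F}}_t\right]\right).
\]

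Finally I would recover $v^C$ through the bidual relation (\ref{eq:bidual relation}), $v(t,x,s,y)=\inf_{\tilde{x}>0}[\tilde{v}(t,\tilde{x},s,y)+x\tilde{x}]$. Because $\mathcal{G}_t$ does not depend on $\tilde{x}$, the inner minimisation is a one-dimensional strictly convex problem in $\tilde{x}$; setting the derivative to zero gives the minimiser $\tilde{x}^{*}=\gamma\exp(-\gamma X_t-\mathcal{G}_t)$, and substituting back collapses the expression to $v^C=-\tilde{x}^{*}/\gamma=-\exp(-\gamma X_t-\mathcal{G}_t)$, which is exactly the claimed representation (\ref{eq:dual problem}). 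The step requiring the most care is the bookkeeping in the substitution: one must verify that the $\mathbb{Q}$-independent pieces (the $\log(\tilde{X}_t/\gamma)-1$ terms) separate cleanly from $\essinf_{\mathbb{Q}}$, so that the entropy minimisation decouples from the $\tilde{x}$-minimisation, and that after factoring out $\gamma$ the coefficient of $A_T$ inside the $\mathbb{Q}$-expectation is precisely $\tfrac{1}{2\gamma}$. The genuine analytic content, namely that the primal and dual value functions are conjugate in the wealth variable and that the infima are attained, is supplied by Lemma~\ref{thm:dual value process}; what remains is the explicit exponential computation above.
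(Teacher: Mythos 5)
Your proposal is correct and follows essentially the same route as the paper: compute $I_t$ and the entropic conjugate $\tilde{U}_t$, substitute into the dual value process using $\mathbb{E}[Z_{t,T}^{\mathbb{Q}}\mid\widehat{\mathcal{F}}_t]=1$ and the identification $\mathbb{E}[Z_{t,T}^{\mathbb{Q}}\log Z_{t,T}^{\mathbb{Q}}\mid\widehat{\mathcal{F}}_t]=\mathcal{H}_t(\mathbb{Q},\mathbb{P})$, and then invert through the bidual relation. Your explicit first-order condition for the $\tilde{x}$-minimisation is exactly the relation $\tilde{v}_{\tilde{x}}(t,\tilde{x}^*,s,y)=-x^*$ that the paper invokes from (\ref{eq:conjugate derivative}), so the two arguments coincide.
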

\begin{proof}
Definition~\ref{def:convex conjugate} is considered to calculate the dual performance process. Inserting the derivative $U_t'(x)=\gamma\exp\left(-\gamma x+\frac12 A_t\right)$ into (\ref{eq:inverse of marginal}), leads to its inverse
\begin{equation*}
I_t(\tilde{x})=-\frac{1}{\gamma}\left(\log\left(\frac{\tilde{x}}{\gamma}\right)-\frac12 A_t\right).
\end{equation*}
Putting the inverse into (\ref{eq:convex dual}), provides the dual performance
\begin{equation}
\label{eq:dual performance process}
\tilde{U}_t(\tilde{x})=U_t(I_t(\tilde{x}))-\tilde{x}I(\tilde{x})=\frac{\tilde{x}}{\gamma}\left(\log\left(\frac{\tilde{x}}{\gamma}\right)-1-\frac12 A_t\right).
\end{equation}
Before moving to the dual value function, consider the conditional relative entropy
\begin{align}
\notag
\label{eq:conditional entropy under P-expectation}
\mathcal{H}_t(\mathbb{Q},\mathbb{P})&\xlongequal{(\ref{eq:conditional relative entropy})}\mathbb{E}^\mathbb{Q}\left[\log Z_{t,T}^\mathbb{Q}\,\middle\vert\,\widehat{\mathcal{F}}_t\right]=\frac{1}{Z_t^\mathbb{Q}}\mathbb{E}\left[Z_T^\mathbb{Q}\log Z_{t,T}^\mathbb{Q}\,\middle\vert\,\widehat{\mathcal{F}}_t\right]\\&=\mathbb{E}\left[\frac{Z_T^\mathbb{Q}}{Z_t^\mathbb{Q}}\log Z_{t,T}^\mathbb{Q}\,\middle\vert\,\widehat{\mathcal{F}}_t\right]\xlongequal{(\ref{eq:conditional density})}\mathbb{E}\left[Z_{t,T}^\mathbb{Q}\log Z_{t,T}^\mathbb{Q}\,\middle\vert\,\widehat{\mathcal{F}}_t\right],
\end{align}
where in the second equation we have applied Lemma 5.2.2 from Shreve \cite[p.~212]{S2004}, concerning the expression of conditional expectations of random variables under measure change. Obviously, the conditional density has the expectation
\begin{equation}
\label{eq:conditional density expectation}
\mathbb{E}\left[Z_{t,T}^\mathbb{Q}\,\middle\vert\,\widehat{\mathcal{F}}_t\right]=\frac{1}{Z_t^\mathbb{Q}}\mathbb{E}\left[Z_T^\mathbb{Q}\,\middle\vert\,\widehat{\mathcal{F}}_t\right]=\mathbb{E}^\mathbb{Q}\left[1\,\middle\vert\,\widehat{\mathcal{F}}_t\right]=1.
\end{equation}
Then, by Lemma~\ref{thm:dual value process}, the dual value function is given by
\begin{align}
\label{eq:dual value function entropic representation}
\notag
\tilde{v}(t,\tilde{x},s,y)&\xlongequal{(\ref{eq:dual value process})}\essinf_{Z^\mathbb{Q}\in\mathcal{Z}}\mathbb{E}_{t,x,s,y}\left[\tilde{U}_T(\tilde{x}Z_{t,T}^\mathbb{Q})-\tilde{x}Z_{t,T}^\mathbb{Q} C(Y_T)\right]\\\notag
&\xlongequal{(\ref{eq:dual performance process})}\essinf_{Z^\mathbb{Q}\in\mathcal{Z}}\mathbb{E}_{t,x,s,y}\left[\frac{\tilde{x}}{\gamma}Z_{t,T}^\mathbb{Q}\left(\log\left(\frac{\tilde{x}}{\gamma}Z_{t,T}^\mathbb{Q}\right)-1-\frac12 A_T\right)-\tilde{x}Z_{t,T}^\mathbb{Q}C(Y_T)\right]\\\notag
&=\frac{\tilde{x}}{\gamma}\left(\log\left(\frac{\tilde{x}}{\gamma
}\right)-1\right)\essinf_{Z^\mathbb{Q}\in\mathcal{Z}}\mathbb{E}_{t,x,s,y}\left[Z_{t,T}^\mathbb{Q}\right]\\\notag
&\quad+\frac{\tilde{x}}{\gamma}\essinf_{Z^\mathbb{Q}\in\mathcal{Z}}\mathbb{E}_{t,x,s,y}\left[Z_{t,T}^\mathbb{Q}\log Z_{t,T}^\mathbb{Q}-\gamma Z_{t,T}^\mathbb{Q}\left(C(Y_T)+\frac{1}{2\gamma}A_T\right)\right]\\\notag
&\xlongequal{(\ref{eq:dual performance process}), (\ref{eq:conditional density expectation})}\tilde{U}_0(\tilde{x})+\frac{\tilde{x}}{\gamma}\essinf_{Z^\mathbb{Q}\in\mathcal{Z}}\mathbb{E}_{t,x,s,y}\left[Z_{t,T}^\mathbb{Q}\log Z_{t,T}^\mathbb{Q}-\gamma Z_{t,T}^\mathbb{Q}\left(C(Y_T)+\frac{1}{2\gamma}A_T\right)\right]\\
&\xlongequal{(\ref{eq:conditional entropy under P-expectation})}\tilde{U}_0(\tilde{x})+\frac{\tilde{x}}{\gamma}\essinf_{Z^\mathbb{Q}\in\mathcal{Z}}\left(\mathcal{H}_t(\mathbb{Q},\mathbb{P})-\gamma \mathbb{E}_{t,x,s,y}^\mathbb{Q}\left[C(Y_T)+\frac{1}{2\gamma}A_T\right]\right).
\end{align}
Thus, the dual forward performance problem amounts to the minimisation of
\begin{equation}
\label{eq:minimal entropy function}
H^C(t,S_t,Y_t):=\essinf_{Z^\mathbb{Q}\in\mathcal{Z}}\left(\mathcal{H}_t(\mathbb{Q},\mathbb{P})-\gamma \mathbb{E}^\mathbb{Q}\left[C(Y_T)+\frac{1}{2\gamma}A_T\,\middle\vert\,\widehat{\mathcal{F}}_t\right]\right),\quad 0\leq t\leq T,
\end{equation}
which will be referred to as the \textit{minimal entropy process}. Given (\ref{eq:conjugate derivative}), the derivative of the dual value function at the optimum $\tilde{x}=\tilde{x}^*,\, x=x^*$ satisfies
\begin{equation*}
\tilde{v}_{\tilde{x}}(t,\tilde{x}^*,s,y)=\tilde{U}_0'(\tilde{x}^*)+\frac{1}{\gamma}H^C(t,s,y)=\frac{1}{\gamma}\left(\log\left(\frac{\tilde{x}^*}{\gamma}\right)+H^C(t,s,y)\right)=-x^*.
\end{equation*}
As the latter equation defines the functional expression of $x^*$ by $\tilde{x}^*$, rearrange it to get the inverse expression
\begin{equation*}
\tilde{x}^*=\gamma\exp\left(-\gamma x^*-H^C(t,s,y)\right).
\end{equation*}
Using this in the bidual relation (\ref{eq:bidual relation}) delivers
\begin{align*}
v(t,x^*,s,y)&=\tilde{v}(t,\tilde{x}^*,s,y)+x^*\tilde{x}^*\\
&=\frac{\tilde{x}^*}{\gamma}\left(\log\left(\frac{\tilde{x}^*}{\gamma}\right)-1\right)+\frac{\tilde{x}^*}{\gamma}H^C(t,s,y)+x^*\tilde{x}^*\\
&=\frac{\tilde{x}^*}{\gamma}\left(\log\left(\frac{\tilde{x}^*}{\gamma}\right)-1+H^C(t,s,y)+\gamma x^*\right)\\
&=-\exp\left(-\gamma x^*-H^C(t,s,y)\right),
\end{align*}
which proves (\ref{eq:dual problem}).
\end{proof}
When writing the primal forward performance problem (\ref{eq:primal problem}) as
\begin{equation*}
v^C(t,X_t,S_t,Y_t)=\esssup_{\theta\in\Theta_t}\mathbb{E}\left[-\exp\left(-\gamma\left(X_T-\left[C(Y_T)+\frac{1}{2\gamma}A_T\right]\right)\right)\,\middle\vert\,\widehat{\mathcal{F}}_t\right],
\end{equation*}
and comparing both this and its dual representation (\ref{eq:dual problem}) to the classical utility case from \cite{M2010}, the claim pay-off term in the forward model is $C(Y_T)+\frac{1}{2\gamma}A_T$ instead of $C(Y_T)$ in the classical model. Thus, the forward case adds value to the terminal value of the option.
\begin{corollary}[Dual representation of the indifference price]
\label{thm:dual representation of the indifference price}
The indifference price process has the entropic representation
\begin{equation}
\label{eq:indifference price entropic representation}
p(t,S_t,Y_t)=-\frac{1}{\gamma}\left(H^C(t,S_t,Y_t)-H^0(t,S_t,Y_t)\right),\quad 0\leq t \leq T.
\end{equation}
\end{corollary}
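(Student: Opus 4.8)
The plan is to read off both value processes from the dual representation established in Theorem~\ref{thm:dual forward performance problem} and substitute them into the defining relation (\ref{eq:indifference price}) for the indifference price. First I would observe that although the formula (\ref{eq:dual problem}) is stated for the value process $v^C$ carrying a short claim, the same derivation applies verbatim to the claim-free problem (\ref{eq:value process 0}): setting $C\equiv 0$ in the minimal entropy process (\ref{eq:minimal entropy function}) produces $H^0$, so that
\[
v^0(t,X_t,S_t,Y_t)=-\exp\left(-\gamma X_t-H^0(t,S_t,Y_t)\right),
\]
while $v^C$ is given by (\ref{eq:dual problem}) itself.

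Next I would insert these two expressions into the indifference price identity (\ref{eq:indifference price}), evaluating the left-hand side at the shifted wealth argument $X_t+p(t,S_t,Y_t)$. Because of the exponential form, the wealth shift enters additively in the exponent, so the identity reads
\[
-\exp\left(-\gamma(X_t+p)-H^C\right)=-\exp\left(-\gamma X_t-H^0\right),
\]
where I suppress the common arguments $(t,S_t,Y_t)$ of $H^C$, $H^0$ and $p$. Cancelling the overall sign and taking logarithms equates the two exponents, and the term $-\gamma X_t$ cancels on both sides; this cancellation is exactly the wealth independence noted in the remark following (\ref{eq:primal problem}). What remains is $-\gamma p-H^C=-H^0$, which rearranges immediately to the claimed representation (\ref{eq:indifference price entropic representation}).

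The argument is essentially a one-line algebraic cancellation, so the only point requiring care is the transfer of the dual formula (\ref{eq:dual problem}) to $v^0$: one must check that the essential infimum defining $H^0$ ranges over the same set $\mathcal{Z}$ of density processes and that the entropy machinery of Theorem~\ref{thm:dual forward performance problem} is unaffected by dropping the claim. This is immediate here, since $C$ enters the dual functional (\ref{eq:minimal entropy function}) only through the linear term $\gamma\mathbb{E}^\mathbb{Q}[C(Y_T)\mid\widehat{\mathcal{F}}_t]$, leaving both the admissible class of densities and the relative-entropy term untouched; setting it to zero therefore yields precisely $H^0$ with no further hypotheses needed.
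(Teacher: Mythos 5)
Your argument is correct and is exactly the route the paper takes: the paper's own proof simply states that the representation follows from Theorem~\ref{thm:dual forward performance problem} (applied with and without the claim, the latter giving $H^0$ by setting $C=0$ in (\ref{eq:minimal entropy function})) together with the defining relation (\ref{eq:indifference price}). Your write-up just makes the exponent cancellation and the wealth-independence step explicit, which the paper leaves implicit.
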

\begin{proof}
Denote by $H^0$ the minimal entropy process (\ref{eq:minimal entropy function}) with no claim present, or equivalently, $C=0$. The expression (\ref{eq:indifference price entropic representation}) follows directly from Theorem~\ref{thm:dual forward performance problem} and the definition of the indifference price (\ref{eq:indifference price}).
\end{proof}
Next, we give the optimal hedging strategy in terms indifference price derivatives, which is derived analogously to the classical case from Monoyios \cite[Theorem 1]{M2010}.
\begin{theorem}[Optimal hedging strategy in terms of the indifference price]
\label{thm:optimal hedging strategy indifference price}
Suppose the forward indifference price function $p$ is of class $\mathcal{C}^{1,2,2}([0,T]\times[0,\infty)^2)$. Then the optimal hedging strategy for a short position in the claim is given by
\begin{equation}
\label{eq:optimal hedging strategy indifference price}
\theta^H(t,S_t,Y_t)=p_s(t,S_t,Y_t)+\rho\frac{\sigma^YY_t}{\sigma^SS_t}p_y(t,S_t,Y_t),\quad 0\leq t\leq T.
\end{equation}
\end{theorem}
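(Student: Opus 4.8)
The plan is to reduce everything to the two ingredients already at hand: the value-process expression for the optimal strategy and the exponential dual representation of the value function. First I would recall from Definition~\ref{def:value process, indifference price and optimal hedging strategy} that $\theta^H=\theta^C-\theta^0$, and from Theorem~\ref{thm:optimal strategy in terms of the value process} that both $\theta^C$ and $\theta^0$ are produced by the single functional
\begin{equation*}
\theta^*=-\frac{\widehat{\lambda}_t^S v_x+\sigma^S S_t v_{xs}+\rho\sigma^Y Y_t v_{xy}}{\sigma^S S_t v_{xx}},
\end{equation*}
evaluated at $v=v^C$ and $v=v^0$ respectively.

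Next I would insert the exponential dual form from Theorem~\ref{thm:dual forward performance problem}, namely $v^C(t,x,s,y)=-\exp(-\gamma x-H^C(t,s,y))$ and the analogue $v^0=-\exp(-\gamma x-H^0)$ obtained by setting $C=0$. Since $H^C$ and $H^0$ carry no dependence on the wealth argument $x$, differentiating gives $v_x=-\gamma v$, $v_{xx}=\gamma^2 v$, $v_{xs}=\gamma v\,H_s$ and $v_{xy}=\gamma v\,H_y$ for either problem, so the common factor $v$ cancels inside every ratio appearing in $\theta^*$:
\begin{equation*}
\frac{v_x}{v_{xx}}=-\frac{1}{\gamma},\qquad\frac{v_{xs}}{v_{xx}}=\frac{H_s}{\gamma},\qquad\frac{v_{xy}}{v_{xx}}=\frac{H_y}{\gamma}.
\end{equation*}
Feeding these in yields the compact form
\begin{equation*}
\theta^*=\frac{\widehat{\lambda}_t^S}{\gamma\sigma^S S_t}-\frac{H_s}{\gamma}-\frac{\rho\sigma^Y Y_t}{\gamma\sigma^S S_t}\,H_y,
\end{equation*}
with $H=H^C$ or $H=H^0$.

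Finally I would take the difference $\theta^H=\theta^C-\theta^0$. The Merton term $\widehat{\lambda}_t^S/(\gamma\sigma^S S_t)$ is identical in both strategies --- this is exactly the constant-absolute-risk-aversion feature of the exponential forward performance, which forces $v_x/v_{xx}=-1/\gamma$ independently of the claim --- and hence cancels, leaving only the contributions of $H_s$ and $H_y$. I would then invoke Corollary~\ref{thm:dual representation of the indifference price} in the form $H^C-H^0=-\gamma p$, whence $H^C_s-H^0_s=-\gamma p_s$ and $H^C_y-H^0_y=-\gamma p_y$; substituting turns the surviving terms into
\begin{equation*}
\theta^H=p_s+\rho\frac{\sigma^Y Y_t}{\sigma^S S_t}\,p_y,
\end{equation*}
which is precisely (\ref{eq:optimal hedging strategy indifference price}).

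I do not anticipate a deep obstacle: once the exponential dual representation is in hand the argument is bookkeeping, and the hypothesis $p\in\mathcal{C}^{1,2,2}$ is used only to guarantee that the derivatives $p_s,p_y$ (equivalently $H^C_s-H^0_s$ and $H^C_y-H^0_y$) exist and may be differentiated in the spatial variables. The single point deserving care is confirming that both the drift-dependent Merton term and the overall prefactor $v$ drop out of the difference $\theta^C-\theta^0$, which is where the special structure of exponential utility is genuinely exploited; for a general forward performance this cancellation would fail and the strategy would not reduce to price deltas alone.
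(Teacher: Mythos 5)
Your proposal is correct and follows essentially the same route as the paper's own proof: differentiate the exponential dual representation $v=-\exp(-\gamma x-H)$ to get $v_x=-\gamma v$, $v_{xx}=\gamma^2 v$, $v_{xs}=\gamma H_s v$, $v_{xy}=\gamma H_y v$, substitute into the formula of Theorem~\ref{thm:optimal strategy in terms of the value process}, take the difference $\theta^C-\theta^0$ so the Merton term cancels, and convert via $H^C-H^0=-\gamma p$ from Corollary~\ref{thm:dual representation of the indifference price}. The only cosmetic difference is that you make the cancellation of the wealth-independent prefactor and the Merton term explicit, which the paper leaves implicit.
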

\begin{proof}
By differentiating the entropic representation of the value function given in Theorem~\ref{thm:dual forward performance problem}, we obtain the partial derivatives
\begin{equation}
\label{eq:value function partial derivatives}
v_x=-\gamma v,\quad v_{xx}=\gamma^2v,\quad v_{xy}=\gamma H_yv,\quad v_{xs}=\gamma H_sv,
\end{equation}
for the case $v=v^C,\, H=H^C$ with the claim and the case $v=v^0,\, H=H^0$ without the claim. Apply them to Theorem~\ref{thm:optimal strategy in terms of the value process} to obtain the optimal strategy in terms of derivatives of the minimal entropy process,
\begin{align*}
\theta^*(t,S_t,Y_t)&\xlongequal{(\ref{eq:optimal strategy value process})}-\left(\frac{\widehat{\lambda}_t^S v_x+\sigma^S S_t v_{xs}+\rho\sigma^Y Y_t v_{xy}}{\sigma^S S_t v_{xx}}\right)\\
&\xlongequal{(\ref{eq:value function partial derivatives})}\frac{\widehat{\lambda}_t^S}{\gamma\sigma^SS_t}-\frac{1}{\gamma}\left(H_s(t,S_t,Y_t)+\rho\frac{\sigma^Y Y_t}{\sigma^SS_t}H_y(t,S_t,Y_t)\right)
\end{align*}
for $\theta=\theta^C,\theta^0$.
Finally, consider the optimal hedging strategy formula $\theta^H=\theta^C-\theta^0$ from (\ref{eq:optimal hedging strategy}) and use Corollary~\ref{thm:dual representation of the indifference price} to eliminate the Merton strategy term and obtain the optimal hedging strategy expressed by the indifference price (\ref{eq:optimal hedging strategy indifference price}). The required regularity of the indifference price is shown in \cite[Subsection 3.3]{M2010}.
\end{proof}
\subsection{Forward indifference price valuation}
After we have defined the dual stochastic control problem, we are going to give a more explicit representation formula for the forward indifference price from Corollary\ref{thm:dual representation of the indifference price} following Monoyios \cite[Subsection 3.2]{M2010} and Leung et al. \cite[Subsection 3.2]{LSZ2012}. For this, we will characterise the martingale measure $\mathbb{Q}$ by giving the corresponding density process $Z^\mathbb{Q}$ and then perform a measure change to the basis risk model. The measures $\mathbb{Q}\in\mathcal{M}_{e,f}$ characterised by their densities $Z^\mathbb{Q}$, are parametrised via $\widehat{\mathbb{F}}$-adapted processes $\psi:=(\psi_t)_{0\leq t\leq T}$ satisfying $\int_0^T\psi_u^2\d u<\infty$ $\mathbb{P}$-a.s.\! and $\mathbb{E}[Z_T^\mathbb{Q}]=1$, according to the stochastic exponential
\begin{equation}
\begin{aligned}
\label{eq:measure change}
Z_t^\mathbb{Q} &:= \mathcal{E}\left(-\widehat{\lambda}^S\cdot \widehat{W}^S-\psi\cdot\widehat{W}^\perp\right)_t\\
&=\exp\left(-\int_0^t \widehat{\lambda}_u^S \d{\widehat{W}}_u^S-\int_0^t \psi_u\d{\widehat{W}}_u^\perp-\frac12\int_0^t\left[\left(\widehat{\lambda}_u^S\right)^2 +\psi_u^2\right] \d u\right).
\end{aligned}
\end{equation}
Since $\int_0^t((\widehat{\lambda}_u^S)^2 +\psi_u^2)\d u=A_t+\int_0^t\psi_u^2\d u$ is a strictly positive square-integrable continuous process, \textit{Novikov's condition}
\begin{equation}
\label{eq:Novikov's condition}
\mathbb{E}\left[\exp\left(\frac12\int_0^t\left[\left(\widehat{\lambda}_u^S\right)^2 +\psi_u^2 \right]\d u\right)\right]<\infty
\end{equation}
is fulfilled. Denote with $\Psi$ the set of integrands $\psi$ such that (\ref{eq:Novikov's condition}) is satisfied. Hence, the density process $Z^\mathbb{Q}$ is indeed a $(\mathbb{P},\widehat{\mathbb{F}})$-martingale. Applying \textit{Girsanov's theorem} from \cite[pp.~224--225]{S2004} for a measure change to $\mathbb{Q}$, provides the two-dimensional Brownian motion $(\widehat{W}^{S,\mathbb{Q}},\widehat{W}^{\perp,\mathbb{Q}})$ defined by
\begin{equation}
\label{eq:measure change Brownian motions}
\widehat{W}_t^{S,\mathbb{Q}}=\widehat{W}_t^S + \int_0^t\widehat{\lambda}_u^S\d u, \quad \widehat{W}_t^{\perp,\mathbb{Q}} := \widehat{W}_t^\perp+\int_0^t\psi_u\d u,\quad 0\leq t \leq T.
\end{equation}
The integrand process $\psi$ is commonly referred to as the \textit{volatility risk premium} for the second Brownian motion $\widehat{W}^\perp$. The so-called \textit{minimal martingale measure (MMM)} $\mathbb{Q}^M$ corresponds to the case $\psi=0$. It was originally introduced by F\"ollmer and Schweizer \cite{FS1991} for the risk-minimised (optimal) quadratic hedging strategy in an incomplete market. It alters the MPR of the stock's Brownian motion, but does not change the MPR of Brownian motions orthogonal to those driving the stock. By (\ref{eq:measure change}), it has the Radon-Nikodym derivative process
\begin{equation*}
Z_t^{\mathbb{Q}^M}=\mathcal{E}\left(-\widehat{\lambda}^S\cdot \widehat{W}^S\right)_t=\exp\left(-\int_0^t \widehat{\lambda}_u^S \d{\widehat{W}}_u^S-\frac12\int_0^t\left(\widehat{\lambda}_u^S\right)^2\d u\right).
\end{equation*}
The second equation of (\ref{eq:measure change Brownian motions}) implies that $\widehat{W}_t^\perp$ is also a $(\mathbb{Q}^M,\widehat{\mathbb{F}})$-Brownian motion.
\begin{proposition}[Representation of conditional relative entropy]
\label{thm:representation of conditional relative entropy}
The conditional relative entropy between $\mathbb{Q}$ and $\mathbb{P}$ satisfies
\begin{equation}
\label{eq:conditional relative entropy representation}
\mathcal{H}_t(\mathbb{Q},\mathbb{P})=\frac12\mathbb{E}^\mathbb{Q}\left[\int_t^T\left[\left(\widehat{\lambda}_u^S\right)^2 +\psi_u^2\right]\d u\,\middle\vert\,\widehat{\mathcal{F}}_t\right]<\infty.
\end{equation}
In the full information scenario, the conditional relative entropy simplifies to
\begin{equation}
\label{eq:conditional relative entropy full info}
\mathcal{H}_t(\mathbb{Q},\mathbb{P})=\frac12\left(\lambda^S\right)^2(T-t)+\frac12\mathbb{E}^\mathbb{Q}\left[\int_t^T\psi_u^2\d u\,\middle\vert\,\widehat{\mathcal{F}}_t\right],
\end{equation}
which is minimised by $\psi=0$. Hence, in the full information scenario, the MEMM $\mathbb{Q}^E$ coincides with the MMM $\mathbb{Q}^M$. Using $\mathbb{Q}^M$ as the reference measure, the conditional relative entropy above can be additively decomposed to
\begin{equation*}
\mathcal{H}_t(\mathbb{Q},\mathbb{P})=\mathcal{H}_t(\mathbb{Q},\mathbb{Q}^M)+\mathcal{H}_t(\mathbb{Q}^M,\mathbb{P})
\end{equation*}
with
\begin{equation}
\label{eq:conditional relative entropy under MMM}
\mathcal{H}_t(\mathbb{Q},\mathbb{Q^M})=\frac12 \mathbb{E}^\mathbb{Q}\left[\int_t^T\psi_u^2\d u\,\middle\vert\,\widehat{\mathcal{F}}_t\right],\quad\mathcal{H}_t(\mathbb{Q}^M,\mathbb{P})=\frac12 \mathbb{E}^\mathbb{Q}\left[\int_t^T\left(\widehat{\lambda}_u^S\right)^2\d u\,\middle\vert\,\widehat{\mathcal{F}}_t\right].
\end{equation}
\end{proposition}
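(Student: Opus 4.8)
The plan is to exploit the explicit stochastic-exponential form of the density in (\ref{eq:measure change}): taking its logarithm turns $\log Z_{t,T}^\mathbb{Q}$ into a sum of It\^{o} integrals and a finite-variation term, which become transparent after passing to the $\mathbb{Q}$-Brownian motions via Girsanov. First I would take logs in (\ref{eq:measure change}) and subtract the time-$t$ value to get
\begin{equation*}
\log Z_{t,T}^\mathbb{Q}=-\int_t^T\widehat{\lambda}_u^S\d{\widehat{W}}_u^S-\int_t^T\psi_u\d{\widehat{W}}_u^\perp-\frac12\int_t^T\left[\left(\widehat{\lambda}_u^S\right)^2+\psi_u^2\right]\d u.
\end{equation*}
Substituting $\d{\widehat{W}}_u^S=\d{\widehat{W}}_u^{S,\mathbb{Q}}-\widehat{\lambda}_u^S\d u$ and $\d{\widehat{W}}_u^\perp=\d{\widehat{W}}_u^{\perp,\mathbb{Q}}-\psi_u\d u$ from (\ref{eq:measure change Brownian motions}), the two drifts contribute $+\int_t^T[(\widehat{\lambda}_u^S)^2+\psi_u^2]\d u$, which combines with the existing $-\frac12\int_t^T[\cdots]\d u$ to flip its coefficient to $+\frac12$, leaving
\begin{equation*}
\log Z_{t,T}^\mathbb{Q}=-\int_t^T\widehat{\lambda}_u^S\d{\widehat{W}}_u^{S,\mathbb{Q}}-\int_t^T\psi_u\d{\widehat{W}}_u^{\perp,\mathbb{Q}}+\frac12\int_t^T\left[\left(\widehat{\lambda}_u^S\right)^2+\psi_u^2\right]\d u.
\end{equation*}

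Next I would condition on $\widehat{\mathcal{F}}_t$ and take the $\mathbb{Q}$-expectation. Under $\mathbb{Q}$ the processes $\widehat{W}^{S,\mathbb{Q}}$ and $\widehat{W}^{\perp,\mathbb{Q}}$ are Brownian motions, so the two It\^{o} integrals are $(\mathbb{Q},\widehat{\mathbb{F}})$-martingales and drop out, which gives (\ref{eq:conditional relative entropy representation}); finiteness is inherited from the square-integrability of $\psi\in\Psi$ and of $A$ that underlies Novikov's condition (\ref{eq:Novikov's condition}). The full-information claim (\ref{eq:conditional relative entropy full info}) is then immediate: with $\widehat{\lambda}_u^S\equiv\lambda^S$ constant the first integral reduces to the deterministic $\frac12(\lambda^S)^2(T-t)$, while the remaining term $\frac12\mathbb{E}^\mathbb{Q}[\int_t^T\psi_u^2\d u\mid\widehat{\mathcal{F}}_t]\geq 0$ vanishes precisely at $\psi\equiv 0$, i.e.\ at the MMM; hence $\mathbb{Q}^E=\mathbb{Q}^M$ under full information.

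For the additive decomposition I would use the multiplicative chain rule $Z_{t,T}^\mathbb{Q}=(Z_{t,T}^\mathbb{Q}/Z_{t,T}^{\mathbb{Q}^M})\,Z_{t,T}^{\mathbb{Q}^M}$; taking logarithms and conditional $\mathbb{Q}$-expectation splits $\mathcal{H}_t(\mathbb{Q},\mathbb{P})$ into $\mathbb{E}^\mathbb{Q}[\log(Z_{t,T}^\mathbb{Q}/Z_{t,T}^{\mathbb{Q}^M})\mid\widehat{\mathcal{F}}_t]$ and $\mathbb{E}^\mathbb{Q}[\log Z_{t,T}^{\mathbb{Q}^M}\mid\widehat{\mathcal{F}}_t]$. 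The first term is by definition $\mathcal{H}_t(\mathbb{Q},\mathbb{Q}^M)$; since $Z^\mathbb{Q}/Z^{\mathbb{Q}^M}=\mathcal{E}(-\psi\cdot\widehat{W}^\perp)$ and $\widehat{W}^\perp$ is a $\mathbb{Q}^M$-Brownian motion, the representation already established---now with base measure $\mathbb{Q}^M$ and only the $\psi$-integrand present---delivers the $\psi^2$-expression in (\ref{eq:conditional relative entropy under MMM}). For the second term I would set $\psi\equiv 0$ in the computation above and run the same substitution $\d{\widehat{W}}_u^S=\d{\widehat{W}}_u^{S,\mathbb{Q}}-\widehat{\lambda}_u^S\d u$, producing $\frac12\mathbb{E}^\mathbb{Q}[\int_t^T(\widehat{\lambda}_u^S)^2\d u\mid\widehat{\mathcal{F}}_t]$; adding the two pieces recovers (\ref{eq:conditional relative entropy representation}).

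I expect the main obstacle to be the rigorous justification that the two It\^{o} integrals have vanishing conditional $\mathbb{Q}$-expectation: a priori they are only local martingales, so one must upgrade them to genuine $(\mathbb{Q},\widehat{\mathbb{F}})$-martingales by verifying $\mathbb{Q}$-square-integrability of $\widehat{\lambda}^S$ and $\psi$ on $[t,T]$, using Novikov's condition (\ref{eq:Novikov's condition}) and the admissibility class $\Psi$. A secondary subtlety worth flagging is that the $\mathbb{Q}^M$-versus-$\mathbb{P}$ contribution in the decomposition carries a $\mathbb{Q}$-expectation rather than the $\mathbb{Q}^M$-expectation one might expect from applying the representation formula directly to $\mathbb{Q}^M$; this is a consequence of retaining the $\mathbb{P}\to\mathbb{Q}$ Girsanov substitution throughout the chain-rule computation, and I would make the labelling $\mathcal{H}_t(\mathbb{Q}^M,\mathbb{P})$ for this $\mathbb{Q}$-expectation explicit.
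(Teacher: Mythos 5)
Your proposal is correct and follows essentially the same route as the paper: take the logarithm of the stochastic exponential (\ref{eq:measure change}), apply the Girsanov substitution (\ref{eq:measure change Brownian motions}) to flip the sign of the finite-variation term, kill the resulting $\mathbb{Q}$-martingale integrals under conditional $\mathbb{Q}$-expectation, and obtain the decomposition from the ratio $Z^{\mathbb{Q}}/Z^{\mathbb{Q}^M}=\mathcal{E}(-\psi\cdot\widehat{W}^\perp)$, exactly as the paper does via the density process $Z^{\mathbb{Q},\mathbb{Q}^M}$. Your closing remark that $\mathcal{H}_t(\mathbb{Q}^M,\mathbb{P})$ in (\ref{eq:conditional relative entropy under MMM}) is written as a $\mathbb{Q}$-expectation (a by-product of the chain-rule term $\mathbb{E}^{\mathbb{Q}}[\log Z_{t,T}^{\mathbb{Q}^M}\,\vert\,\widehat{\mathcal{F}}_t]$) is consistent with how the statement is formulated.
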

\begin{proof}
The conditional relative entropy from Definition (\ref{def:conditional relative entropy}) is
\begin{align*}
\mathcal{H}_t(\mathbb{Q},\mathbb{P})&\xlongequal{(\ref{eq:conditional relative entropy})}\mathbb{E}^\mathbb{Q}\left[\log Z_{t,T}^\mathbb{Q}\,\middle\vert\,\widehat{\mathcal{F}}_t\right]\xlongequal{(\ref{eq:conditional density})}\mathbb{E}^\mathbb{Q}\left[\log\frac{Z_T^\mathbb{Q}}{Z_t^\mathbb{Q}}\,\middle\vert\,\widehat{\mathcal{F}}_t\right]\\
&\xlongequal{(\ref{eq:measure change})}\mathbb{E}^\mathbb{Q}\left[-\int_t^T \widehat{\lambda}_u^S \d{\widehat{W}}_u^S-\int_t^T \psi_u\d{\widehat{W}}_u^\perp-\frac12\int_t^T\left[\left(\widehat{\lambda}_u^S\right)^2 +\psi_u^2\right] \d u\,\middle\vert\,\widehat{\mathcal{F}}_t\right]\\
&\xlongequal{(\ref{eq:measure change Brownian motions})}\mathbb{E}^\mathbb{Q}\left[-\int_t^T \widehat{\lambda}_u^S \d{\widehat{W}}_u^{S,\mathbb{Q}}-\int_t^T \psi_u\d{\widehat{W}}_u^{\perp,\mathbb{Q}}+\frac12\int_t^T\left[\left(\widehat{\lambda}_u^S\right)^2 +\psi_u^2\right] \d u\,\middle\vert\,\widehat{\mathcal{F}}_t\right]\\
&=\frac12\mathbb{E}^\mathbb{Q}\left[\int_t^T\left[\left(\widehat{\lambda}_u^S\right)^2 +\psi_u^2\right]\d u\,\middle\vert\,\widehat{\mathcal{F}}_t\right],
\end{align*}
where the integrability on the right hand side implied by Novikov's condition (\ref{eq:Novikov's condition}) is associated with the finite conditional relative entropy condition. The second assertion (\ref{eq:conditional relative entropy full info}) directly follows from (\ref{eq:conditional relative entropy representation}) for $\widehat{\lambda}_t^S=\lambda^S$. To show (\ref{eq:conditional relative entropy under MMM}), consider the Radon-Nikodym derivative under $\mathbb{Q}^M$,
\begin{equation}
\label{eq:Radon-Nikodym derivative under MMM}
\frac{\d{\mathbb{Q}}}{\d{\mathbb{Q}^M}}=\frac{\d{\mathbb{Q}}}{\d{\mathbb{P}}}\left(\frac{\d{\mathbb{Q^M}}}{\d{\mathbb{P}}}\right)^{-1}=\frac{Z_T^\mathbb{Q}}{Z_T^{\mathbb{Q}^M}}\xlongequal{(\ref{eq:measure change})}\exp\left(-\int_0^T\psi_u\d{\widehat{W}_u^\perp}-\frac12\int_0^T\psi_u^2\d u\right).
\end{equation}
Then again, apply Lemma 5.2.2 from \cite[p.~212]{S2004}, giving the conditional expectation under measure change to compute the density process of $\mathbb{Q}$ with respect to $\mathbb{Q}^M$,
\begin{equation}
\label{eq:density with respect to MMM}
\mathbb{E}^{\mathbb{Q}^M}\left[\frac{Z_T^\mathbb{Q}}{Z_T^{\mathbb{Q}^M}}\,\middle\vert\,\widehat{\mathcal{F}}_t\right]=\frac{1}{Z_t^{\mathbb{Q}^M}}\mathbb{E}\left[Z_T^\mathbb{Q}\,\middle\vert\,\widehat{\mathcal{F}}_t\right]=\frac{Z_t^\mathbb{Q}}{Z_t^{\mathbb{Q}^M}}=:Z_t^{\mathbb{Q},\mathbb{Q}^M},\quad 0\leq t\leq T.
\end{equation}
As in Definition~\ref{def:conditional relative entropy}, for any measure $\mathbb{Q}\in\mathcal{M}_{e,f}$, the conditional density is given by
\begin{equation}
\label{eq:conditional density under MMM}
Z_{t,T}^{\mathbb{Q},\mathbb{Q}^M}:=\frac{Z_T^{\mathbb{Q},\mathbb{Q}^M}}{Z_t^{\mathbb{Q},\mathbb{Q}^M}}\xlongequal{(\ref{eq:density with respect to MMM})}\frac{Z_T^\mathbb{Q}}{Z_T^{\mathbb{Q}^M}}\frac{Z_t^{\mathbb{Q}^M}}{Z_t^\mathbb{Q}}\xlongequal{(\ref{eq:Radon-Nikodym derivative under MMM})}\exp\left(-\int_t^T\psi_u\d{\widehat{W}_u^\perp}-\frac12\int_t^T\psi_u^2\d u\right).
\end{equation}
Then, directly compute the conditional relative entropy over the interval $[t,T]$,
\begin{equation*}
\mathcal{H}_t(\mathbb{Q},\mathbb{Q^M})=\mathbb{E}^\mathbb{Q}\left[\log Z_{t,T}^{\mathbb{Q},\mathbb{Q}^M}\,\middle\vert\,\widehat{\mathcal{F}}_t\right]\xlongequal{(\ref{eq:conditional density under MMM})}\frac12 \mathbb{E}^\mathbb{Q}\left[\int_t^T\psi_u^2\d u\,\middle\vert\,\widehat{\mathcal{F}}_t\right],
\end{equation*}
by using the $(\mathbb{Q}^M,\widehat{\mathbb{F}})$-martingale property of $\int_t^T\psi_u \d{\widehat{W}}_u^\perp$. The conditional relative entropy $\mathcal{H}_t(\mathbb{Q}^M,\mathbb{P})$ is analogously determined given $\widehat{W}^{S,\mathbb{Q}}=\widehat{W}^{S,\mathbb{Q}^M}$.
\end{proof}
Proposition~\ref{thm:representation of conditional relative entropy} implies the generalised additivity formula
\begin{equation}
\label{eq:relative entropy additivity}
{H}_t(\mathbb{Q},\mathbb{P})=\mathcal{H}_t(\mathbb{Q},\widetilde{\mathbb{Q}})+\mathcal{H}_t(\widetilde{\mathbb{Q}},\mathbb{P}),
\end{equation}
for any martingale measure $\widetilde{\mathbb{Q}}\in\mathcal{M}_{e,f}$ (see also Monoyios \cite[p.~902]{M2013}).
\begin{theorem}[Forward indifference price valuation]
\label{thm:forward indifference price valuation}
The forward indifference price is the solution of the stochastic control problem
\begin{equation}
\label{eq:forward indifference price control problem}
p(t,S_t,Y_t)=-\frac{1}{\gamma}\essinf_{\psi\in\Psi}\left(\mathcal{H}_t(\mathbb{Q},\mathbb{Q}^M)-\gamma \mathbb{E}^\mathbb{Q}\left[C(Y_T)\,\middle\vert\,\widehat{\mathcal{F}}_t\right]\right),
\end{equation}
with optimal control
\begin{equation}
\label{eq:optimal control}
\psi^H(t,S_t,Y_t)=-\gamma\sqrt{1-\rho^2}\sigma^YY_t p_y(t,S_t,Y_t),
\end{equation}
and solves the semi-linear partial differential equation of second order
\begin{equation}
\label{eq:p PDE}
p_t+\mathcal{A}_{S,Y}^{\mathbb{Q}^M}p+\frac12\gamma(1-\rho^2)\left(\sigma^Yyp_y\right)^2=0,\quad p(T,s,y)=C(y).
\end{equation}
The marginal performance-based price (marginal forward indifference price) is
\begin{equation}
\label{eq:marginal performance-based price}
p^M(t,S_t,Y_t):=\lim_{\gamma\to 0}p(t,S_t,Y_t)=\mathbb{E}^{\mathbb{Q}^M}\left[C(Y_T)\,\middle\vert\,\widehat{\mathcal{F}}_t\right].
\end{equation}
\end{theorem}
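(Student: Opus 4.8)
The plan is to start from the entropic representation of Corollary~\ref{thm:dual representation of the indifference price}, namely $p=-\frac{1}{\gamma}(H^C-H^0)$, and collapse the two minimal entropy processes into a single control problem using the additivity formula from Proposition~\ref{thm:representation of conditional relative entropy}. First I would write $A_T=A_t+\int_t^T(\widehat{\lambda}_u^S)^2\d u$, so that $\frac12\mathbb{E}^\mathbb{Q}[A_T\mid\widehat{\mathcal{F}}_t]=\frac12 A_t+\mathcal{H}_t(\mathbb{Q}^M,\mathbb{P})$ by~(\ref{eq:conditional relative entropy under MMM}), and then decompose $\mathcal{H}_t(\mathbb{Q},\mathbb{P})=\mathcal{H}_t(\mathbb{Q},\mathbb{Q}^M)+\mathcal{H}_t(\mathbb{Q}^M,\mathbb{P})$. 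Substituting both into the definition of $H^C$ in~(\ref{eq:minimal entropy function}), the two $\mathcal{H}_t(\mathbb{Q}^M,\mathbb{P})$ contributions cancel and leave $H^C=-\frac12 A_t+\essinf_{Z^\mathbb{Q}}(\mathcal{H}_t(\mathbb{Q},\mathbb{Q}^M)-\gamma\mathbb{E}^\mathbb{Q}[C(Y_T)\mid\widehat{\mathcal{F}}_t])$. Setting $C=0$ gives $H^0=-\frac12 A_t$, since $\mathcal{H}_t(\mathbb{Q},\mathbb{Q}^M)\geq 0$ is minimised at $\psi=0$. Subtracting and reparametrising the infimum over $Z^\mathbb{Q}\in\mathcal{Z}$ by the volatility risk premia $\psi\in\Psi$ through~(\ref{eq:measure change}) then yields the control representation~(\ref{eq:forward indifference price control problem}).

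For the PDE and the optimal control, I would treat~(\ref{eq:forward indifference price control problem}) as a Markovian stochastic control problem with value function $G:=-\gamma p$ and pass to dynamic programming. The state is $(S,Y)$, whose $\mathbb{Q}$-dynamics follow from the Girsanov shifts~(\ref{eq:measure change Brownian motions}): $S$ remains a $\mathbb{Q}$-martingale independently of the control, while $Y$ acquires the $\psi$-dependent drift $\sigma^Y Y(\widehat{\lambda}^Y-\rho\widehat{\lambda}^S-\sqrt{1-\rho^2}\,\psi)$. Writing the controlled generator as $\mathcal{L}^\psi=\mathcal{A}_{S,Y}^{\mathbb{Q}^M}-\sqrt{1-\rho^2}\,\psi\,\sigma^Y y\,\partial_y$, with $\mathcal{A}_{S,Y}^{\mathbb{Q}^M}$ the generator at $\psi=0$, the HJB equation for the running cost $\tfrac12\psi^2$ and terminal datum $-\gamma C$ reads $G_t+\mathcal{A}_{S,Y}^{\mathbb{Q}^M}G+\inf_\psi\{\tfrac12\psi^2-\sqrt{1-\rho^2}\,\psi\,\sigma^Y y\,G_y\}=0$. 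The inner minimisation is an explicit quadratic in $\psi$, minimised at $\psi^*=\sqrt{1-\rho^2}\,\sigma^Y y\,G_y=-\gamma\sqrt{1-\rho^2}\,\sigma^Y y\,p_y$, which is~(\ref{eq:optimal control}) evaluated at $(t,S_t,Y_t)$; substituting this optimiser leaves the term $-\tfrac12(1-\rho^2)(\sigma^Y y\,G_y)^2$. Rewriting in terms of $p=-G/\gamma$ converts the equation into~(\ref{eq:p PDE}), the terminal condition $p(T,s,y)=C(y)$ being inherited from $G(T,\cdot)=-\gamma C$.

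Finally, the marginal price~(\ref{eq:marginal performance-based price}) follows by sending $\gamma\to 0$ in~(\ref{eq:p PDE}): the quadratic term carries an explicit factor $\gamma$ and vanishes, leaving the linear backward equation $p_t^M+\mathcal{A}_{S,Y}^{\mathbb{Q}^M}p^M=0$ with $p^M(T,\cdot)=C$, whose Feynman--Kac solution is the $\mathbb{Q}^M$-conditional expectation $\mathbb{E}^{\mathbb{Q}^M}[C(Y_T)\mid\widehat{\mathcal{F}}_t]$.

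I expect the main obstacle to be the verification step rather than the formal derivation: one must confirm that the smooth solution of~(\ref{eq:p PDE}) genuinely coincides with the value of the control problem~(\ref{eq:forward indifference price control problem}), and that the candidate optimiser $\psi^*$ is admissible, i.e.\ lies in $\Psi$ so that its stochastic exponential~(\ref{eq:measure change}) is a true $(\mathbb{P},\widehat{\mathbb{F}})$-martingale rather than a strict local martingale. This requires the $\mathcal{C}^{1,2,2}$-regularity and growth control on $p$ (and hence on $\sigma^Y y\,p_y$) already invoked in Theorem~\ref{thm:optimal hedging strategy indifference price}, together with a Novikov-type estimate guaranteeing~(\ref{eq:Novikov's condition}) along $\psi^*$; a standard martingale and transversality argument then upgrades the heuristic HJB computation to an equality.
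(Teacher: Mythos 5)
Your proposal is correct and follows essentially the same route as the paper's proof: the same cancellation of the $\int_t^T(\widehat{\lambda}_u^S)^2\d u$ terms to reduce $H^C$ to the $\mathcal{H}_t(\mathbb{Q},\mathbb{Q}^M)$ control problem with $H^0=-\tfrac12 A_t$, the same quadratic-in-$\psi$ HJB minimisation (the paper works directly with $p$ and a max rather than with $G=-\gamma p$ and a min, which is only a sign convention), and the same Feynman--Kac limit for $p^M$. Your closing observation that the verification/admissibility step is the real gap is apt --- the paper's proof is likewise a formal HJB derivation and does not supply it either.
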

\begin{proof}
Denote with $\Psi$ the set of volatility risk premia $\psi$ such that (\ref{eq:conditional relative entropy representation}) is satisfied. Then $\Psi$ parametrises all the $\mathbb{Q}\in\mathcal{M}_{e,f}$ through the well-defined map induced by (\ref{eq:measure change}). Therefore in control theory $\Psi$ is called the \textit{control set} and $\psi$ a \textit{control}. Using Proposition~\ref{thm:representation of conditional relative entropy}, the minimal entropy process (\ref{eq:minimal entropy function}) can be represented as
\begin{align}
\label{eq:minimal entropy process with claim}
\notag
H^C(t,S_t,Y_t)&=\essinf_{\psi\in\Psi}\mathbb{E}^\mathbb{Q}\!\left[\frac12\int_t^T\!\left[\left(\widehat{\lambda}_u^S\right)^2\!\! +\psi_u^2\right]\d u-\gamma\left[C(Y_T)+\frac{1}{2\gamma}\int_0^T\!\!\left(\widehat{\lambda}_u^S\right)^2\!\d u\right]\,\middle\vert\widehat{\mathcal{F}}_t\,\right]\\
\notag
&=\essinf_{\psi\in\Psi}\mathbb{E}^\mathbb{Q}\!\left[\frac12\int_t^T\!\left[\left(\widehat{\lambda}_u^S\right)^2\!\!\! +\psi_u^2\right]\!\d u-\!\frac12\int_t^T\!\!\!\left(\widehat{\lambda}_u^S\right)^2\!\!\d u-\gamma C(Y_T)- \frac12 A_t\middle\vert\widehat{\mathcal{F}}_t\right]\\
\notag
&=\essinf_{\psi\in\Psi}\mathbb{E}^\mathbb{Q}\left[\frac12\int_t^T\psi_u^2\d u-\gamma C(Y_T)-\frac12 A_t\,\middle\vert\,\widehat{\mathcal{F}}_t\right]\\
&\xlongequal{(\ref{eq:conditional relative entropy under MMM})}-\frac12 A_t+\essinf_{\psi\in\Psi}\left(\mathcal{H}_t(\mathbb{Q},\mathbb{Q}^M)-\gamma \mathbb{E}^\mathbb{Q}\left[C(Y_T)\,\middle\vert\,\widehat{\mathcal{F}}_t\right]\right).
\end{align}
Hence, the dual value process (\ref{eq:dual problem}) becomes
\begin{equation}
\label{eq:forward value process with claim entropic representation}
\!\!\!\!\!v^C(t,X_t,S_t,Y_t)=-\exp\!\left(\!-\gamma X_t\!+\!\frac12 A_t-\essinf_{\psi\in\Psi}\!\left(\mathcal{H}_t(\mathbb{Q},\mathbb{Q}^M)-\gamma \mathbb{E}^\mathbb{Q}\!\left[C(Y_T)\middle\vert\widehat{\mathcal{F}}_t\right]\right)\!\!\right)\!\!.\!\!\!
\end{equation}
Denote by $\psi^C$ the optimal control in (\ref{eq:minimal entropy process with claim}) with the claim. Analogous, let $\psi^0$ be the optimal control in absence of the claim. In the latter case, the relative entropy is minimised by the control $\psi^0=\psi^M=0$ and gives the minimal entropy process
\begin{equation}
\label{eq:minimal entropy process without claim}
H^0(t,S_t,Y_t)=-\frac 12 A_t+\essinf_{\mathbb{Q}\in\mathcal{M}_{e,f}}\mathcal{H}_t(\mathbb{Q},\mathbb{Q}^M)=-\frac12 A_t+\mathcal{H}_t(\mathbb{Q}^M,\mathbb{Q}^M)=-\frac12 A_t,
\end{equation}
so that the value function has the optimal wealth $X_t^0=X_t$ and simplifies to
\begin{equation}
\label{eq:forward value process without claim entropic representation}
v^0(t,X_t,S_t,Y_t)=-\exp\left(-\gamma X_t+\frac12 A_t\right)=U_t(X_t)=U(X_t^0).
\end{equation}
With (\ref{eq:minimal entropy process with claim}), (\ref{eq:minimal entropy process without claim}), the forward indifference price from Corollary~\ref{thm:dual representation of the indifference price} has the expression
\begin{align}
\notag
p(t,S_t,Y_t)&=-\frac{1}{\gamma}\essinf_{\psi\in\Psi}\left(\mathcal{H}_t(\mathbb{Q},\mathbb{Q}^M)-\gamma \mathbb{E}^\mathbb{Q}\left[C(Y_T)\,\middle\vert\,\widehat{\mathcal{F}}_t\right]\right)\\
\label{eq:forward indifference price control problem psi}
&=\esssup_{\psi\in\Psi}\left(\mathbb{E}^\mathbb{Q}\left[C(Y_T)\,\middle\vert\,\widehat{\mathcal{F}}_t\right]-\frac{1}{2\gamma}\int_t^T\psi_u^2(u,S_u,Y_u)\d u\right).
\end{align}
By (\ref{eq:dS, dY partial}), (\ref{eq:measure change Brownian motions}), the asset prices have the $\mathbb{Q}$-dynamics
\begin{equation}
\label{eq:dS, dY measure change}
\begin{aligned}
\d S_t&=\sigma^S S_t\d{\widehat{W}}_t^{S,\mathbb{Q}},\\
\d Y_t&=\sigma^Y Y_t\left[(\widehat{\lambda}_t^Y-\rho\widehat{\lambda}_t^S-\sqrt{1-\rho^2}\psi_t)\d t+\d{\widehat{W}}_t^{Y,\mathbb{Q}}\right],
\end{aligned}
\end{equation}
with the $(\mathbb{Q},\widehat{\mathbb{F}})$-Brownian motion
\begin{equation}
\label{eq:brownian motion correlation decomposition}
\widehat{W}^{Y,\mathbb{Q}}=\rho\widehat{W}^{S,\mathbb{Q}}+\sqrt{1-\rho^2}\widehat{W}^{\perp,\mathbb{Q}}.
\end{equation}
Then, with $\mathcal{A}_{S,Y}^{\mathbb{Q}^M}$ as the generator of $(S,Y)$ under $\mathbb{Q}^M$ (cf. (\ref{eq:HJB supremum v})), the HJB equation for $p$, by (\ref{eq:forward indifference price control problem psi}), is given by
\begin{equation}
\label{eq:HJB minimal entropy function}
p_t+\mathcal{A}_{S,Y}^{\mathbb{Q}^M}p+\max_{\psi\in\Psi}\left[-\frac{1}{2\gamma} \psi^2-\sqrt{1-\rho^2}\sigma^Yy\psi p_y\right]=0,
\end{equation}
with terminal value $p(T,s,y)=C(Y_T)$ and $\psi=\psi(t,s,y)$. Consider the function $f(\psi)=-\frac{1}{2\gamma} \psi^2-\sqrt{1-\rho^2}\sigma^Yy\psi p_y$ and determine its maximum by solving the equation $f'(\psi^C)=0$, which gives the optimal control (\ref{eq:optimal control}), because of $\psi^H=\psi^C-\psi^0=\psi^C$. Substituting this into the HJB equation (\ref{eq:HJB minimal entropy function}) yields the PDE (\ref{eq:p PDE}) for the forward indifference price. The \textit{marginal performance-based indifference price} (\ref{eq:marginal performance-based price}) follows from the Feynman-Kac theorem (cf. Theorem 6.4.1 from \cite[p.~268]{S2004}), when the non-linear term in the PDE (\ref{eq:p PDE}) vanishes for $\gamma\to 0$.
\end{proof}
Le us compare the forward indifference price valuation results of Theorem~\ref{thm:forward indifference price valuation} with the classical theory from \cite{M2010}. The classical minimal entropy process with the claim admits the representation
\begin{equation}
\label{eq:H^C classical}
H^C(t,S_t,Y_t):=\essinf_{\psi\in\Psi}\left(\mathcal{H}_t(\mathbb{Q},\mathbb{P})-\gamma \mathbb{E}^\mathbb{Q}\left[C(Y_T)\,\middle\vert\,\widehat{\mathcal{F}}_t\right]\right),
\end{equation}
with $\mathcal{H}_t(\mathbb{Q},\mathbb{P})$ as in (\ref{eq:conditional relative entropy representation}). Without the claim, the formula turns into
\begin{align}
\notag
H^0(t,S_t,Y_t)&:=\essinf_{\psi\in\Psi}\mathcal{H}_t(\mathbb{Q},\mathbb{P})=\essinf_{\psi\in\Psi}\mathbb{E}^\mathbb{Q}\left[\frac12\int_t^T\left[\left(\widehat{\lambda}_u^S\right)^2 +\psi_u^2\right]\d u\,\middle\vert\,\widehat{\mathcal{F}}_t\right]\\
\label{eq:H^0 classical}
&=\frac12\mathbb{E}^\mathbb{Q}\left[\int_t^T\left[\left(\widehat{\lambda}_u^S\right)^2 +\left(\psi_u^E\right)^2\right]\d u\,\middle\vert\,\widehat{\mathcal{F}}_t\right]=\mathcal{H}_t(\mathbb{Q}^E,\mathbb{P}),
\end{align}
where $\mathbb{Q}^E$ is the MEMM, $\mathcal{H}_t(\mathbb{Q}^E,\mathbb{P})$ the \textit{minimal conditional relative entropy} and $\psi^E=\psi^0$ the \textit{minimal entropy control process}. The value processes with and without the claim are
\begin{equation}
\label{eq:classical value processes}
\begin{aligned}
v^C(t,X_t,S_t,Y_t)&=-\exp\left(-\gamma X_t-H^C(t,S_t,Y_t)\right),\\
v^0(t,X_t,S_t,Y_t)&=-\exp\left(-\gamma X_t-\mathcal{H}_t(\mathbb{Q}^E,\mathbb{P})\right).
\end{aligned}
\end{equation}
Because of the relative entropy additivity $\mathcal{H}_t(\mathbb{Q},\mathbb{P})=\mathcal{H}_t(\mathbb{Q},\mathbb{Q}^E)+\mathcal{H}_t(\mathbb{Q}^E,\mathbb{P})$ from (\ref{eq:relative entropy additivity}), the classical indifference price is the solution of the dual control problem
\begin{align}
\notag
p(t,S_t,Y_t)&=-\frac{1}{\gamma}\essinf_{\psi\in\Psi}\left(\mathbb{E}^\mathbb{Q}\left[\frac12\int_t^T\left[\psi_u^2 -\left(\psi_u^E\right)^2\right]\d u\,\middle\vert\,\widehat{\mathcal{F}}_t\right]-\gamma \mathbb{E}^\mathbb{Q}\left[C(Y_T)\,\middle\vert\,\widehat{\mathcal{F}}_t\right]\right)\\
\notag
&=-\frac{1}{\gamma}\essinf_{\psi\in\Psi}\left(\mathcal{H}_t(\mathbb{Q},\mathbb{Q}^E)-\gamma \mathbb{E}^\mathbb{Q}\left[C(Y_T)\,\middle\vert\,\widehat{\mathcal{F}}_t\right]\right)\\
\label{eq:classical indifference price probabilistic representation}
&=\esssup_{\psi\in\Psi}\left(\mathbb{E}^\mathbb{Q}\left[C(Y_T)\,\middle\vert\,\widehat{\mathcal{F}}_t\right]-\frac{1}{\gamma}\mathcal{H}_t(\mathbb{Q},\mathbb{Q}^E)\right),
\end{align}
which was shown by Monoyios \cite[p.~903]{M2013}. The HJB equation for (\ref{eq:H^C classical}) is
\begin{equation*}
H_t^C+\mathcal{A}_{S,Y}^{\mathbb{Q}^M}H^C+\frac12\left(\widehat{\lambda^S}\right)^2+\min_{\psi\in\Psi}\left[\frac12 \psi^2-\sqrt{1-\rho^2}\sigma^Yy\psi p_y\right]=0,\; H^C(T,s,y)=-\gamma C(y).
\end{equation*}
Solving yields the optimal control $\psi^C=\sqrt{1-\rho^2}\sigma^Y y H_y^C$ and further the PDE
\begin{equation}
\label{eq:H^C PDE classical}
H_t^C+\mathcal{A}_{S,Y}^{\mathbb{Q}^M}H^C+\frac12\left(\widehat{\lambda^S}\right)^2-\frac12(1-\rho^2)\left(\sigma^YyH_y^C\right)^2=0,\; H^C(T,s,y)=-\gamma C(y).
\end{equation}
Without the claim, the same approach returns $\psi^E=\sqrt{1-\rho^2}\sigma^YyH_y^0$ and an analogous PDE for $H^0$ with $H^0(T,s,y)=0$.
Hence, the optimal (hedging) control is
\begin{equation}
\label{eq:classical optimal control}
\psi^H=\psi^C-\psi^E=-\gamma\sqrt{1-\rho^2}\sigma^Yyp_y.
\end{equation}
Subtract the PDEs (\ref{eq:H^C PDE classical}) for $H^C$ and $H^0$ according to (\ref{eq:indifference price entropic representation}) and apply the identities $-\gamma p_y=H_y^C-H_y^0$ and $\frac12\gamma p_y^2-p_yH_y^0=\frac{1}{2\gamma}\left(\left(H_y^C\right)^2-\left(H_y^0\right)^2\right)$ to obtain the PDE for $p$,
\begin{equation*}
p_t+\mathcal{A}_{S,Y}^{\mathbb{Q}^M}p+\frac12\gamma(1-\rho^2)\left(\sigma^Yyp_y\right)^2-\sqrt{1-\rho^2}\sigma^Yyp_y\psi^E=0,\quad p(T,s,y)=C(y).
\end{equation*}
Expressed by the differential operator $\mathcal{A}_{S,Y}^{\mathbb{Q}^E}$, the indifference price PDE has the form
\begin{equation}
\label{eq:p PDE classical}
p_t+\mathcal{A}_{S,Y}^{\mathbb{Q}^E}p+\frac12\gamma(1-\rho^2)\left(\sigma^Yyp_y\right)^2=0,\quad p(T,s,y)=C(y),
\end{equation}
and the marginal utility-based price process is $p^M(t,S_t,Y_t)=\mathbb{E}^{\mathbb{Q}^E}\left[C(Y_T)\,\middle\vert\,\widehat{\mathcal{F}}_t\right]$.\par
In comparison to the well known classical case (\ref{eq:classical indifference price probabilistic representation}), (\ref{eq:p PDE classical}), the relative entropy term in the forward indifference problem of Theorem \ref{thm:forward indifference price valuation} is computed with respect to $\mathbb{Q}^M$ instead of $\mathbb{Q}^E$. The reason is that through the suitable choice of the mean-variance trade-off process $A_t=\int_0^t(\widehat{\lambda}_u^S)^2 \d u$, the conditional relative entropy $\mathcal{H}_t(\mathbb{Q},\mathbb{P})$ under the physical measure $\mathbb{P}$ in the minimal entropy function $H^C$ is transformed into the relative entropy $\mathcal{H}_t(\mathbb{Q},\mathbb{Q}^M)$ under the MMM by eliminating the entropy term $\mathcal{H}_t(\mathbb{Q}^M,\mathbb{P})$. These representations for American versions of the indifference prices were derived by Leung and Sircar \cite{LS2009} for classical utility and by Leung, Sircar and Zariphopoulou \cite{LSZ2012} for forward utility. The optimal hedging control $\psi^H$ in (\ref{eq:optimal control}) and (\ref{eq:classical optimal control}) have the same representation formula. The classical value processes (\ref{eq:classical value processes}) have, in comparison to the value processes (\ref{eq:forward value process with claim entropic representation}), (\ref{eq:forward value process without claim entropic representation}) of the forward model, no trade-off term, which only shows up in the forward performance process. In the forward problem, the optimal control without the claim vanishes, i. e. $\psi^0=\psi^M=0$, but in the classical model $\psi^0=\psi^E$ is not in general zero. This difference only occurs in the partial information scenario when $z_0^S>z_0^Y$ from (\ref{eq:MPR dependencies}). In the case $z_0^S\leq z_0^Y$, the stock's MPR $\widehat{\lambda}^S$ loses the dependence on the non-traded asset price $Y$, so that, after (\ref{eq:dS, dY measure change}), $Y$ is directly affected by $\psi$ and therefore $\widehat{\lambda}^S$ becomes independent of $\psi$. Thus, the drift term is excluded from the minimal entropy process (\ref{eq:H^0 classical}). If the full information scenario is applied, then the classical problem takes the MMM $\mathbb{Q}^E=\mathbb{Q}^M$ because of $\psi^E=0$ and the trade-off term with the drift $\lambda^S$ under the background filtration $\mathbb{F}$ is again excluded from (\ref{eq:H^0 classical}). In conclusion, an appropriate selection of the initial variance estimations $z_0^S, z_0^Y$ with $z_0^S\leq z_0^Y$ in the Kalman-Bucy filter under partial information from Proposition~\ref{thm:model under partial information}, ensures the same pricing in the forward and classical model.
\begin{remark}[Distortion solution of the indifference price]
\label{rem:distortion}
Monoyios \cite{M2006} proved, that if the asset prices follow SDEs with stochastic volatilities of the form
\begin{equation*}
\d S_t = \sigma(Y_t)S_t(\lambda(Y_t)\d t+\d W_t),\quad \d Y_t = a(Y_t)\d t + b(Y_t)\left(\rho\d W_t+\sqrt{1-\rho^2}\d W_t^\perp\right),
\end{equation*}
then the \textit{distortion transformation} from Musiela and Zariphopoulou \cite[pp.~222--223]{MZ2004} leads to the solution of the classical indifference price PDE (\ref{eq:p PDE classical}),
\begin{equation}
\label{eq:indifference price probabilistic representation}
p(t,y)=\frac{1}{\gamma(1-\rho^2)}\log\mathbb{E}^{\mathbb{Q}^E}_{t,y}\left[\exp\left(\gamma(1-\rho^2)C(Y_T)\right)\right],
\end{equation}
given by Oberman and Zariphopoulou \cite{OZ2003}. Leung et al. \cite[pp.~16--17]{LSZ2012} gave the solution in the forward performance model using the appropriate measure $\mathbb{Q}^M$. The indifference price solution under full information looks like (\ref{eq:indifference price probabilistic representation}) with $\mathbb{Q}^M$ and can be found in Henderson and Hobson \cite[p.~344]{HH2002}, Musiela and Zariphopoulou \cite[p.~233]{MZ2004} and Monoyios \cite[p.~248]{M2004}.
\end{remark}
\subsection{Residual risk}
In Subsection~\ref{subsec:Perfect hedging in a complete market} we have discussed the complete market case, where perfect hedging of the stock $S$ by the derivative $C(Y)$ is possible, when the underlying non-traded asset price $Y$ perfectly correlates with the stock price. One says, that the stock is \textit{replicated} by the derivative on the non-tradeable underlying. If the asset prices are not perfectly correlated, as in Subsection~\ref{subsec:Forward performance problem in an incomplete market}, then the hedge becomes imperfect and a non-hedgeable \textit{basis risk (hedging error)} remains. The \textit{basis} is the difference between the price of the asset to be hedged and the price of the hedging instrument, which is why \textit{residual risk} is commonly also referred to as basis risk. Hedging a financial instrument by another correlated instrument is called \textit{cross hedging}.\par
The reasons why an instrument is practically non-tradeable are diverse. For instance, its liquidity (trading volume) in the market could be very low or the spreads and commission fees very high, so that trading is not economical. Or it is simply not tradeable, because the instrument is an abstract synthetic product, like an index. There are many examples in the commodities and \textit{OTC (over-the-counter)} and derivatives markets with exotic products like weather and insurance indices or credit default derivatives. Ankirchner and Imkeller \cite{AI2001} introduced a typical example for a cross hedge, where an airline company wants to manage kerosene price risk. Since there is no liquid kerosene futures market, the airline company may fall back on futures on less refined oil, such as crude oil futures, for hedging its kerosene risk. This is a reasonable approach, if the price evolutions of kerosene and crude oil are highly correlated. Ankirchner, Imkeller and Popier \cite{AIP2008} dealt with optimal cross hedging strategies for insurance related derivatives. Other papers dealing with cross hedging including practical examples are Ankirchner et al. \cite{ADHP2010},  \cite{AID2010}.
\begin{definition}[Residual risk process]
\label{def:residual risk}
Suppose, the investor shorts the claim $C(Y)$ at time $t=0$ for the price $p(0,S_0,Y_0)$. To hedge this position over $[0,T]$, the optimal hedging strategy $\theta^H$ is used. His overall portfolio value is given by the \textit{residual risk process} $\varrho:=(\varrho_t)_{0\leq t\leq T}$, defined by
\begin{equation}
\label{eq:residual risk}
\varrho_t=X_t-p(t,S_t,Y_t),
\end{equation}
with initial and terminal values $\varrho_0=0,\; \varrho_T=X_T-C(Y_T)$ and the forward indifference price $p$. The terminal residual risk is the terminal portfolio value that appeared in the forward performance problem (\ref{eq:value process C}). The stock's position value process is given by
\begin{equation}
\label{eq:stock's position value}
\d X_t=\theta_t^H\d S_t+r(X_t-\theta_tS_t)\d t=\theta_t^H\d S_t,\quad X_0=p(0,S_0,Y_0)
\end{equation}
and the riskless interest rate $r=0$.
\end{definition}
\begin{proposition}[Residual risk process]
\label{thm:residual risk SDE}
The residual risk process of the forward performance-based model under the partial information scenario solves the SDE
\begin{equation}
\label{eq:residual risk SDE}
\d\varrho_t=\frac12\gamma(1-\rho^2)\left(\sigma^YY_tp_y(t,S_t,Y_t)\right)^2\d t-\sqrt{1-\rho^2}\sigma^Y Y_tp_y(t,S_t,Y_t)\d{\widehat{W}}_t^\perp.
\end{equation}
\end{proposition}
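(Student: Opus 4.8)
The plan is to compute the stochastic differential of $\varrho_t = X_t - p(t,S_t,Y_t)$ directly, by differentiating the two summands separately and then using the pricing PDE (\ref{eq:p PDE}) to simplify the resulting drift.

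First I would rewrite the stock-position dynamics (\ref{eq:stock's position value}). Since $\d X_t = \theta_t^H\d S_t$ and $S$ follows (\ref{eq:dS, dY partial}), inserting the optimal hedging strategy (\ref{eq:optimal hedging strategy indifference price}) of Theorem~\ref{thm:optimal hedging strategy indifference price} and using $\theta_t^H\sigma^SS_t = \sigma^SS_tp_s + \rho\sigma^YY_tp_y$ gives
\begin{equation*}
\d X_t = \left(\sigma^SS_tp_s + \rho\sigma^YY_tp_y\right)\left(\widehat{\lambda}_t^S\d t + \d{\widehat{W}}_t^S\right).
\end{equation*}
Next I would apply It\^o's lemma to $p(t,S_t,Y_t)$ under $\mathbb{P}$, using the asset dynamics (\ref{eq:dS, dY partial}) together with $\widehat{W}^Y = \rho\widehat{W}^S + \sqrt{1-\rho^2}\,\widehat{W}^\perp$ and the quadratic covariation $\d\langle S,Y\rangle_t = \rho\sigma^S\sigma^YS_tY_t\d t$. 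This produces a drift containing $p_t$, the first-order terms $\sigma^SS_t\widehat{\lambda}_t^Sp_s + \sigma^YY_t\widehat{\lambda}_t^Yp_y$ and the second-order terms, together with the martingale part $\sigma^SS_tp_s\d{\widehat{W}}_t^S + \sigma^YY_tp_y\d{\widehat{W}}_t^Y$. I would then split the last term via the correlation decomposition so that the driving noises $\widehat{W}^S$ and $\widehat{W}^\perp$ appear explicitly.

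The key step is to eliminate $p_t$ using the PDE (\ref{eq:p PDE}), which reads $p_t = -\mathcal{A}_{S,Y}^{\mathbb{Q}^M}p - \frac12\gamma(1-\rho^2)(\sigma^Yyp_y)^2$. Recalling from (\ref{eq:dS, dY measure change}) with $\psi=0$ that under $\mathbb{Q}^M$ the stock is a martingale while $Y$ carries the drift $\sigma^YY_t(\widehat{\lambda}_t^Y - \rho\widehat{\lambda}_t^S)$, the generator $\mathcal{A}_{S,Y}^{\mathbb{Q}^M}$ consists of exactly the second-order operator plus this single $p_y$ drift. Substituting $p_t$ therefore cancels all second-order terms of the It\^o expansion against those in $\mathcal{A}_{S,Y}^{\mathbb{Q}^M}p$, and the remaining $p_y$ contributions collapse via $-\sigma^YY_t(\widehat{\lambda}_t^Y - \rho\widehat{\lambda}_t^S)p_y + \sigma^YY_t\widehat{\lambda}_t^Yp_y = \rho\sigma^YY_t\widehat{\lambda}_t^Sp_y$, so that the drift of $\d p$ reduces to $\widehat{\lambda}_t^S(\sigma^SS_tp_s + \rho\sigma^YY_tp_y) - \frac12\gamma(1-\rho^2)(\sigma^YY_tp_y)^2$.

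Finally I would subtract, forming $\d\varrho_t = \d X_t - \d p$. By construction the drift $\widehat{\lambda}_t^S(\sigma^SS_tp_s + \rho\sigma^YY_tp_y)$ and the $\widehat{W}^S$-diffusion $(\sigma^SS_tp_s + \rho\sigma^YY_tp_y)\d{\widehat{W}}_t^S$ are common to $\d X_t$ and $\d p$ and cancel, leaving precisely the non-linear drift term $\frac12\gamma(1-\rho^2)(\sigma^YY_tp_y)^2\d t$ and the orthogonal diffusion $-\sqrt{1-\rho^2}\sigma^YY_tp_y\d{\widehat{W}}_t^\perp$, which is the claimed SDE (\ref{eq:residual risk SDE}). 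I expect the main obstacle to be the bookkeeping around the generator $\mathcal{A}_{S,Y}^{\mathbb{Q}^M}$: one must correctly identify that the minimal martingale measure leaves the $Y$-drift as $(\widehat{\lambda}_t^Y - \rho\widehat{\lambda}_t^S)$, so that the physical $\widehat{\lambda}_t^Y$ terms annihilate and only the $\widehat{\lambda}_t^S$-weighted combination matching the stock-driven wealth survives; any sign slip there corrupts the cancellation.
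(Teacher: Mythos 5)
Your proposal is correct and follows essentially the same route as the paper: decompose $\d\varrho_t=\theta_t^H\d S_t-\d p$, insert the optimal hedging strategy of Theorem~\ref{thm:optimal hedging strategy indifference price}, expand $p$ by It\^{o}'s lemma, and use the PDE (\ref{eq:p PDE}) to reduce the drift to the non-linear term. The only (cosmetic) difference is that you keep the $\mathbb{P}$-Brownian motions and identify the generator $\mathcal{A}_{S,Y}^{\mathbb{Q}^M}$ after substituting the PDE, whereas the paper rewrites the dynamics in terms of the $\mathbb{Q}^M$-Brownian motions from the outset; the cancellations are identical.
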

\begin{proof}
By Definition~\ref{def:residual risk}, the residual risk has the differential expression
\begin{equation*}
\d\varrho_t\xlongequal{(\ref{eq:residual risk})}\d X_t-\d p(t,S_t,Y_t)\xlongequal{(\ref{eq:stock's position value})}\theta_t^H\d S_t-\d p(t,S_t,Y_t).
\end{equation*}
Using Theorem~\ref{thm:optimal hedging strategy indifference price}, Theorem~\ref{thm:forward indifference price valuation} and It\^{o}'s lemma, we obtain the SDE
\begin{align*}
\d\varrho_t\xlongequal{(\ref{eq:optimal hedging strategy indifference price})}&\left(p_s+\rho\frac{\sigma^YY_t}{\sigma^SS_t}p_y\right)\d S_t\\
&-\left(p_t\d t+p_s\d S_t+p_y\d Y_t+\frac12\Big(p_{ss}\d\langle S\rangle_t+p_{yy}\d\langle Y\rangle_t+p_{sy}\d\langle S,Y\rangle_t\Big)\right)\\
\xlongequal{(\ref{eq:dS, dY measure change})}&\rho\sigma^YY_tp_y\d{\widehat{W}}_t^{S,\mathbb{Q}}-\left(p_t+\mathcal{A}_{S,Y}^{\mathbb{Q}^M}p\right)\d t-\sigma^Y Y_tp_y\d{\widehat{W}}_t^{Y,\mathbb{Q}^M}\\
\xlongequal{(\ref{eq:brownian motion correlation decomposition})}&
-\left(p_t+\mathcal{A}_{S,Y}^{\mathbb{Q}^M}p\right)\d t-\sqrt{1-\rho^2}\sigma^Y Y_tp_y\d{\widehat{W}}_t^\perp\\
\xlongequal{(\ref{eq:p PDE})}&
\frac12\gamma(1-\rho^2)\left(\sigma^YY_tp_y\right)^2\d t-\sqrt{1-\rho^2}\sigma^Y Y_tp_y\d{\widehat{W}}_t^\perp,
\end{align*}
for the residual risk process $\varrho$, where $\widehat{W}^{Y,\mathbb{Q}^M}=\rho\widehat{W}^{S,\mathbb{Q}}+\sqrt{1-\rho^2}\,\widehat{W}^\perp$.
\end{proof}
The version of the residual risk SDE (\ref{eq:residual risk SDE}) under full information and classical utility is in \cite{MZ2004} and \cite{M2007}. The residual risk evolution is expressed by a forward indifference price-based drift term containing the coefficient $1-\rho^2$ together with a stochastic term including the orthogonal Brownian motion $\widehat{W}^\perp$ and the scale parameter $\sqrt{1-\rho^2}$. In the complete market scenario $\abs{\rho}=1$, the residual risk $\varrho$ vanishes and no hedging error remains. But even if the absolute correlation is very high, meaning close to $1$, then a considerably high residual risk remains. If the correlation was high as $\rho=98\%$, the scale parameter of the drift term would be $1-\rho^2\approx 4\%$ and of the stochastic term even $\sqrt{1-\rho^2}\approx20\%$. This means, that the standard deviation of the basis would still represent about $20\%$ of the total risk induced by the stochastic term. If the correlation is almost perfect, a small change leads to significant change in the percentage of the basis risk relative to total risk. Conversely, in the virtually uncorrelated case, a small change in the correlation leads to essentially no change in the percentage of basis risk relative to total risk (see Figure \ref{fig:correlation}). This fact complicates effective hedging, since asset correlations in real markets do not tend to perfectly correlate. Boucrelle et al.~\cite{BLS1996} analysed the U.S. stock and bond markets and figured out that correlations fluctuate widely over time. In addition, correlations increase in periods of high market volatility. Sandoval Junior and De Paula Franca~\cite{SD2012} have come to a similar conclusion with more recent data. Using eigenvalues and eigenvectors of correlation matrices of main financial market indices, they have shown on the basis of price data from the largest crises of the last decades, that high volatility of markets is directly linked with strong correlations between them. When instruments like \textit{Exchange Traded Funds (ETF)} or derivatives try to replicate another (untradeable) instrument like an index, then the measured correlation is not always perfect as desired and a so-called \textit{tracking-error} arises. This was shown by Jorion~\cite{J2003}, Aber, Can and Li~\cite{ACL2009} and Lobe, R\"oder and Schmidhammer~\cite{LRS2011} in various settings. Models for dynamic conditional correlation were studied by Engle~\cite{E2002} and Franses and Hafner~\cite{FH2009}.
\vspace{-1em}
\begin{figure}[ht]
\caption{Effect of the correlation coefficient on the residual risk}
\label{fig:correlation}
\centering
\includegraphics[scale=0.64]{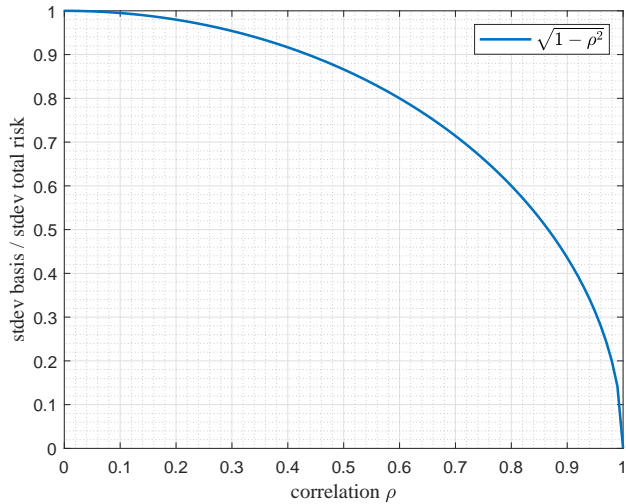}
\end{figure}
\vspace{-1em}
\begin{remark}[Effect of correlation on diversification]
As the instability of the residual risk $\varrho(\rho)$ for absolute correlations close to $1$ makes hedging more difficult, a similar effect can be found in classical portfolio theory from Markowitz~\cite{M1952}. For diversification purpose, consider a portfolio $P=\vartheta S+(1-\vartheta)Y$ containing two assets $S,Y$ with relative weights $\vartheta,1-\vartheta\in[0,1]$. The standard deviation of $P$ is then
\begin{equation*}
\sigma^P=\sqrt{\vartheta^2\left(\sigma^S\right)^2+(1-\vartheta)^2\left(\sigma^Y\right)^2+2\rho\vartheta(1-\vartheta)\sigma^S\sigma^Y}\leq\vartheta\sigma^S+(1-\vartheta)\sigma^Y.
\end{equation*}
The inequation follows from the evaluation of the binomial $(\vartheta\sigma^S+(1-\vartheta)\sigma^Y)^2$ and delivers an equation when the assets are perfectly correlated. Since $\sqrt{\rho}$ has a low slope when $\rho$ is close to $1$, a decrease of $\sigma^P$ and therefore a diversification effect only occurs, when $\rho$ rapidly falls towards $0$. In the case of negative correlation this effect reverses. A small negative correlation may significantly lower the portfolio volatility. Sharpe~\cite{S1963}, \cite{S1964} and Lintner~\cite{L1965} deal also with classical portfolio theory.
\end{remark}
\subsection{Pay-off decompositions and asymptotic expansions}
In this subsection, we shall obtain pay-off decompositions of the claim followed by an asymptotic representation for the forward indifference price valid for small values of  risk aversion. We pursue an approach as for classical utility from Monoyios \cite{M2010}.\par
Recall from (\ref{eq:measure change Brownian motions}) and (\ref{eq:dS, dY measure change}) the asset price dynamics under $\mathbb{Q}^M$,
\begin{align*}
\d S_t&=\sigma^S S_t\d{\widehat{W}}_t^{S,\mathbb{Q}},\\
\d Y_t&=\sigma^Y Y_t\left[(\widehat{\lambda}_t^Y-\rho\widehat{\lambda}_t^S)\d t+\d{\widehat{W}}_t^{Y,\mathbb{Q}^M}\right],
\end{align*}
with $\widehat{W}^{S,\mathbb{Q}^M}=\widehat{W}^{S,\mathbb{Q}},\;\widehat{W}^{\perp,\mathbb{Q}^M}=\widehat{W}^\perp$ and $\widehat{W}^{Y,\mathbb{Q}^M}=\rho\widehat{W}^{S,\mathbb{Q}}+\sqrt{1-\rho^2}\,\widehat{W}^\perp$.
\begin{definition}[Preference-adjusted exponential of the residual risk]
\label{def:preference-adjusted exponential residual risk}
The process
\begin{equation}
\label{eq:preference-adjusted exponential residual risk}
L:=(L_t)_{0\leq t\leq T},\quad 
L_t:=-\exp\left(-\gamma \varrho_t\right),\quad L_0=-1,
\end{equation}
is called \textit{preference-adjusted exponential of the residual risk (PAERR)}.
\end{definition}
\begin{corollary}[Preference-adjusted exponential of the residual risk]
\label{thm:PAERR}
The PAERR process $L$ from Definition \ref{def:preference-adjusted exponential residual risk} is a $(\mathbb{P}, \widehat{\mathbb{F}})$-martingale with dynamics
\begin{equation}
\label{eq:dL}
\d L_t = \sqrt{1-\rho^2}\sigma^Y Y_tp_y(t,S_t,Y_t)\d{\widehat{W}}_t^\perp.
\end{equation}
\end{corollary}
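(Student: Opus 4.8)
The plan is to read off the dynamics of $L$ from a single application of It\^{o}'s lemma to the smooth map $\varrho\mapsto -e^{-\gamma\varrho}$, feeding in the residual risk dynamics (\ref{eq:residual risk SDE}) established in Proposition~\ref{thm:residual risk SDE}, and then to promote the resulting driftless representation from a local to a genuine $(\mathbb{P},\widehat{\mathbb{F}})$-martingale.

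First I would record, for $f(\varrho):=-e^{-\gamma\varrho}$, the derivatives $f'(\varrho)=\gamma e^{-\gamma\varrho}=-\gamma L$ and $f''(\varrho)=-\gamma^2 e^{-\gamma\varrho}=\gamma^2 L$, and extract from (\ref{eq:residual risk SDE}) the diffusion coefficient $-\sqrt{1-\rho^2}\sigma^Y Y_t p_y$ of $\varrho$, so that $\d\langle\varrho\rangle_t=(1-\rho^2)(\sigma^Y Y_t p_y)^2\,\d t$. Plugging these into $\d L_t=f'(\varrho_t)\,\d\varrho_t+\tfrac12 f''(\varrho_t)\,\d\langle\varrho\rangle_t$, the two drift contributions are $-\gamma L_t\cdot\tfrac12\gamma(1-\rho^2)(\sigma^Y Y_t p_y)^2$, coming from the drift of $\varrho$, and $+\tfrac12\gamma^2 L_t(1-\rho^2)(\sigma^Y Y_t p_y)^2$, coming from the quadratic variation; these are equal and opposite and cancel identically, leaving only the stochastic term $\gamma L_t\sqrt{1-\rho^2}\sigma^Y Y_t p_y\,\d\widehat{W}_t^\perp$, which establishes the driftless dynamics (\ref{eq:dL}).

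The point worth emphasising is that this cancellation is structural rather than coincidental: the drift of $\varrho$ in (\ref{eq:residual risk SDE}) was produced by the forward indifference price PDE (\ref{eq:p PDE}), and its coefficient $\tfrac12\gamma(1-\rho^2)(\sigma^Y Y_t p_y)^2$ is exactly the second-order It\^{o} correction needed to annihilate the drift of $-e^{-\gamma\varrho}$. Consequently $L$ is automatically a local martingale, and the vanishing of the drift requires no further input once (\ref{eq:residual risk SDE}) is in hand.

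The main obstacle is the final upgrade to a true martingale, since a driftless It\^{o} process is only locally a martingale in general. Here I would avoid ad hoc moment estimates by identifying $-L=e^{-\gamma\varrho}$ with the stochastic exponential $\mathcal{E}(\phi\cdot\widehat{W}^\perp)$ for $\phi_t=\gamma\sqrt{1-\rho^2}\sigma^Y Y_t p_y=-\psi^H_t$, where $\psi^H$ is the optimal hedging control (\ref{eq:optimal control}). Since $\psi^H$ lies in the admissible class $\Psi$ for which Novikov's condition (\ref{eq:Novikov's condition}) holds, and $\phi$ has the same square as $\psi^H$, the exponential $\mathcal{E}(\phi\cdot\widehat{W}^\perp)$ is a genuine $(\mathbb{P},\widehat{\mathbb{F}})$-martingale of unit expectation; normalising by $L_0=-1$ then transfers the martingale property to $L$ directly.
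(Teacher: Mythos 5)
Your It\^{o} computation is exactly the paper's: differentiate $\varrho\mapsto-e^{-\gamma\varrho}$, feed in (\ref{eq:residual risk SDE}), and observe that the drift of $\varrho$ (which was manufactured by the price PDE (\ref{eq:p PDE})) cancels the second-order It\^{o} correction, so the structural nature of the cancellation that you emphasise is indeed the right way to see it. One point you should not gloss over: the computation you correctly carry out yields
\begin{equation*}
\d L_t=\gamma L_t\sqrt{1-\rho^2}\,\sigma^Y Y_t\,p_y(t,S_t,Y_t)\,\d{\widehat{W}}_t^\perp,
\end{equation*}
which carries the prefactor $\gamma L_t$ and is therefore \emph{not} literally (\ref{eq:dL}); your own identification of $-L=e^{-\gamma\varrho}$ with the stochastic exponential $\mathcal{E}(\phi\cdot\widehat{W}^\perp)$, $\phi=\gamma\sqrt{1-\rho^2}\sigma^YY_tp_y=-\psi^H_t$, makes this unavoidable, since a stochastic exponential has diffusion coefficient proportional to the process itself. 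The paper's own proof commits the same slip --- it writes $-\gamma L_t\d\varrho_t+\tfrac12\gamma^2L_t\d\langle\varrho\rangle_t$ and then silently drops the factor $\gamma L_t$ --- so you have not introduced an error, but having done the calculation you should flag that (\ref{eq:dL}) as displayed needs the factor $\gamma L_t$ (or $L$ needs to be renormalised) for the statement, and for the subsequent uses of $\d\langle L\rangle_t$, to be internally consistent. Where you genuinely improve on the paper is the upgrade from local to true martingale: the paper's justification (``because $\widehat{W}^\perp$ is a martingale under both measures'') only delivers a local martingale, whereas your route --- recognise $-L$ as $\mathcal{E}\bigl((-\psi^H)\cdot\widehat{W}^\perp\bigr)$ with $\psi^H$ the optimal control (\ref{eq:optimal control}), then invoke Novikov's condition (\ref{eq:Novikov's condition}), which holds for $\psi^H\in\Psi$ and hence a fortiori for $\phi$ alone since $(\widehat{\lambda}^S)^2\geq 0$ --- closes that gap cleanly, at the modest cost of having to note that the optimiser $\psi^H$ does belong to the admissible class $\Psi$, which the paper assumes implicitly.
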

\begin{proof}
By Proposition \ref{thm:residual risk SDE} and It\^{o}'s lemma it is
\begin{equation*}
\d L_t\xlongequal{(\ref{eq:preference-adjusted exponential residual risk})}-\gamma L_t\d\varrho_t+\frac12 \gamma^2L_t\d\langle\varrho\rangle_t\xlongequal{(\ref{eq:residual risk SDE})}\sqrt{1-\rho^2}\sigma^Y Y_tp_y(t,S_t,Y_t)\d{\widehat{W}}_t^\perp.
\end{equation*}
The martingale property follows, because the orthogonal Brownian motion $\widehat{W}^\perp$ is a martingale under both measures $\mathbb{Q}^M$ and $\mathbb{P}$.
\end{proof}
Corollary \ref{thm:PAERR} is similar to Proposition 6 of \cite[p.~237]{MZ2004} under full information and classical utility, but with the forward indifference price depending on $(S,Y)$ rather than the single variable $Y$ due to the partial information scenario. Under classical utility and partial information as in \cite[Subsection 4.1]{M2010}, the dynamics (\ref{eq:dL}) is in general a $(\mathbb{Q}^E,\widehat{\mathbb{F}})$-martingale. Remark, that therein the process $L$ starts with $L_0=0$ rather than $L_0=-1$. Since $L$ is a martingale, the classical exponential utility of the residual risk is $\mathbb{E}[U_0(\varrho_t)]=\mathbb{E}[-\exp(\gamma\varrho_t)]=\mathbb{E}[L_t]=L_0=-1$ and therefore remains constant, whereas the exponential forward utility of the residual risk,
\begin{equation*}
\mathbb{E}[U_t(\varrho_t)]=\mathbb{E}\left[-\exp\left(-\gamma\varrho_t+\frac12\int_0^t\left(\widehat{\lambda}_u^S\right)\d u\right)\right]=-\mathbb{E}\left[\exp\left(\frac12\int_0^t\left(\widehat{\lambda}_u^S\right)^2\d u\right)\right],
\end{equation*}
decreases over time.
\begin{corollary}[Pay-off decomposition]
\label{thm:pay-off decomposition}
The claim pay-off decomposes into
\begin{equation}
\label{eq:pay-off decomposition}
C(Y_T)=p(t,S_t,Y_t)+\int_t^T\theta_u^H\d S_u + L_T-L_t +\frac12\gamma \left(\langle L\rangle_T-\langle L\rangle_t\right),\quad 0\leq t \leq T,
\end{equation}
where $\theta^H$ is the optimal hedging strategy for the claim, given in Theorem \ref{thm:optimal hedging strategy indifference price}.
\end{corollary}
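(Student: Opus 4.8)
The plan is to read (\ref{eq:pay-off decomposition}) off the definition of the residual risk process together with the two dynamics already established, namely the residual-risk SDE (\ref{eq:residual risk SDE}) and the PAERR dynamics (\ref{eq:dL}); no fresh estimates are required, the argument being essentially algebraic. First I would use the bookkeeping built into Definition~\ref{def:residual risk}. Since $\varrho_t=X_t-p(t,S_t,Y_t)$ by (\ref{eq:residual risk}) and, because of the terminal condition $p(T,s,y)=C(y)$ in (\ref{eq:p PDE}), $\varrho_T=X_T-C(Y_T)$, we may write $C(Y_T)=X_T-\varrho_T$ and $X_t=p(t,S_t,Y_t)+\varrho_t$. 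The self-financing hedging wealth obeys $\d X_u=\theta_u^H\d S_u$ from (\ref{eq:stock's position value}), so that $X_T=X_t+\int_t^T\theta_u^H\d S_u$. Combining these observations gives the intermediate identity
\begin{equation*}
C(Y_T)=p(t,S_t,Y_t)+\int_t^T\theta_u^H\d S_u-\left(\varrho_T-\varrho_t\right),
\end{equation*}
which already isolates the price and the hedging profit-and-loss; it then remains only to re-express the residual increment $\varrho_T-\varrho_t$ through the PAERR process $L$.

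Next I would compare the two SDEs term by term. Abbreviating the common integrand by $\beta_u:=\sqrt{1-\rho^2}\sigma^YY_up_y(u,S_u,Y_u)$, the dynamics (\ref{eq:dL}) read $\d L_u=\beta_u\d{\widehat{W}}_u^\perp$, whence $\d\langle L\rangle_u=\beta_u^2\d u$, while the residual-risk SDE (\ref{eq:residual risk SDE}) reads $\d\varrho_u=\tfrac12\gamma\beta_u^2\d u-\beta_u\d{\widehat{W}}_u^\perp$. Thus the martingale part of $\varrho$ coincides with $-\d L_u$ and its finite-variation part is exactly $\tfrac12\gamma\,\d\langle L\rangle_u$. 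Integrating these two pieces over $[t,T]$ expresses $\varrho_T-\varrho_t$ through the increments $L_T-L_t$ and $\langle L\rangle_T-\langle L\rangle_t$; inserting the result into the intermediate identity and using $C(Y_T)=X_T-\varrho_T$ then delivers the pay-off decomposition (\ref{eq:pay-off decomposition}). Here the term $L_T-L_t$ carries the unhedgeable noise driven by the orthogonal Brownian motion $\widehat{W}^\perp$, and the bracket term is the convexity correction generated by the quadratic drift of the residual risk.

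I expect the only genuinely delicate point to be the coefficient and sign of the bracket term: one must recognise that the quadratic-drift coefficient $\tfrac12\gamma(1-\rho^2)(\sigma^YY_up_y)^2$ in (\ref{eq:residual risk SDE}) is precisely $\tfrac12\gamma$ times the density of $\langle L\rangle$ obtained from (\ref{eq:dL}), and to track carefully the sign reversal introduced when passing from $\varrho_T-\varrho_t$ to the $L$-terms via $C(Y_T)=X_T-\varrho_T$. Everything else is routine. Finally, the martingale property of $L$ from Corollary~\ref{thm:PAERR} guarantees that $L_T-L_t$ is a genuine $(\mathbb{P},\widehat{\mathbb{F}})$-martingale increment, so the decomposition cleanly separates the claim into a perfectly hedged part, $p(t,S_t,Y_t)+\int_t^T\theta_u^H\d S_u$, and a mean-adjusted orthogonal residual.
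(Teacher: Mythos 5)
Your argument is correct and is essentially the paper's own proof: the paper combines (\ref{eq:residual risk SDE}) and (\ref{eq:dL}) into $\d p(t,S_t,Y_t)=\theta_t^H\d S_t+\d L_t-\tfrac12\gamma\,\d\langle L\rangle_t$ and integrates over $[t,T]$, which is exactly your identification of the martingale part $-\d L_u$ and finite-variation part $\tfrac12\gamma\,\d\langle L\rangle_u$ of $\varrho$, read through $\d p=\theta_t^H\d S_t-\d\varrho_t$. One caveat on the very point you flagged as delicate: both your computation and the paper's produce the bracket term with a minus sign, $-\tfrac12\gamma\left(\langle L\rangle_T-\langle L\rangle_t\right)$ --- the sign required for the subsequent representation $p=p^M+\tfrac12\gamma\,\mathbb{E}^{\mathbb{Q}^M}\left[\langle L\rangle_T-\langle L\rangle_t\mid\widehat{\mathcal{F}}_t\right]$ of Corollary~\ref{thm:forward indifference price representation} to hold --- so the plus sign in the displayed equation (\ref{eq:pay-off decomposition}) is a typographical slip that your derivation does not in fact ``deliver'', and you should state the corrected sign rather than assert agreement with the display.
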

\begin{proof}
By Proposition \ref{thm:residual risk SDE} and Corollary \ref{thm:PAERR}, the differential of the forward indifference price is
\begin{align*}
\d p(t,S_t,Y_t)&\xlongequal{(\ref{eq:residual risk SDE})}-\frac12\gamma(1-\rho^2)\left(\sigma^YY_tp_y\right)^2\d t+\sqrt{1-\rho^2}\sigma^Y Y_tp_y\d{\widehat{W}}_t^\perp+\theta_t^H\d S_t\\
&\xlongequal{(\ref{eq:dL})}-\frac12\gamma\d \langle L\rangle_t+\d L_t +\theta_t^H\d S_t.
\end{align*}
Integration from $t$ to $T$ delivers the pay-off decomposition (\ref{eq:pay-off decomposition}).
\end{proof}
The classical version under the full information scenario of Corollary \ref{thm:pay-off decomposition} is Theorem 7 of \cite[p.~238]{MZ2004}. Under the partial information classical model of \cite[Lemma~1]{M2010}, the pay-off decomposition (\ref{eq:pay-off decomposition}) is measured under $\mathbb{Q}^E$, whereas the forward version always takes $\mathbb{Q}^M$. Pay-off decomposition is the suitable term, because $L$ is a $\mathbb{Q}^M$-martingale with respect to $\widehat{W}^\perp$, which is strongly orthogonal to the $\mathbb{Q}^M$-martingale $X_T-X_t=\int_t^T\theta_u^H\d S_u$, that in turn, is defined as a stochastic integral with respect to the $\mathbb{Q}^M$-Brownian motion $\widehat{W}^{S,\mathbb{Q}}$ induced by $S$. Mania and Schweizer \cite[pp.~2129--2130]{MS2005} obtained an analogous pay-off decomposition in a more general \textit{backward SDE model} under the classical framework.
\begin{definition}[Marginal preference-adjusted exponential of the residual risk]
\label{def:MPAERR}
Define the \textit{marginal preference-adjusted exponential of the residual risk (MPAERR)} by the process $L^M:=(L_t)_{0\leq t\leq T},\;L_t^M:=\lim_{\gamma\to 0}L_t$. With the marginal performance-based price $p^M$ from (\ref{eq:marginal performance-based price}) the evolution is $\d L_t^M=\sqrt{1-\rho^2}\sigma^Y Y_tp_y^M(t,S_t,Y_t)\d{\widehat{W}}_t^\perp$. The MPAERR is also a $\mathbb{Q}^M$-martingale.
\end{definition}
\begin{corollary}[F\"ollmer-Schweizer-Sondermann pay-off decomposition]
The claim pay-off admits the decomposition
\begin{equation}
\label{eq:Foellmer-Schweizer-Sondermann pay-off decomposition}
C(Y_T)=p^M(t,S_t,Y_t)+\int_t^T\theta_u^M\d S_u+L_T^M-L_t^M,\quad 0\leq t\leq T,
\end{equation}
where $p^M$ is the marginal performance-based price (\ref{eq:marginal performance-based price}) and $\theta^M$ the optimal hedging strategy (\ref{eq:optimal hedging strategy indifference price}) with $p^M$ in place of $p$.
\end{corollary}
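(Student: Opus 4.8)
The plan is to obtain the decomposition by passing to the limit $\gamma\to 0$ in the pay-off decomposition of Corollary~\ref{thm:pay-off decomposition}, with a direct It\^{o} argument serving as the cleaner, self-contained route that I would actually carry out in detail. I describe the direct route first, since it sidesteps the convergence subtleties of the limiting one.

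First I would record that the marginal price $p^M$ solves the \emph{linear} PDE obtained by setting $\gamma=0$ in (\ref{eq:p PDE}), namely $p_t^M+\mathcal{A}_{S,Y}^{\mathbb{Q}^M}p^M=0$ with $p^M(T,s,y)=C(y)$; this is exactly the Feynman-Kac content of the marginal representation (\ref{eq:marginal performance-based price}). Next I would apply It\^{o}'s lemma to $p^M(t,S_t,Y_t)$ under the $\mathbb{Q}^M$-dynamics of $(S,Y)$ recalled above. The drift collects precisely into $(p_t^M+\mathcal{A}_{S,Y}^{\mathbb{Q}^M}p^M)\d t$, which vanishes by the PDE, leaving the two martingale terms $p_s^M\sigma^S S_t\,\d{\widehat{W}}_t^{S,\mathbb{Q}}$ and $p_y^M\sigma^Y Y_t\,\d{\widehat{W}}_t^{Y,\mathbb{Q}^M}$. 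Using the correlation decomposition $\widehat{W}^{Y,\mathbb{Q}^M}=\rho\widehat{W}^{S,\mathbb{Q}}+\sqrt{1-\rho^2}\,\widehat{W}^\perp$ from (\ref{eq:brownian motion correlation decomposition}) together with $\d S_t=\sigma^S S_t\,\d{\widehat{W}}_t^{S,\mathbb{Q}}$, the $\widehat{W}^{S,\mathbb{Q}}$-terms assemble into $\bigl(p_s^M+\rho\tfrac{\sigma^Y Y_t}{\sigma^S S_t}p_y^M\bigr)\d S_t=\theta_t^M\,\d S_t$, which is exactly (\ref{eq:optimal hedging strategy indifference price}) with $p^M$ in place of $p$, while the remaining orthogonal term is precisely $\d L_t^M$ of Definition~\ref{def:MPAERR}. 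This gives $\d{p^M}(t,S_t,Y_t)=\theta_t^M\,\d S_t+\d L_t^M$; integrating from $t$ to $T$ and invoking $p^M(T,S_T,Y_T)=C(Y_T)$ yields (\ref{eq:Foellmer-Schweizer-Sondermann pay-off decomposition}). The only prerequisite here is the $\mathcal{C}^{1,2,2}$-regularity of $p^M$ needed to apply It\^{o}, which is inherited from the regularity of $p$ established earlier.

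In the limiting route I would instead send $\gamma\to 0$ term by term in (\ref{eq:pay-off decomposition}): $p\to p^M$ by (\ref{eq:marginal performance-based price}), $\theta^H\to\theta^M$ and $L_t\to L_t^M$ by their defining formulas and Definition~\ref{def:MPAERR}, while the quadratic-variation correction $\tfrac12\gamma(\langle L\rangle_T-\langle L\rangle_t)$ vanishes because $\langle L\rangle$ tends to the finite quadratic variation of $L^M$ while the prefactor tends to $0$. The main obstacle here is analytic bookkeeping rather than any conceptual difficulty: one must justify interchanging the limit with the stochastic integral $\int_t^T\theta_u^H\,\d S_u$ and with the quadratic-variation functional, which requires a dominated-convergence or uniform-integrability estimate on the family $\{p_y\}_{\gamma>0}$ and on $\theta^H$ as $\gamma\to 0$. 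For this reason I would favour the direct It\^{o} derivation, where the only ingredients beyond routine differentiation are the linearity of the PDE for $p^M$ and the strong orthogonality of $\widehat{W}^\perp$ to $\widehat{W}^{S,\mathbb{Q}}$ that makes $L^M$ a genuine martingale component transverse to the hedge $\int\theta^M\,\d S$.
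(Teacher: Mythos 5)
Your proposal is correct, but your preferred (direct It\^{o}) route is genuinely different from the paper's. The paper proves this corollary in one line: it takes $\gamma\to 0$ in the pay-off decomposition (\ref{eq:pay-off decomposition}) of Corollary~\ref{thm:pay-off decomposition} and invokes Definition~\ref{def:MPAERR} --- precisely your second, ``limiting'' route, and it does so without addressing the interchange-of-limits issues you rightly flag (convergence of $\int_t^T\theta_u^H\,\d S_u$ and of $\langle L\rangle$ as $\gamma\to 0$, which would need some uniform control on $p_y$ in $\gamma$). Your direct argument sidesteps all of that: since $p^M$ solves the linear PDE $p^M_t+\mathcal{A}_{S,Y}^{\mathbb{Q}^M}p^M=0$ with terminal datum $C(y)$, It\^{o}'s formula under the $\mathbb{Q}^M$-dynamics (\ref{eq:dS, dY measure change}) kills the drift and the martingale part regroups, via (\ref{eq:brownian motion correlation decomposition}), exactly into $\theta_t^M\,\d S_t+\d L_t^M$; integrating to $T$ gives (\ref{eq:Foellmer-Schweizer-Sondermann pay-off decomposition}). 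This is in effect the computation from the proofs of Proposition~\ref{thm:residual risk SDE} and Corollary~\ref{thm:pay-off decomposition} re-run with $p^M$ in place of $p$, with the nonlinear $\tfrac12\gamma\,\d\langle L\rangle$ correction absent from the outset rather than discarded in a limit. What the paper's route buys is brevity and a transparent link to the $\gamma$-expansion used later in Proposition~\ref{thm:asymptotic expansion of the forward indifference price}; what yours buys is a self-contained proof whose only hypothesis is the $\mathcal{C}^{1,2,2}$-regularity of $p^M$, with no convergence bookkeeping. Both are valid; your assessment of where the real work lies in each is accurate.
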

\begin{proof}
Equation (\ref{eq:Foellmer-Schweizer-Sondermann pay-off decomposition}) is the \textit{F\"ollmer-Schweizer-Sondermann pay-off decomposition} \cite{FS1991}, \cite{FSS1986} under $\mathbb{Q}^M$ in our model and is immediately implied by Corollary~\ref{thm:pay-off decomposition} and Definition~\ref{def:MPAERR} as $\gamma\to 0$ (cf. \cite[Corollary~1]{M2010} for the classical model).
\end{proof}
\begin{corollary}[Forward indifference price representation]
\label{thm:forward indifference price representation}
The forward indifference price admits the representation
\begin{equation}
\label{eq:forward indifference price representation}
p(t,S_t,Y_t)=p^M(t,S_t,Y_t)+\frac12\gamma\mathbb{E}^{\mathbb{Q}^M}\left[\langle L\rangle_T-\langle L\rangle_t\,\middle\vert\,\widehat{\mathcal{F}}_t\right].
\end{equation}
\end{corollary}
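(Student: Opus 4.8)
The plan is to integrate the differential of the forward indifference price that appears inside the proof of Corollary~\ref{thm:pay-off decomposition}, namely
\[
\d p(t,S_t,Y_t)=-\tfrac12\gamma\,\d\langle L\rangle_t+\d L_t+\theta_t^H\,\d S_t,
\]
over $[t,T]$, use the terminal condition $p(T,S_T,Y_T)=C(Y_T)$ to solve for $p(t,S_t,Y_t)$, and then apply the conditional expectation $\mathbb{E}^{\mathbb{Q}^M}[\,\cdot\,\vert\,\widehat{\mathcal{F}}_t]$ under the minimal martingale measure. The whole point is that taking $\mathbb{Q}^M$-expectation annihilates the two stochastic (martingale) contributions and leaves only the finite-variation term driven by $\langle L\rangle$, which is exactly the correction appearing in~(\ref{eq:forward indifference price representation}).

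In carrying this out I would treat the four resulting terms separately. The claim term gives $\mathbb{E}^{\mathbb{Q}^M}[C(Y_T)\,\vert\,\widehat{\mathcal{F}}_t]=p^M(t,S_t,Y_t)$ directly from the definition~(\ref{eq:marginal performance-based price}) of the marginal performance-based price. The integral $\int_t^T\theta_u^H\,\d S_u$ vanishes in conditional expectation: under $\mathbb{Q}^M$ one has $\d S_t=\sigma^S S_t\,\d\widehat{W}_t^{S,\mathbb{Q}}$ with $\widehat{W}^{S,\mathbb{Q}}=\widehat{W}^{S,\mathbb{Q}^M}$ a $(\mathbb{Q}^M,\widehat{\mathbb{F}})$-Brownian motion, and since $\mathbb{Q}^M\in\mathcal{M}_{e,f}$ (its finite relative entropy follows from Proposition~\ref{thm:representation of conditional relative entropy}), admissibility~(\ref{eq:admissible strategies}) of $\theta^H=\theta^C-\theta^0$ makes $(\theta^H\cdot S)$ a genuine $(\mathbb{Q}^M,\widehat{\mathbb{F}})$-martingale. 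The increment $L_T-L_t$ also vanishes: by Corollary~\ref{thm:PAERR} the process $L$ is driven by $\widehat{W}^\perp$, which equals $\widehat{W}^{\perp,\mathbb{Q}^M}$ because the minimal martingale measure leaves the orthogonal driver untouched ($\psi=0$), so $L$ is a $\mathbb{Q}^M$-martingale. Finally, the continuous quadratic variation $\langle L\rangle$ is invariant under the equivalent change from $\mathbb{P}$ to $\mathbb{Q}^M$, so the surviving term is precisely $\tfrac12\gamma\,\mathbb{E}^{\mathbb{Q}^M}[\langle L\rangle_T-\langle L\rangle_t\,\vert\,\widehat{\mathcal{F}}_t]$, and collecting everything yields~(\ref{eq:forward indifference price representation}).

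The main obstacle is not algebraic but analytic: ensuring that $(\theta^H\cdot S)$ and $L$ are true $\mathbb{Q}^M$-martingales rather than mere local martingales, so that their $\widehat{\mathcal{F}}_t$-conditional expectations genuinely vanish. For the stock integral this rests squarely on the admissibility condition~(\ref{eq:admissible strategies}) combined with $\mathbb{Q}^M\in\mathcal{M}_{e,f}$; for $L$ it follows from the explicit integrand in Corollary~\ref{thm:PAERR} together with the very integrability that makes the right-hand side of~(\ref{eq:forward indifference price representation}) finite. I would also be careful to track the sign of the $\tfrac12\gamma\langle L\rangle$ contribution, since it is exactly this sign — fixed as $+$ once the price differential above is integrated and rearranged for $p(t,S_t,Y_t)$ — that separates the forward indifference price from its marginal counterpart $p^M$.
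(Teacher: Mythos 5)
Your proof is correct and takes essentially the same route as the paper: apply the conditional $\mathbb{Q}^M$-expectation to the integrated price differential (equivalently, to the pay-off decomposition of Corollary~\ref{thm:pay-off decomposition}) and use the $(\mathbb{Q}^M,\widehat{\mathbb{F}})$-martingale property of $(\theta^H\cdot S)$ and of $L$ (the latter because $\psi=0$ under $\mathbb{Q}^M$) to annihilate the stochastic terms, leaving only the $\langle L\rangle$ correction and $p^M$. Your attention to the sign is well placed: integrating $\d p=-\tfrac12\gamma\,\d\langle L\rangle_t+\d L_t+\theta_t^H\,\d S_t$ and solving for $p(t,S_t,Y_t)$ indeed produces the stated $+\tfrac12\gamma$ correction, whereas the $+$ sign on the $\langle L\rangle$ increment in the displayed decomposition (\ref{eq:pay-off decomposition}) is inconsistent with its own derivation and would give the opposite sign if taken verbatim.
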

\begin{proof}
Applying the conditional $\mathbb{Q}^M$-expectation given $\widehat{\mathcal{F}}_t$ on the pay-off decomposition (\ref{eq:pay-off decomposition}) eliminates $\int_t^T\theta_u^H\d S_u + L_T-L_t$, due to the martingale property. By the marginal performance-based price formula (\ref{eq:marginal performance-based price}) the representation (\ref{eq:forward indifference price representation}) follows.
\end{proof}
The classical version of Corollary \ref{thm:forward indifference price representation} under $\mathbb{Q}^E$ is dealt in \cite[Corollary~2]{M2010}. Again, in the forward performance framework, the measure $\mathbb{Q}^M$ is used.
\begin{proposition}[Asymptotic expansion of the forward indifference price]
\label{thm:asymptotic expansion of the forward indifference price}
The forward indifference price has the asymptotic representation
\begin{equation}
\label{eq:asymptotic expansion of the forward indifference price}
p(t,S_t,Y_t)=p^M+\frac12\gamma\left(\operatorname{Var}^{\mathbb{Q}^M}\!\!\left[C(Y_T)\,\middle\vert\,\widehat{\mathcal{F}}_t\right]-\mathbb{E}^{\mathbb{Q}^M}\!\!\left[\langle X^M\rangle_{t,T}\,\middle\vert\,\widehat{\mathcal{F}}_t\right]\right)+\mathcal{O}(\gamma^2),
\end{equation}
where $X_{t,T}^M:=X^M_T-X_t^M:=\int_t^T\theta_u^M\d S_u$ denotes the profit and loss of the wealth from $t$ to $T$ under the marginal hedging strategy and $\langle X^M\rangle_{t,T}$ its covariation.
\end{proposition}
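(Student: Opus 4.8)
The plan is to start from the exact representation proved in Corollary~\ref{thm:forward indifference price representation}, namely $p=p^M+\frac12\gamma\,\mathbb{E}^{\mathbb{Q}^M}[\langle L\rangle_T-\langle L\rangle_t\mid\widehat{\mathcal{F}}_t]$, and to replace $L$ by its marginal counterpart $L^M$ at the price of a higher-order correction. That same representation already exhibits $p=p^M+\mathcal{O}(\gamma)$, whence $p_y=p_y^M+\mathcal{O}(\gamma)$; combining this with the dynamics $\d L_t=\sqrt{1-\rho^2}\,\sigma^YY_tp_y\,\d{\widehat{W}}_t^\perp$ from Corollary~\ref{thm:PAERR} and the analogous dynamics of $L^M$ from Definition~\ref{def:MPAERR}, the squared integrand satisfies $(\sigma^YY_up_y)^2=(\sigma^YY_up_y^M)^2+\mathcal{O}(\gamma)$, so that $\langle L\rangle_T-\langle L\rangle_t=\langle L^M\rangle_T-\langle L^M\rangle_t+\mathcal{O}(\gamma)$. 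Multiplying by $\frac12\gamma$ and absorbing the resulting $\frac12\gamma\cdot\mathcal{O}(\gamma)$ into $\mathcal{O}(\gamma^2)$ gives $p=p^M+\frac12\gamma\,\mathbb{E}^{\mathbb{Q}^M}[\langle L^M\rangle_T-\langle L^M\rangle_t\mid\widehat{\mathcal{F}}_t]+\mathcal{O}(\gamma^2)$.

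The next step is to identify the leading coefficient with the advertised variance. Here I would apply the F\"ollmer--Schweizer--Sondermann decomposition (\ref{eq:Foellmer-Schweizer-Sondermann pay-off decomposition}), which reads $C(Y_T)-p^M(t,S_t,Y_t)=X_{t,T}^M+(L_T^M-L_t^M)$ with $X_{t,T}^M=\int_t^T\theta_u^M\,\d S_u$. Since $p^M(t,S_t,Y_t)$ is $\widehat{\mathcal{F}}_t$-measurable, the conditional $\mathbb{Q}^M$-variance of $C(Y_T)$ equals the conditional variance of $X_{t,T}^M+(L_T^M-L_t^M)$. Both summands are $\mathbb{Q}^M$-martingale increments of vanishing conditional mean, and they are uncorrelated because $X^M$ is driven by $\widehat{W}^{S,\mathbb{Q}}$ whereas $L^M$ is driven by the orthogonal $\widehat{W}^\perp$, so that $\langle X^M,L^M\rangle\equiv0$ and hence $\mathbb{E}^{\mathbb{Q}^M}[X_{t,T}^M(L_T^M-L_t^M)\mid\widehat{\mathcal{F}}_t]=0$. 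The conditional variance therefore splits additively, and the It\^o isometry turns each piece into a predictable quadratic variation, giving $\operatorname{Var}^{\mathbb{Q}^M}[C(Y_T)\mid\widehat{\mathcal{F}}_t]=\mathbb{E}^{\mathbb{Q}^M}[\langle X^M\rangle_{t,T}\mid\widehat{\mathcal{F}}_t]+\mathbb{E}^{\mathbb{Q}^M}[\langle L^M\rangle_T-\langle L^M\rangle_t\mid\widehat{\mathcal{F}}_t]$. Solving for the $L^M$-term and substituting into the expansion of the previous paragraph produces exactly (\ref{eq:asymptotic expansion of the forward indifference price}).

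I expect the only genuinely delicate point to be the replacement of $p_y$ by $p_y^M$ inside the quadratic variation: one must argue that the $\mathcal{O}(\gamma)$ estimate on $p-p^M$ transfers to the derivative $p_y-p_y^M$ and holds uniformly enough, together with suitable $\mathbb{Q}^M$-integrability of $\sigma^YY_up_y$, to guarantee that $\frac12\gamma\,\mathbb{E}^{\mathbb{Q}^M}[(\langle L\rangle_T-\langle L\rangle_t)-(\langle L^M\rangle_T-\langle L^M\rangle_t)\mid\widehat{\mathcal{F}}_t]$ is truly $\mathcal{O}(\gamma^2)$. This is precisely the regularity that the smoothness hypothesis on $p$ (as in Theorem~\ref{thm:optimal hedging strategy indifference price}) and the distortion representation of Remark~\ref{rem:distortion} are designed to supply; once it is in place, the remainder of the argument is purely the martingale and orthogonality algebra recorded above.
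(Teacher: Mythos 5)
Your proposal is correct and follows essentially the same route as the paper: the paper writes the ansatz $p=p^M+\gamma g+\mathcal{O}(\gamma^2)$ and solves for $g$ by substituting into the exact representation of Corollary~\ref{thm:forward indifference price representation}, which is precisely your argument that swapping $\langle L\rangle$ for $\langle L^M\rangle$ costs only $\mathcal{O}(\gamma^2)$ after multiplication by $\tfrac12\gamma$; the identification of the $\langle L^M\rangle$ term via the F\"ollmer--Schweizer--Sondermann decomposition and the orthogonality of $X^M$ and $L^M$ is identical. Your closing remark on the regularity needed to pass the $\mathcal{O}(\gamma)$ estimate from $p-p^M$ to $p_y-p_y^M$ flags a point the paper leaves implicit, but it does not change the structure of the argument.
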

\begin{proof}
We make the same ansatz as in the classical version from \cite[Theorem~2]{M2010} and write the asymptotic expansion
\begin{equation}
\label{eq:asymptotic expansion ansatz}
p(t,S_t,Y_t)=p^M(t,S_t,Y_t)+\gamma g(t,S_t,Y_t)+\mathcal{O}(\gamma^2),
\end{equation}
with an appropriate process $g:=(g_t)_{0\leq t\leq T}$. By Corollary \ref{thm:forward indifference price representation} and Corollary \ref{thm:PAERR} it follows
\begin{align*}
\gamma g(t,S_t,Y_t)+\mathcal{O}\left(\gamma^2\right)&\xlongequal{(\ref{eq:forward indifference price representation})}\frac12\gamma\mathbb{E}^{\mathbb{Q}^M}\left[\langle L\rangle_T-\langle L\rangle_t\,\middle\vert\,\widehat{\mathcal{F}}_t\right]\\
&\xlongequal{(\ref{eq:dL})}\frac12\gamma(1-\rho^2)\left(\sigma^Y\right)^2\mathbb{E}^{\mathbb{Q}^M}\left[\int_t^T Y_u^2p_y^2(u,S_u,Y_u)\d u\,\middle\vert\,\widehat{\mathcal{F}}_t\right]\\
&\xlongequal{(\ref{eq:asymptotic expansion ansatz})}\frac12\gamma(1-\rho^2)\left(\sigma^Y\right)^2\mathbb{E}^{\mathbb{Q}^M}\!\!\left[\int_t^T\!\! Y_u^2\left(p_y^M\!+\!\gamma g_y\!+\!\mathcal{O}(\gamma^2)\right)^2\!\d u\,\middle\vert\,\widehat{\mathcal{F}}_t\right]\\
&=\frac12\gamma(1-\rho^2)\left(\sigma^Y\right)^2\mathbb{E}^{\mathbb{Q}^M}\!\!\left[\int_t^T\!\!\! \left(Y_u p_y^M(u,S_u,Y_u)\right)^2\!\d u\,\middle\vert\,\widehat{\mathcal{F}}_t\right]\!+\!\mathcal{O}(\gamma^2)
\end{align*}
and further leads to the solution
\begin{equation}
\label{eq:asymptotic expansion first order term}
g(t,S_t,Y_t)=\frac12\mathbb{E}^{\mathbb{Q}^M}\left[\langle L^M\rangle_T-\langle L^M\rangle_t\,\middle\vert\,\widehat{\mathcal{F}}_t\right].
\end{equation}
Inserting (\ref{eq:asymptotic expansion first order term}) into (\ref{eq:asymptotic expansion ansatz}) gives the asymptotic expansion of the indifference price
\begin{equation}
\label{eq:asymptotic expansion MPAERR}
p(t,S_t,Y_t)=p^M(t,S_t,Y_t)+\frac12\gamma\mathbb{E}^{\mathbb{Q}^M}\left[\langle L^M\rangle_T-\langle L^M\rangle_t\,\middle\vert\,\widehat{\mathcal{F}}_t\right]+\mathcal{O}(\gamma^2).
\end{equation}
Notice, that by switching from the PAERR $L$ in (\ref{eq:forward indifference price representation}) to the MPAERR $L^M$ in (\ref{eq:asymptotic expansion MPAERR}), an expansion term of order $\mathcal{O}(\gamma^2)$ is added to the indifference price representation. The F\"ollmer-Schweizer-Sondermann decomposition (\ref{eq:Foellmer-Schweizer-Sondermann pay-off decomposition}) implies the pay-off variance 
\begin{align}
\notag
\operatorname{Var}^{\mathbb{Q}^M}\left[C(Y_T)\,\middle\vert\,\widehat{\mathcal{F}}_t\right]&=\mathbb{E}^{\mathbb{Q}^M}\left[\left(C(Y_T)-p^M(t,S_t,Y_t)\right)^2\,\middle\vert\,\widehat{\mathcal{F}}_t\right]\\
\notag
&=\mathbb{E}^{\mathbb{Q}^M}\left[\left(\int_t^T\theta_u^M\d S_u+L_T^M-L_t^M\right)^2\,\middle\vert\,\widehat{\mathcal{F}}_t^2\right]\\
\label{eq:pay-off variance}
&=\mathbb{E}^{\mathbb{Q}^M}\left[\langle X^M\rangle_T-\langle X^M\rangle_t+\langle L^M\rangle_T-\langle L^M\rangle_t\,\middle\vert\,\widehat{\mathcal{F}}_t\right],
\end{align}
because $L$ and $X$ are orthogonal $\mathbb{Q}^M$-martingales. Inserting (\ref{eq:pay-off variance}) after a rearrangement into (\ref{eq:asymptotic expansion MPAERR}) gives the asymptotic expansion (\ref{eq:asymptotic expansion of the forward indifference price}).
\end{proof}
\section{Exponential forward valuation and hedging of American options under partial information}
\label{sec:american option}
Early exercise claims arise often in situations in which a certain project is undertaken or abandoned (Smith and Nau \cite{SN1995}, Smith and McCardle \cite{SM1998}), executives exercise their employee stock options (Aboody \cite{A1996}, Huddart \cite{H1994}), household owners prepay their mortgages or sell their property (Hall~\cite{H1985}, Kau and Keenan \cite{KK1995}, Schwartz and Torous \cite{ST1993}). Allowing early exercise gives rise to stochastic control problems with stopping times. Early exercise options were priced for the first time by Davis and Zariphopoulou \cite{DZ1995} in the setting, where the option's underlying asset is traded but with proportional transaction costs. Karatzas and Wang \cite{KW2000} studied utility maximisation problems of mixed optimal stopping and control type in complete markets, which can be solved by reduction to a family of related pure optimal stopping problems. Oberman and Zariphopoulou \cite{OZ2003} introduced a utility-based methodology for the valuation of early exercise contracts in incomplete markets. Henderson and Hobson \cite{HH2007} considered the case of infinite time horizon, where the problem is expressed with respect to \textit{horizon-unbiased utility functions}, a class of utility functions satisfying certain\linebreak consistency conditions over time, which are nothing less than forward utilitie. Leung and Sircar \cite{LS2009} studied problems of hedging American options with exponential utility within a general incomplete market model. In Leung, Sircar and Zariphopoulou \cite{LSZ2012} this theory was expanded to the forward performance framework.\par
In this section, we apply the forward performance model under the partial information scenario from Proposition~\ref{thm:model under partial information} to American options to derive hedging and valuation results comparable to the European counterparts of Section~\ref{sec:european option}.
\subsection{Optimal control and stopping problem}
Suppose $C$ is now an \textit{early exercise claim (American option)} written on the non-traded asset $Y$. The investor sets up a hedging portfolio consisting of a long position in the stock $S$ and a short position in the option $C$ as in the European scenario of Section~\ref{sec:european option}. 
\begin{definition}[Admissible exercise times]
The collection of \textit{admissible exercise times} is the set $\mathcal{T}$ of stopping times $\tau$ with respect to the observation filtration $\widehat{\mathbb{F}}=(\widehat{\mathcal{F}}_t)_{0\leq t\leq T}$ that take values in $[0,T]$. For $0\leq t\leq u\leq T$, define the subset $\mathcal{T}_{t,u}:=\left\{\tau\in\mathcal{T}\,\middle\vert\,t\leq \tau\leq u\right\}$ of stopping times taking values in $[t,u]$.
\end{definition}
In addition to the dynamic trading strategy $\theta\in\Theta$, the investor chooses an \textit{exercise time} $\tau\in\mathcal{T}$, in order to maximise his expected forward performance of his hedging portfolio $X_t-C_t=\theta_tS_t-C_t$. Therefore, let $\Theta_{t,\tau}$ denote the subset of strategies starting at $t$ and terminating at $\tau$. The claim pay-off becomes $C(Y_\tau):=C(\tau,Y_\tau)=C_\tau$ with the exercise time $\tau$ as the terminal date instead of the fixed date $T$.
\begin{definition}[Optimal control and stopping problem]
The value process of the investor's portfolio is the combined stochastic control and optimal stopping problem
\begin{equation}
\label{eq:value process C American}
v^C(t,X_t,S_t,Y_t):=\esssup_{\tau\in\mathcal{T}_{t,T}}\esssup_{\theta\in\Theta_{t,\tau}}{\mathbb{E}\left[U_\tau(X_\tau-C(Y_\tau))\,\middle\vert\,\widehat{\mathcal{F}}_t\right]},\quad 0\leq t\leq T.
\end{equation}
The double essential supremum notation will be shortened to $\esssup_{\tau\in\mathcal{T}_{t,T},\theta\in\Theta_{t,\tau}}$.
\end{definition}
In comparison to the European case (\ref{eq:value process C}), the optimisation is additionally performed under the stopping time. The forward indifference price is defined as in Definition \ref{def:value process, indifference price and optimal hedging strategy} and is useful to characterise the \textit{optimal exercise time} $\tau^*$.
\begin{corollary}[Optimal stopping time]
\label{thm:optimal stopping time}
By (\ref{eq:indifference price}) and (\ref{eq:value process C American}), the optimal stopping time $\tau^*$ is the first time the value process reaches the forward performance process,~i.~e.
\begin{align*}
\tau_t^*&=\inf\left\{u\in[t,T]\,\middle\vert\,v^C(u,X_u,S_u,Y_u)=U_u(X_u-C(Y_u))\right\}\\
&=\inf\left\{u\in[t,T]\,\middle\vert\,v^0(u,X_u-p(u,S_u,Y_u),S_u,Y_u)=U_u(X_u-C(Y_u))\right\}\\
&=\inf\left\{u\in[t,T]\,\middle\vert\,U_u(X_u-p(u,S_u,Y_u))=U_u(X_u-C(Y_u))\right\}\\
&=\inf\left\{u\in[t,T]\,\middle\vert\, p(u,S_u,Y_u)=C(Y_u)\right\},
\end{align*}
under appropriate integrability conditions (see \cite[Theorem~D.12]{KS1998}).
\end{corollary}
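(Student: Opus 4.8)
The plan is to read the four displayed equalities as a single chain and to justify each link separately, since only the first carries genuine analytic content while the remaining three are purely structural rewritings of definitions already in place. The first equality is the Snell-envelope characterisation of the optimal exercise time for the joint control-and-stopping problem (\ref{eq:value process C American}); the second uses the indifference-price relation (\ref{eq:indifference price}); the third uses that the no-claim forward value process equals the performance process at current wealth; and the fourth uses strict monotonicity of $U_u$.

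First I would treat the top line. Along fixed admissible strategies the map $u \mapsto v^C(u, X_u, S_u, Y_u)$ is, by the supermartingale/martingale structure built into Definition~\ref{def:forward performance process}, the smallest supermartingale dominating the reward process $u \mapsto U_u(X_u - C(Y_u))$, i.e.\ its Snell envelope. Standard optimal-stopping theory, in the form of Karatzas and Shreve \cite[Theorem~D.12]{KS1998} cited in the statement, then identifies the optimal exercise time as the first instant at which the envelope meets its obstacle, giving $\tau_t^* = \inf\{u \in [t,T] : v^C(u, X_u, S_u, Y_u) = U_u(X_u - C(Y_u))\}$. I would remark that the integrability needed for the first-hitting time to be a genuine stopping time and to be attained, hence optimal rather than merely a lower bound, is inherited from the admissibility class $\Theta$ and the finite-entropy hypothesis.

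Next I would substitute $X_t \mapsto X_u - p(u, S_u, Y_u)$ into the defining relation (\ref{eq:indifference price}), which reads $v^C(u, X_u + p, S_u, Y_u) = v^0(u, X_u, S_u, Y_u)$, to obtain $v^C(u, X_u, S_u, Y_u) = v^0(u, X_u - p(u, S_u, Y_u), S_u, Y_u)$; this yields the second line. For the third line I would use that, in the absence of the claim, the forward value process coincides with the performance process evaluated at the current wealth, $v^0(u, x, S_u, Y_u) = U_u(x)$, which is equation (\ref{eq:forward value process without claim entropic representation}) (equivalently, the martingality clause~(iii) of Definition~\ref{def:forward performance process}). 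Applying this with $x = X_u - p$ converts the event into $U_u(X_u - p(u, S_u, Y_u)) = U_u(X_u - C(Y_u))$. Finally, since each $U_u$ is strictly increasing in its wealth argument, this equality holds if and only if $X_u - p(u, S_u, Y_u) = X_u - C(Y_u)$, i.e.\ $p(u, S_u, Y_u) = C(Y_u)$, giving the bottom line.

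The main obstacle is the first equality: one must confirm that the value process (\ref{eq:value process C American}) really is the Snell envelope of the reward for the \emph{combined} stochastic control and stopping problem, and that the technical hypotheses of \cite[Theorem~D.12]{KS1998} are met so that the first-hitting time is admissible and genuinely optimal. Once this is granted, the reduction to the hitting set $\{p = C\}$ is immediate and uses nothing beyond the indifference-price definition and the strict monotonicity of the forward utility.
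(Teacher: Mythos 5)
Your proposal is correct and follows exactly the route the paper intends: the corollary's displayed chain of equalities is itself the argument, with the first link supplied by the Snell-envelope/optimal-stopping theory of Karatzas and Shreve (Theorem D.12) and the remaining three by the indifference-price relation (\ref{eq:indifference price}), the identity $v^0(u,x,S_u,Y_u)=U_u(x)$ from (\ref{eq:forward value process without claim entropic representation}), and strict monotonicity of $U_u$ in the wealth argument. Your explicit justification of each link, including the caveat that the genuine analytic content sits in verifying the hypotheses for the first equality, matches the paper's (unwritten) reasoning.
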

Corollary \ref{thm:optimal stopping time} implies, that the investor exercises the American option as soon as the forward indifference price reaches from above the option pay-off and allows analysing the optimal exercise time through the forward indifference price.
\begin{corollary}[Primal forward performance problem with American claim]
\label{thm:primal forward performance problem with American claim}
Under the exponential forward performance (\ref{eq:exponential forward performance}), the primal problem (\ref{eq:value process C American}) becomes
\begin{align*}
v^C(t,X_t,S_t,Y_t)&=\esssup_{\tau\in\mathcal{T}_{t,T},\theta\in\Theta_{t,\tau}}\mathbb{E}\left[-\exp\left(-\gamma\left(X_\tau - C(Y_\tau)\right)+\frac12\int_t^\tau\left(\widehat{\lambda}_u^S\right)^2\d u\right)\,\middle\vert\,\widehat{\mathcal{F}}_t\right],\\
&=\underbrace{e^{-\gamma X_t+\frac12\int_0^t\left(\widehat{\lambda}_u^S\right)^2\d u}}_{=U_t(X_t)}\!\!\esssup_{\tau\in\mathcal{T}_{t,T},\theta\in\Theta_{t,\tau}}\!\!\!\mathbb{E}\left[-e^{-\gamma\left(\int_t^\tau\theta_u\d S_u - C(Y_\tau)\right)+\frac12\int_t^\tau\left(\widehat{\lambda}_u^S\right)^2\d u}\,\middle\vert\,\widehat{\mathcal{F}}_t\right]\!.
\end{align*}
\end{corollary}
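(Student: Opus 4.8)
The plan is to obtain both equalities by a direct unfolding of the definition followed by the same wealth-independence factorisation used for the European value process, now carrying the exercise time $\tau$ through every step. First I would substitute the exponential forward performance process (\ref{eq:exponential forward performance}), evaluated at the stopping time $\tau$, into the definition (\ref{eq:value process C American}) of $v^C$. Since $U_\tau(x)=-\exp(-\gamma x+\tfrac12 A_\tau)$ with $A_\tau=\int_0^\tau(\widehat{\lambda}_u^S)^2\d u$ by (\ref{eq:mean-variance trade-off}), this produces the integrand $-\exp(-\gamma(X_\tau-C(Y_\tau))+\tfrac12\int_0^\tau(\widehat{\lambda}_u^S)^2\d u)$ inside the double essential supremum, which is the first displayed line once the trade-off integral is written as $A_\tau = A_t+\int_t^\tau(\widehat{\lambda}_u^S)^2\d u$.

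Next I would isolate the $\widehat{\mathcal{F}}_t$-measurable part. Using the self-financing wealth dynamics (\ref{eq:dX})/(\ref{eq:dX partial}), which give $X_\tau = X_t+\int_t^\tau\theta_u\d S_u$ for any admissible $\theta\in\Theta_{t,\tau}$ and any stopping time $\tau$, together with the splitting of $A_\tau$ above, the integrand factors as $e^{-\gamma X_t+\frac12 A_t}$ times $-\exp(-\gamma(\int_t^\tau\theta_u\d S_u - C(Y_\tau))+\tfrac12\int_t^\tau(\widehat{\lambda}_u^S)^2\d u)$. The leading factor $e^{-\gamma X_t+\frac12 A_t}$ is strictly positive, $\widehat{\mathcal{F}}_t$-measurable, and independent of the controls $(\tau,\theta)$. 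I would pull it out of the conditional expectation by the ``taking out what is known'' property, and then out of the double essential supremum, yielding the second displayed line, in which this factor carries the label $U_t(X_t)$.

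I expect the only genuine subtlety to be the commutation of the positive $\widehat{\mathcal{F}}_t$-measurable factor with the double essential supremum over both $\tau\in\mathcal{T}_{t,T}$ and $\theta\in\Theta_{t,\tau}$. This rests on the lattice property that multiplication by a strictly positive random variable $c$ is order-preserving, so $\esssup_\alpha(c\,Y_\alpha)=c\,\esssup_\alpha Y_\alpha$; the measurability of $c$ with respect to $\widehat{\mathcal{F}}_t$ is what licenses its removal from $\mathbb{E}[\,\cdot\,\vert\,\widehat{\mathcal{F}}_t]$. This is exactly the device underlying the European wealth-independence remark, and the essential point is that the factored-out term depends only on information available at time $t$, not on the chosen exercise time or strategy, so the extra optimisation over stopping times introduces no new difficulty beyond the strategy supremum already treated in the European case. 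Well-definedness of the conditional expectation and the suprema follows from the admissibility requirements on $\Theta_{t,\tau}$ and the integrability built into Definition~\ref{def:relative entropy and admissible strategies}.
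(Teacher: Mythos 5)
Your proposal is correct and is exactly the argument the paper leaves implicit for this corollary: substitute $U_\tau(x)=-e^{-\gamma x+\frac12 A_\tau}$ into (\ref{eq:value process C American}), use $X_\tau=X_t+\int_t^\tau\theta_u\d S_u$ and $A_\tau=A_t+\int_t^\tau(\widehat{\lambda}_u^S)^2\d u$, and pull the strictly positive $\widehat{\mathcal{F}}_t$-measurable factor $e^{-\gamma X_t+\frac12 A_t}$ out of the conditional expectation and the double essential supremum, precisely as in the European wealth-independence remark. Note only that the direct substitution yields $\int_0^\tau(\widehat{\lambda}_u^S)^2\d u$ in the exponent of the first displayed line (the printed $\int_t^\tau$ there omits the $e^{\frac12 A_t}$ prefactor), which your derivation correctly identifies; the second displayed line is consistent with the definition as you verify.
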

To obtain the dual optimal control and stopping problem, some preparation is required. Firstly, a reconsideration and extension of the conditional relative entropy from Definition \ref{def:conditional relative entropy} is needed, to include the case of stopping times. Secondly, a relation between the conditional relative entropies up to time $\tau$ and $T$ is derived. Lastly, a particular dynamic programming property of the classical Merton problem is recalled and applied to the American option case.
\begin{definition}[Stopped conditional relative entropy]
\label{def:stopped conditional relative entropy}
Define by
\begin{equation}
\label{eq:right stopped conditional relative entropy}
\mathcal{H}_{t,\tau}(\mathbb{Q},\mathbb{P}):=\mathbb{E}^\mathbb{Q}\left[\log Z_{t,\tau}^\mathbb{Q}\,\middle\vert\,\widehat{\mathcal{F}}_t\right],\quad 0\leq t\leq \tau\in\mathcal{T},
\end{equation}
the \textit{right stopped (conditional) relative entropy} over the stochastic interval $[t,\tau]$~and~by
\begin{equation}
\label{eq:left stopped conditional relative entropy}
\mathcal{H}_{\tau,T}(\mathbb{Q},\mathbb{P}):=\mathbb{E}^\mathbb{Q}\left[\log Z_{\tau,T}^\mathbb{Q}\,\middle\vert\,\widehat{\mathcal{F}}_\tau\right],\quad \mathcal{T}\ni\tau\leq t\leq T,
\end{equation}
the \textit{left stopped (conditional) relative entropy} over the stochastic interval $[\tau,T]$.
\end{definition}
By Proposition \ref{thm:representation of conditional relative entropy}, the right stopped relative entropy (\ref{eq:right stopped conditional relative entropy}) is given by
\begin{equation*}
\mathcal{H}_{t,\tau}(\mathbb{Q},\mathbb{P})=\frac12\mathbb{E}^\mathbb{Q}\left[\int_t^\tau\left[\left(\widehat{\lambda}_u^S\right)^2 +\psi_u^2\right]\d u\,\middle\vert\,\widehat{\mathcal{F}}_t\right].
\end{equation*}
The only difference to the European case is, that $T$ is replaced by $\tau$. From here, the new notation $\mathcal{H}_{t,T}(\mathbb{Q},\mathbb{P})$ is used for $\mathcal{H}_t(\mathbb{Q},\mathbb{P})$. Remark, that the left stopped relative entropy (\ref{eq:left stopped conditional relative entropy}) is $\widehat{\mathcal{F}}_\tau$-conditional. According to Definition \ref{def:stopped conditional relative entropy}, the conditional relative entropy over $[t,T]$ splits into $\mathcal{H}_{t,T}(\mathbb{Q},\mathbb{P})=\mathcal{H}_{t,\tau}(\mathbb{Q},\mathbb{P})+\mathbb{E}^\mathbb{Q}[\mathcal{H}_{\tau,T}(\mathbb{Q},\mathbb{P})\,\vert\,\widehat{\mathcal{F}}_t]$.
\begin{lemma}[Decomposition of the relative entropy under stopping times]
The conditional relative entropy $\mathcal{H}_{t,T}(\mathbb{Q},\mathbb{P})$ decomposes into the right and left entropies
\begin{equation*}
\essinf_{\psi\in\Psi}\mathcal{H}_{t,T}(\mathbb{Q},\mathbb{P})=\essinf_{\psi\in\Psi}\left(\mathcal{H}_{t,\tau}(\mathbb{Q},\mathbb{P})+\mathbb{E}^\mathbb{Q}\left[\essinf_{\psi\in\Psi}\mathcal{H}_{\tau,T}(\mathbb{Q},\mathbb{P})\,\middle\vert\,\widehat{\mathcal{F}}_t\right]\right),
\end{equation*}
under the stopping time $\tau$.
\end{lemma}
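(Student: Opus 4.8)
The plan is to deduce the claim from the \emph{additive} decomposition $\mathcal{H}_{t,T}(\mathbb{Q},\mathbb{P})=\mathcal{H}_{t,\tau}(\mathbb{Q},\mathbb{P})+\mathbb{E}^\mathbb{Q}[\mathcal{H}_{\tau,T}(\mathbb{Q},\mathbb{P})\mid\widehat{\mathcal{F}}_t]$ recorded just above the statement, which itself follows from the multiplicative factorisation $Z_{t,T}^\mathbb{Q}=Z_{t,\tau}^\mathbb{Q}Z_{\tau,T}^\mathbb{Q}$ of the conditional density together with the tower property. Taking $\essinf_{\psi\in\Psi}$ of both sides turns this identity into a Bellman-type (dynamic programming) equation, and the content of the lemma is precisely that the infimum may be pushed inside the conditional expectation acting on the left entropy $\mathcal{H}_{\tau,T}$. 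I would prove the two inequalities separately.

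For the ``$\geq$'' direction I would argue by monotonicity. For an arbitrary admissible $\psi$ the left entropy dominates its own infimum pointwise, $\mathcal{H}_{\tau,T}(\mathbb{Q},\mathbb{P})\geq\essinf_{\psi\in\Psi}\mathcal{H}_{\tau,T}(\mathbb{Q},\mathbb{P})$; since the latter is $\widehat{\mathcal{F}}_\tau$-measurable and the $\mathbb{Q}$-conditional expectation is monotone, inserting this bound into the additive decomposition yields $\mathcal{H}_{t,T}\geq\mathcal{H}_{t,\tau}+\mathbb{E}^\mathbb{Q}[\essinf_{\psi}\mathcal{H}_{\tau,T}\mid\widehat{\mathcal{F}}_t]$ for every $\psi$, whence taking the infimum of the left-hand side gives the desired inequality.

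For the reverse ``$\leq$'', which is the harder direction, I would use a concatenation (pasting) argument. The structural property to establish is that $\Psi$ is stable under splicing at $\tau$: given $\psi^{1},\psi^{2}\in\Psi$ the process $\psi=\psi^{1}\mathbbm{1}_{[0,\tau)}+\psi^{2}\mathbbm{1}_{[\tau,T]}$ again lies in $\Psi$, because Novikov's condition~(\ref{eq:Novikov's condition}) for the spliced integrand follows from that of $\psi^{1}$ on $[0,\tau)$ and $\psi^{2}$ on $[\tau,T]$. This stability makes the family $\{\mathcal{H}_{\tau,T}(\mathbb{Q},\mathbb{P}):\psi\in\Psi\}$ \emph{directed downwards} (on the $\widehat{\mathcal{F}}_\tau$-measurable event where $\mathcal{H}_{\tau,T}(\mathbb{Q}^{1},\mathbb{P})\leq\mathcal{H}_{\tau,T}(\mathbb{Q}^{2},\mathbb{P})$ one keeps the $[\tau,T]$-piece of $\psi^{1}$, and $\psi^{2}$ elsewhere, realising the pointwise minimum), so there is a sequence $\psi^{n}$ with $\mathcal{H}_{\tau,T}(\mathbb{Q}^{n},\mathbb{P})\downarrow\essinf_{\psi}\mathcal{H}_{\tau,T}$. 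Fixing any control on $[t,\tau]$ and pasting it with $\psi^{n}$ on $[\tau,T]$, monotone convergence gives $\mathbb{E}^{\mathbb{Q}^{n}}[\mathcal{H}_{\tau,T}\mid\widehat{\mathcal{F}}_t]\to\mathbb{E}^\mathbb{Q}[\essinf_{\psi}\mathcal{H}_{\tau,T}\mid\widehat{\mathcal{F}}_t]$, so the total entropy of the spliced control converges to $\mathcal{H}_{t,\tau}+\mathbb{E}^\mathbb{Q}[\essinf_{\psi}\mathcal{H}_{\tau,T}\mid\widehat{\mathcal{F}}_t]$; taking the infimum over the $[t,\tau]$-part then delivers ``$\leq$''.

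I expect the main obstacle to be this pasting step: one must check that splicing two admissible controls at $\tau$ preserves admissibility (integrability and Novikov) and, more delicately, that in the partial-information setting the right entropy $\mathcal{H}_{t,\tau}$ depends only on the control restricted to $[t,\tau]$. Because $\widehat{\lambda}^{S}$ may itself depend on $\psi$ through the $\mathbb{Q}$-dynamics of $Y$ in~(\ref{eq:dS, dY measure change}), I would emphasise that this dependence is adapted, so that $\mathcal{H}_{t,\tau}$ is unaffected by the choice of $\psi$ on $(\tau,T]$ and the concatenation leaves the first summand intact; securing the downward-directedness that lets the essential infimum pass through the conditional expectation by monotone convergence is the remaining technical point.
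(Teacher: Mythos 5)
Your argument is sound and follows the standard dynamic-programming route for this kind of statement. Note, however, that the paper does not prove this lemma at all: it defers entirely to Leung and Sircar [LS2009, Lemma~2.7], and the proof given there is precisely the two-inequality scheme you describe (monotonicity of the conditional expectation for ``$\geq$''; stability of the control set under concatenation at $\tau$, downward-directedness of the family $\{\mathcal{H}_{\tau,T}(\mathbb{Q},\mathbb{P})\}$, and conditional monotone convergence for ``$\leq$''). So you are essentially reconstructing the cited proof rather than diverging from it. The two technical points you flag are the right ones to worry about, and both resolve as you expect: since $\widehat{\lambda}^S$ is a fixed deterministic function of $(t,S_t,Y_t)$ and hence an $\widehat{\mathbb{F}}$-adapted process independent of the choice of measure, and since $\mathcal{H}_{t,\tau}(\mathbb{Q},\mathbb{P})$ is a $\mathbb{Q}$-conditional expectation of an $\widehat{\mathcal{F}}_\tau$-measurable quantity, it depends on $\mathbb{Q}$ only through its restriction to $\widehat{\mathcal{F}}_\tau$, i.e.\ only on $\psi\mathbbm{1}_{[0,\tau]}$, so splicing leaves the first summand intact; and the spliced integrand inherits Novikov's condition from its two pieces. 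One small point to make explicit in a full write-up: the conditional monotone convergence step requires that the first element of the decreasing sequence be $\mathbb{Q}$-integrable, which holds here because each $\mathcal{H}_{\tau,T}$ is non-negative and finite in expectation for $\mathbb{Q}\in\mathcal{M}_{e,f}$.
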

\begin{proof}
A proof is given by Leung and Sircar \cite[Lemma 2.7]{LS2009}.
\end{proof}
\begin{proposition}[Primal and dual classical Merton problem with stopping time]
For an investor with starting wealth $X_\tau$ at $\tau\in\mathcal{T}$, the classical Merton value process \begin{equation*}
v^0(\tau,X_\tau)=\esssup_{\theta\in\Theta_{\tau,T}}\mathbb{E}\left[U_0(X_T)\,\middle\vert\,\widehat{\mathcal{F}}_t\right]
\end{equation*}
has the dual separable representation
\begin{equation*}
v^0(\tau,X_\tau,S_\tau)=U_0(X_\tau)\exp\left(-\essinf_{\psi\in\Psi}\mathcal{H}_{\tau,T}(\mathbb{Q},\mathbb{P})\right).
\end{equation*}
With starting wealth $X_t$ at $t\in[0,T]$, the classical value process can be written as
\begin{equation}
\label{eq:horizon-unbiased condition}
v^0(t,X_t,S_t)=\esssup_{\theta\in\Theta_{t,\tau}}\mathbb{E}\left[v^0(\tau,X_\tau,S_\tau)\,\middle\vert\,\widehat{\mathcal{F}}_t\right],\quad \tau\in\mathcal{T}.
\end{equation}
\end{proposition}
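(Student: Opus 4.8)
The plan is to derive both displayed identities from the classical exponential-utility computations already assembled in Section~\ref{sec:european option}, transcribing the deterministic reference time into the stopping time $\tau$, and then to read off the dynamic programming relation from the concatenation structure of admissible controls. I would first record the separable representation at a fixed deterministic time. Putting $C\equiv 0$ in the classical value processes \eqref{eq:classical value processes} gives $v^0(t,X_t,S_t)=-\exp(-\gamma X_t-\mathcal{H}_t(\mathbb{Q}^E,\mathbb{P}))$, while \eqref{eq:H^0 classical} identifies $\mathcal{H}_t(\mathbb{Q}^E,\mathbb{P})=\essinf_{\psi\in\Psi}\mathcal{H}_{t,T}(\mathbb{Q},\mathbb{P})$. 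Because the classical utility is $U_0(x)=-e^{-\gamma x}$, the wealth factor separates multiplicatively---precisely the wealth-independence observed in the Remark following \eqref{eq:primal problem}---so that $v^0(t,X_t,S_t)=U_0(X_t)\exp(-\essinf_{\psi\in\Psi}\mathcal{H}_{t,T}(\mathbb{Q},\mathbb{P}))$, which is the asserted product form read at $t$.

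To promote $t$ to the stopping time $\tau$ I would invoke the strong Markov property of the driving system $(S,Y,\widehat{\lambda}^S)$, which is a time-homogeneous Markov diffusion by the SDEs of Proposition~\ref{thm:model under partial information}. Since the admissibility class $\Theta_{\tau,T}$ and the density parametrisation \eqref{eq:measure change} are defined pathwise over $[\tau,T]$, and the left stopped relative entropy $\mathcal{H}_{\tau,T}(\mathbb{Q},\mathbb{P})$ of Definition~\ref{def:stopped conditional relative entropy} is $\widehat{\mathcal{F}}_\tau$-conditional, evaluating the deterministic-time value function at the point $(\tau,X_\tau,S_\tau)$ reproduces the same separable form with $\tau$ and $\mathcal{H}_{\tau,T}$ playing the roles of $t$ and $\mathcal{H}_{t,T}$.

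For the dynamic programming identity \eqref{eq:horizon-unbiased condition} I would exploit that any head strategy $\theta\in\Theta_{t,\tau}$ prolonged by any tail strategy in $\Theta_{\tau,T}$ is again admissible. For the inequality ``$\leq$'' I restrict an arbitrary $\theta\in\Theta_{t,T}$ to $[\tau,T]$, apply the tower property across $\widehat{\mathcal{F}}_\tau$, bound the inner conditional expectation by $v^0(\tau,X_\tau,S_\tau)$, and take the supremum over head strategies. For ``$\geq$'' I glue a near-optimal tail strategy, furnished by a measurable selection, to a fixed head strategy and pass to the supremum. Equivalently, \eqref{eq:horizon-unbiased condition} is the supermartingale property (ii) together with the martingale-at-the-optimum property (iii) of Definition~\ref{def:forward performance process}, read at the random time $\tau$.

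The hard part will be the rigorous treatment of the random reference time: justifying optional sampling and the strong Markov step for an essential-supremum value process, interchanging $\esssup$ with the conditional expectation (which requires the lattice/directedness of the family of attainable conditional payoffs), and verifying that the spliced control remains in $\Theta$. These are exactly the places where the integrability conditions flagged in the statement enter; the remaining manipulations are a direct transcription of the deterministic classical computation.
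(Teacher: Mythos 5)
Your proposal is correct in outline, but it takes a genuinely different route from the paper for the simple reason that the paper offers no argument at all: its ``proof'' is the single line ``We refer to \cite[Propositions 2.5 and 2.6]{LS2009}.'' What you have done is reconstruct the standard argument from the paper's own material: the separable form at a deterministic time follows, exactly as you say, from \eqref{eq:classical value processes} and \eqref{eq:H^0 classical} together with the multiplicative wealth separation of $U_0(x)=-e^{-\gamma x}$; the promotion to a stopping time is a strong Markov/measurable-selection step; and \eqref{eq:horizon-unbiased condition} is the usual two-inequality splicing argument, whose delicate points (upward directedness of the family of conditional payoffs so that $\esssup$ and conditional expectation commute, admissibility of the concatenated control) you correctly flag as the real work. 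This buys self-containedness at the cost of redoing what Leung and Sircar establish carefully. Two small corrections: the filtered diffusion of Proposition~\ref{thm:model under partial information} is \emph{not} time-homogeneous (the gains $z_t^i,w_t$ depend explicitly on $t$), so the strong Markov step must be run for the space-time process $(t,S_t,Y_t)$ rather than for a time-homogeneous system --- harmless, but the claim as written is inaccurate; and your closing appeal to properties (ii)--(iii) of Definition~\ref{def:forward performance process} is circular in spirit, since those are properties of forward performance processes while \eqref{eq:horizon-unbiased condition} is precisely the self-generating property being \emph{proved} for the classical value process --- keep the direct splicing argument as the actual proof and drop that remark, or present it only as an interpretation.
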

\begin{proof}
We refer to \cite[Propositions 2.5 and 2.6]{LS2009}.
\end{proof}
The dynamic programming property (\ref{eq:horizon-unbiased condition}) is called the \textit{self-generating condition} by Musiela and Zariphopoulou \cite{MZ2007}, and \textit{horizon-unbiased condition} by Henderson and Hobson \cite{HH2007}.
\begin{proposition}[Dual classical problem with American option]
\label{thm:dual classical problem with American option}
The dual classical value process in the American option case is given by
\begin{align*}
v^C&(t,X_t,S_t,Y_t)\\
&=U_0(X_t)\exp\left(-\esssup_{\tau\in\mathcal{T}_{t,T}}\essinf_{\psi\in\Psi}\left(\mathcal{H}_{t,\tau}(\mathbb{Q},\mathbb{P})+\mathbb{E}^{\mathbb{Q}}\left[\mathcal{H}_{\tau,T}(\mathbb{Q}^E,\mathbb{P})-\gamma C(Y_\tau)\,\middle\vert\,\widehat{\mathcal{F}}_t\right]\right)\!\!\right).
\end{align*}
The classical exponential indifference price is given by
\begin{equation*}
p(t,S_t,Y_t)=-\frac{1}{\gamma}\esssup_{\tau\in\mathcal{T}_{t,T}}\essinf_{\psi\in\Psi}\left(\mathcal{H}_{t,\tau}(\mathbb{Q},\mathbb{Q}^E)-\gamma\mathbb{E}^\mathbb{Q}\left[C(Y_\tau)\,\middle\vert\,\widehat{\mathcal{F}}_t\right]\right).
\end{equation*}
\end{proposition}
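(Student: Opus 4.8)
The plan is to reduce the combined stochastic control and optimal stopping problem to a family of pure control problems indexed by the exercise time $\tau$, and then to optimise over $\tau$. First I would note that, once the short option has been settled by paying $C(Y_\tau)$ at the exercise instant, the residual objective over $[\tau,T]$ is a pure investment (Merton) problem started from wealth $X_\tau-C(Y_\tau)$. Invoking the self-generating (horizon-unbiased) property~(\ref{eq:horizon-unbiased condition}) together with the dual separable Merton representation, its value at $\tau$ equals $U_0(X_\tau-C(Y_\tau))\exp(-\mathcal{H}_{\tau,T}(\mathbb{Q}^E,\mathbb{P}))$, so dynamic programming gives
\begin{equation*}
v^C(t,X_t,S_t,Y_t)=\esssup_{\tau\in\mathcal{T}_{t,T}}\esssup_{\theta\in\Theta_{t,\tau}}\mathbb{E}\left[U_0\!\left(X_\tau-C(Y_\tau)\right)\exp\!\left(-\mathcal{H}_{\tau,T}(\mathbb{Q}^E,\mathbb{P})\right)\,\middle\vert\,\widehat{\mathcal{F}}_t\right].
\end{equation*}

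For each fixed $\tau$ the inner supremum is a classical European exponential problem on $[t,\tau]$ with the \emph{effective} terminal liability $\widetilde{C}:=C(Y_\tau)-\tfrac1\gamma\mathcal{H}_{\tau,T}(\mathbb{Q}^E,\mathbb{P})$, because $U_0(x)=-e^{-\gamma x}$ lets the $\widehat{\mathcal{F}}_\tau$-measurable entropy term be absorbed additively into the wealth argument. Applying the classical dual representation~(\ref{eq:H^C classical})--(\ref{eq:classical value processes}) with horizon $\tau$ in place of $T$ then yields
\begin{equation*}
\esssup_{\theta\in\Theta_{t,\tau}}\mathbb{E}\left[-e^{-\gamma(X_\tau-\widetilde{C})}\,\middle\vert\,\widehat{\mathcal{F}}_t\right]=U_0(X_t)\exp\!\left(-\essinf_{\psi\in\Psi}\Big(\mathcal{H}_{t,\tau}(\mathbb{Q},\mathbb{P})+\mathbb{E}^{\mathbb{Q}}\big[\mathcal{H}_{\tau,T}(\mathbb{Q}^E,\mathbb{P})-\gamma C(Y_\tau)\,\big\vert\,\widehat{\mathcal{F}}_t\big]\Big)\right).
\end{equation*}
Since $G\mapsto-\exp(-\gamma X_t-G)$ is increasing, the outer $\esssup_\tau$ passes through the exponential onto the bracket, which is the first asserted formula. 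The decomposition Lemma for the relative entropy under stopping times guarantees that the nested optimisation is internally consistent — in particular the claim-free specialisation ($C\equiv0$) collapses to $\mathcal{H}_{t,T}(\mathbb{Q}^E,\mathbb{P})$ for every $\tau$, giving $v^0(t,X_t,S_t)=U_0(X_t)\exp(-\mathcal{H}_{t,T}(\mathbb{Q}^E,\mathbb{P}))$.

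For the indifference price I would insert both value processes into the defining relation~(\ref{eq:indifference price}); cancelling $U_0(X_t)$ and taking logarithms gives $\gamma p=\mathcal{H}_{t,T}(\mathbb{Q}^E,\mathbb{P})-\esssup_\tau\essinf_\psi(\,\cdots)$ with the same bracket as above. As $\mathcal{H}_{t,T}(\mathbb{Q}^E,\mathbb{P})$ is constant given $\widehat{\mathcal{F}}_t$, it may be carried inside the $\essinf$, and the whole statement reduces to the measure-change identity
\begin{equation*}
\mathcal{H}_{t,\tau}(\mathbb{Q},\mathbb{P})+\mathbb{E}^{\mathbb{Q}}\big[\mathcal{H}_{\tau,T}(\mathbb{Q}^E,\mathbb{P})\,\big\vert\,\widehat{\mathcal{F}}_t\big]-\mathcal{H}_{t,T}(\mathbb{Q}^E,\mathbb{P})=\mathcal{H}_{t,\tau}(\mathbb{Q},\mathbb{Q}^E),
\end{equation*}
valid for every $\mathbb{Q}\in\mathcal{M}_{e,f}$ and every $\tau$; feeding it back changes the reference measure from $\mathbb{P}$ to $\mathbb{Q}^E$ and delivers the stated price.

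The main obstacle is exactly this last identity, since the naive ``stopped additivity'' $\mathcal{H}_{t,\tau}(\mathbb{Q},\mathbb{P})=\mathcal{H}_{t,\tau}(\mathbb{Q},\mathbb{Q}^E)+\mathcal{H}_{t,\tau}(\mathbb{Q}^E,\mathbb{P})$ \emph{fails} for a genuinely random horizon $\tau$: the term $\mathcal{H}_{t,\tau}(\mathbb{Q}^E,\mathbb{P})$ depends, through the $\mathbb{Q}$-law of $\tau$, on which $\mathbb{Q}$ one started from. I would instead introduce the concatenated measure $\widetilde{\mathbb{Q}}$ that coincides with $\mathbb{Q}$ on $[t,\tau]$ and with $\mathbb{Q}^E$ on $[\tau,T]$; the stopping-time decomposition of Definition~\ref{def:stopped conditional relative entropy} gives $\mathcal{H}_{t,T}(\widetilde{\mathbb{Q}},\mathbb{P})=\mathcal{H}_{t,\tau}(\mathbb{Q},\mathbb{P})+\mathbb{E}^{\mathbb{Q}}[\mathcal{H}_{\tau,T}(\mathbb{Q}^E,\mathbb{P})\,\vert\,\widehat{\mathcal{F}}_t]$, while the generalised additivity~(\ref{eq:relative entropy additivity}) with reference $\mathbb{Q}^E$, combined with $\mathcal{H}_{\tau,T}(\widetilde{\mathbb{Q}},\mathbb{Q}^E)=0$ (the two measures agree after $\tau$), identifies the remainder as $\mathcal{H}_{t,\tau}(\mathbb{Q},\mathbb{Q}^E)$. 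The remaining routine points — measurability of the essential suprema, the integrability needed for optional stopping, and the interchange of $\esssup_\tau$ with the monotone exponential — I would handle exactly as in Leung and Sircar~\cite{LS2009}, whose decomposition Lemma and Merton Proposition carry the structural weight.
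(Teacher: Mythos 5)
Your argument is correct and is essentially the proof the paper points to: for this proposition the thesis gives no argument of its own but simply cites Leung and Sircar \cite[Propositions 2.4 and 2.8]{LS2009}, and your reduction --- dynamic programming at $\tau$ to a European exponential problem over $[t,\tau]$ with the entropy-adjusted claim $\widetilde{C}=C(Y_\tau)-\tfrac{1}{\gamma}\mathcal{H}_{\tau,T}(\mathbb{Q}^E,\mathbb{P})$, followed by the change of reference measure from $\mathbb{P}$ to $\mathbb{Q}^E$ --- is exactly their route. Your identification of the one genuinely delicate step (that naive stopped additivity $\mathcal{H}_{t,\tau}(\mathbb{Q},\mathbb{P})=\mathcal{H}_{t,\tau}(\mathbb{Q},\mathbb{Q}^E)+\mathcal{H}_{t,\tau}(\mathbb{Q}^E,\mathbb{P})$ fails at a random horizon) and its repair via the concatenated measure, using the additivity (\ref{eq:relative entropy additivity}) with reference $\mathbb{Q}^E$ --- which is legitimate precisely because the reference is the MEMM --- is handled correctly.
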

\begin{proof}
A detailed proof is given in \cite[Propositions 2.4 and 2.8]{LS2009}. Therein, the claim is additionally dependent on the stock $S$, i.~e. $C_\tau=C(\tau,S_\tau,Y_\tau)$.
\end{proof}
\begin{theorem}[Forward indifference price valuation with American option]
The dual forward performance problem with the American option has the representation
\begin{align}
\label{eq:forward value process American}
\notag
v^C&(t,X_t,S_t,Y_t)\\
&=U_t(X_t)\exp\left(-\esssup_{\tau\in\mathcal{T}_{t,T}}\essinf_{\psi\in\Psi}\left(\mathcal{H}_{t,\tau}(\mathbb{Q},\mathbb{Q}^M)-\gamma\mathbb{E}^{\mathbb{Q}}\left[C(Y_\tau)\,\middle\vert\,\widehat{\mathcal{F}}_t\right]\right)\right)
\end{align}
with the entropic representation of the forward indifference price
\begin{equation}
\label{eq:forward indifference price control problem American}
p(t,S_t,Y_t)=-\frac{1}{\gamma}\esssup_{\tau\in\mathcal{T}_{t,T}}\essinf_{\psi\in\Psi}\left(\mathcal{H}_{t,\tau}(\mathbb{Q},\mathbb{Q}^M)-\gamma\mathbb{E}^\mathbb{Q}\left[C(Y_\tau)\,\middle\vert\,\widehat{\mathcal{F}}_t\right]\right).
\end{equation}
\end{theorem}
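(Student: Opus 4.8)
The plan is to reproduce the classical control-and-stopping argument of Leung and Sircar (Proposition~\ref{thm:dual classical problem with American option}) inside the forward framework, the single genuinely new ingredient being the replacement of the reference measure $\mathbb{Q}^E$ by the minimal martingale measure $\mathbb{Q}^M$, which comes for free from the mean-variance trade-off cancellation already exploited in the European forward case of Theorem~\ref{thm:forward indifference price valuation}. First I would start from the primal form in Corollary~\ref{thm:primal forward performance problem with American claim}, which pulls the $\widehat{\mathcal{F}}_t$-measurable prefactor out of the joint supremum and leaves
\[
v^C(t,X_t,S_t,Y_t)=e^{-\gamma X_t+\frac12 A_t}\esssup_{\tau\in\mathcal{T}_{t,T}}\esssup_{\theta\in\Theta_{t,\tau}}\mathbb{E}\left[-e^{-\gamma\left(\int_t^\tau\theta_u\d S_u-C(Y_\tau)\right)+\frac12\int_t^\tau(\widehat{\lambda}_u^S)^2\d u}\,\middle\vert\,\widehat{\mathcal{F}}_t\right].
\]
The crucial structural remark is that the forward performance process is self-generating: the martingality at the optimum in Definition~\ref{def:forward performance process} is precisely the forward analogue of the horizon-unbiased/self-generating condition~\eqref{eq:horizon-unbiased condition}, so that trading optimally on $[\tau,T]$ after exercise regenerates $U_\tau$ and contributes nothing beyond it. Consequently, for each fixed stopping time $\tau$ the inner control problem over $\theta\in\Theta_{t,\tau}$ is exactly the European forward performance problem of Theorem~\ref{thm:forward indifference price valuation} with the deterministic horizon $T$ replaced by $\tau$.

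Next, for fixed $\tau$ I would evaluate that inner problem by the same dual computation as in the European case. Parametrising $\mathbb{Q}\in\mathcal{M}_{e,f}$ by the volatility risk premium $\psi$ and using the stopped conditional relative entropy of Definition~\ref{def:stopped conditional relative entropy}, Proposition~\ref{thm:representation of conditional relative entropy} gives $\mathcal{H}_{t,\tau}(\mathbb{Q},\mathbb{P})=\tfrac12\mathbb{E}^\mathbb{Q}[\int_t^\tau((\widehat{\lambda}_u^S)^2+\psi_u^2)\d u\mid\widehat{\mathcal{F}}_t]$, and the trade-off term $\tfrac12\int_t^\tau(\widehat{\lambda}_u^S)^2\d u$ supplied by the forward utility cancels the drift contribution $\mathcal{H}_{t,\tau}(\mathbb{Q}^M,\mathbb{P})$ exactly, leaving the residual $\mathcal{H}_{t,\tau}(\mathbb{Q},\mathbb{Q}^M)=\tfrac12\mathbb{E}^\mathbb{Q}[\int_t^\tau\psi_u^2\d u\mid\widehat{\mathcal{F}}_t]$ of~\eqref{eq:conditional relative entropy under MMM}. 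This produces, for fixed $\tau$, the inner value $-\exp(-G(\tau))$ with
\[
G(\tau):=\essinf_{\psi\in\Psi}\left(\mathcal{H}_{t,\tau}(\mathbb{Q},\mathbb{Q}^M)-\gamma\mathbb{E}^\mathbb{Q}\left[C(Y_\tau)\,\middle\vert\,\widehat{\mathcal{F}}_t\right]\right).
\]

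Finally I would reinstate the outer stopping supremum. Since $x\mapsto-e^{-x}$ is increasing, $\esssup_\tau[-\exp(-G(\tau))]=-\exp(-\esssup_\tau G(\tau))$, and multiplying by the positive prefactor yields the asserted representation~\eqref{eq:forward value process American}. The indifference-price formula~\eqref{eq:forward indifference price control problem American} then drops out of its defining relation~\eqref{eq:indifference price}: replacing $X_t$ by $X_t+p$ in $v^C$, using $v^0(t,X_t,S_t,Y_t)=U_t(X_t)$ from~\eqref{eq:forward value process without claim entropic representation} (the no-claim value, for which $\psi^M=0$ and the stopping supremum is vacuous), and cancelling the common factor $U_t(X_t)$, one is left with $e^{-\gamma p}\exp(-\esssup_\tau G(\tau))=1$, i.e.\ $p=-\tfrac{1}{\gamma}\esssup_\tau G(\tau)$.

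The hard part will be the rigorous justification of the dynamic-programming separation of the control and stopping optimisations over a stochastic horizon: one must establish the integrability and measurability conditions (of the type flagged in Corollary~\ref{thm:optimal stopping time}) that permit moving $\esssup_\tau$ through the conditional expectation and the monotone exponential, and one must invoke the decomposition of the stopped relative entropy together with the self-generating property to legitimately collapse the combined problem into the family of European-type control problems indexed by $\tau$. Once this reduction and the measure-change bookkeeping are secured, the remaining algebra is routine.
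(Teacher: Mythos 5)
Your algebra and endpoint are right, but you take a genuinely different route from the paper, and the difference determines where the real burden of proof falls. The paper does not split $\esssup_{\tau}\esssup_{\theta}$ and re-run the European duality over the stochastic interval $[t,\tau]$; instead it transforms the pay-off into
\begin{equation*}
\widetilde{C}(\tau,S_\tau,Y_\tau)=C(\tau,Y_\tau)+\frac{1}{2\gamma}\int_t^\tau\left(\widehat{\lambda}_u^S\right)^2\d u+\frac{1}{\gamma}\mathcal{H}_{\tau,T}(\mathbb{Q}^E,\mathbb{P}),
\end{equation*}
which turns the forward primal problem of Corollary~\ref{thm:primal forward performance problem with American claim} into exactly the classical primal problem with claim $\widetilde{C}$; it then invokes the classical American duality of Proposition~\ref{thm:dual classical problem with American option} wholesale, after which the $\mathcal{H}_{\tau,T}(\mathbb{Q}^E,\mathbb{P})$ terms cancel and the identity $\mathcal{H}_{t,\tau}(\mathbb{Q},\mathbb{P})-\mathcal{H}_{t,\tau}(\mathbb{Q}^M,\mathbb{P})=\mathcal{H}_{t,\tau}(\mathbb{Q},\mathbb{Q}^M)$ delivers (\ref{eq:forward value process American}). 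What that buys is that the delicate step --- the joint control-and-stopping duality over a stochastic horizon, resting on the decomposition of the stopped entropy and the horizon-unbiased property (\ref{eq:horizon-unbiased condition}) --- is entirely outsourced to the cited Leung--Sircar results. Your route stays inside the forward framework throughout, so the left-stopped entropy never appears, which is structurally cleaner; but the step you defer as ``the hard part'' --- that for each fixed stopping time $\tau$ the inner problem over $\Theta_{t,\tau}$ admits the dual representation $-\exp(-G(\tau))$ --- is not covered by Theorem~\ref{thm:dual forward performance problem}, which is proved only for the deterministic horizon $T$, and is precisely the content the paper's pay-off transformation is designed to avoid having to re-prove. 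To complete your argument you would need to establish (or separately cite) that stochastic-horizon duality; as written it is asserted rather than proved. Your final steps --- the interchange $\esssup_\tau\left[-\exp(-G(\tau))\right]=-\exp\left(-\esssup_\tau G(\tau)\right)$ by monotonicity and the extraction of $p$ from (\ref{eq:indifference price}) --- are correct and agree with the paper.
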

\begin{proof}
Follow the approach from \cite[Proposition~2.7]{LSZ2012} by transforming the~pay-off~into
\begin{equation}
\label{eq:transformed claim pay-off}
\widetilde{C}(\tau,S_\tau,Y_\tau)=C(\tau,Y_\tau)+\frac{1}{2\gamma}\int_t^\tau\left(\widehat{\lambda}^S(u,S_u,Y_u)\right)^2\d u+\frac{1}{\gamma}\mathcal{H}_{\tau,T}(\mathbb{Q}^E,\mathbb{P}).
\end{equation}
Then recall the primal forward performance problem from Corollary~\ref{thm:primal forward performance problem with American claim},
\begin{align*}
v^C(t,X_t,S_t,Y_t)&=U(X_t)\esssup_{\tau\in\mathcal{T}_{t,T},\theta\in\Theta_{t,\tau}}\mathbb{E}\left[-e^{-\gamma\left(\int_t^\tau\theta_u\d S_u - C(Y_\tau)\right)+\frac12\int_t^\tau\left(\widehat{\lambda}_u^S\right)^2\d u}\,\middle\vert\,\widehat{\mathcal{F}}_t\right]\\
&=U(X_t)\esssup_{\tau\in\mathcal{T}_{t,T},\theta\in\Theta_{t,\tau}}\mathbb{E}\left[-e^{-\gamma\left(\int_t^\tau\theta_u\d S_u - \widetilde{C}(\tau,S_\tau,Y_\tau)\right)-\mathcal{H}_{\tau,T}(\mathbb{Q}^E,\mathbb{P})}\,\middle\vert\,\widehat{\mathcal{F}}_t\right],
\end{align*}
through a substitution of the claim pay-off $C$ by the transform $\widetilde{C}$. Now, the application of Proposition~\ref{thm:dual classical problem with American option} yields
\begin{align*}
v^C&(t,X_t,S_t,Y_t)\\
&=U_t(X_t)\exp\left(-\esssup_{\tau\in\mathcal{T}_{t,T}}\essinf_{\psi\in\Psi}\left(\mathcal{H}_{t,\tau}(\mathbb{Q},\mathbb{P})+\mathbb{E}^{\mathbb{Q}}\left[\mathcal{H}_{\tau,T}(\mathbb{Q}^E,\mathbb{P})-\gamma \widetilde{C}_\tau\,\middle\vert\,\widehat{\mathcal{F}}_t\right]\right)\right)\\
&=U_t(X_t)\exp\left(-\esssup_{\tau\in\mathcal{T}_{t,T}}\essinf_{\psi\in\Psi}\left(\mathcal{H}_{t,\tau}(\mathbb{Q},\mathbb{P})-\mathbb{E}^{\mathbb{Q}}\left[\frac{1}{2}\int_t^\tau\!\!\left(\widehat{\lambda}_u^S\right)^2\d u+\gamma C_\tau\,\middle\vert\,\widehat{\mathcal{F}}_t\right]\right)\right)\\
&=U_t(X_t)\exp\left(-\esssup_{\tau\in\mathcal{T}_{t,T}}\essinf_{\psi\in\Psi}\left(\mathcal{H}_{t,\tau}(\mathbb{Q},\mathbb{P})-\mathcal{H}_{t,\tau}(\mathbb{Q^M},\mathbb{P})-\gamma\mathbb{E}^{\mathbb{Q}}\left[C(Y_\tau)\,\middle\vert\,\widehat{\mathcal{F}}_t\right]\right)\right)\\
&=U_t(X_t)\exp\left(-\esssup_{\tau\in\mathcal{T}_{t,T}}\essinf_{\psi\in\Psi}\left(\mathcal{H}_{t,\tau}(\mathbb{Q},\mathbb{Q}^M)-\gamma\mathbb{E}^{\mathbb{Q}}\left[C(Y_\tau)\,\middle\vert\,\widehat{\mathcal{F}}_t\right]\right)\right),
\end{align*}
which proves (\ref{eq:forward value process American}). The forward indifference price representation (\ref{eq:forward indifference price control problem American}) is then implied by (\ref{eq:indifference price}).
\end{proof}
\section{Conclusions and future research directions}
\label{sec:conclusions}
In this thesis we applied the forward performance framework, defined by Musiela and Zariphopoulou~\cite{MZ2009} to the basis risk model with partial information from Monoyios \cite{M2010} to solve the forward utility maximisation problem of exponential type of an investor with a hedging portfolio consisting of a long position in the traded stock $S$ and a short position of a claim written on the non-traded asset $Y$. We obtained the optimal hedging strategy, value function and indifference price representation using methods from duality theory. In the case of an European option, we discussed the main result containing the change of the MEMM $\mathbb{Q}^E$ to the MMM $\mathbb{Q}^M$ after Theorem~\ref{thm:forward indifference price valuation}. We derived the residual risk, pay-off decompositions and an asymptotic expansion of the indifference price. Then we changed to the market model with an American option having a random exercise time inspired by Leung, Sircar and Zariphopoulou~\cite{LSZ2012}. We formulated the optimal control  problem with stopping time and obtained the representations for the value function and forward indifference price. Hereinafter, we take up some points of the thesis to discuss future research topics. We carry out a comprehensive review of the parameter uncertainty in the Kalman-Bucy filter used in our partial information model and discuss alternatives from recent publications. Furthermore, we present the semi-martingale framework for utlity maximisation problems allowing to use weaker assumptions on the model. Moreover, we outline the approach of solving the forward indifference price PDE in the European option's case with numerical methods, by applying the asymptotic expansion of the indifference price as an approximation. For the case with an American option, we describe the variational inequality for the forward indifference price to be expected and a suggestion for solving it numerically. In addition, we propose a larger market model by making more claims available for the market agent, and discuss other large markets in utility maximisation theory. Lastly, we present a generalisation of stochastic utilities used in forward utility-based optimisation theory.
\subsection{Parameter uncertainty in the Kalman-Bucy filter}
\label{subsec:parameter uncertainty}
In Subsection~\ref{subsec:partial information scenario} and Subsection~\ref{subsec:Kalman-Bucy filtering} we developed our partial information model through a Kalman-Bucy filter with known Gaussian prior distribution based on Monoyios~\cite{M2007}, \cite{M2010}.
We assumed in Definition~\ref{def:observation and signal process} the signal process $\Lambda=\begin{pmatrix}\lambda^S\\\lambda^Y\end{pmatrix}$ with unknown MPR constants $\lambda^S,\lambda^Y$ of the asset prices $S,Y$ to have a Gaussian prior distribution \begin{equation*}\Lambda\,\vert\,\widehat{\mathcal{F}}_0 \sim\mathcal{N}(\Lambda_0,\Sigma_0),\; \Lambda_0:=\begin{pmatrix}

              \lambda_0^S\\
              \lambda_0^Y
           \end{pmatrix},\; \Sigma_0:=\begin{pmatrix}
                                         z_0^S & c_0\\
                                         c_0   & z_0^Y
                                      \end{pmatrix},\; c_0:=\rho\min\{z_0^S,z_0^Y\},
\end{equation*}
for given constants $\lambda_0^S,\lambda_0^Y, z_0^S,z_0^Y$. This is the underlying distribution of the Kalman-Bucy Filter introduced in Definition~\ref{def:Kalman-Bucy filter}. The first assumption is made by choosing Gaussian random variables, the second is the knowledge of the prior distribution parameters. If the second assumption is omitted, the problem of uncertain MPRs is shifted to the problem of unknown parameters of the Gaussian prior distribution. One could specify intervals for the parameters, e. g. for $\lambda^S$, using the best estimate approach from Subsection~\ref{subsec:partial information scenario}. The single standard deviation interval $[\overline{\lambda^S}(t)-\frac{1}{\sqrt{t}}, \overline{\lambda^S}(t)+\frac{1}{\sqrt{t}}]$ with confidence $66.27\%$ leads to approximately $t\approx 10$ years of empirical data. With 90\% confidence $t\approx 271$ years of market data is required. A higher confidence of 95\% needs historical data collected since the Early Middle Ages. However, this example shows the statistical error by adopting the estimated interval, which affects the accuracy of the filter. Decision making as utility optimisation based on the filter gets an additional inherent risk.\par
Monoyios~\cite[Section~6]{M2010} carried out extensive numerical simulations with empirical examples of hedging under the partial information model with classical utility. He demonstrated, that the filtering procedure can improve the performance of the hedge, provided that the prior is not extremely poor. The rate of learning by the filter on the asset price MPRs is too slow to counteract parameter uncertainty without the extra insurance of an increased option premium. Monoyios concluded, that considering the combined valuation and hedging program, taking parameter uncertainty into account via an increased option premium and using a filtering approach is of benefit.\par
\textit{Robustness} with respect to model uncertainty in stochastic filtering has been considered for diverse linear and non-linear systems. Miller and Pankov~\cite{MP2005} and Siemenikhin~\cite{Si2016}, for instance, studied linear dynamics with parameter uncertainty in the noise covariance matrices using a so-called \textit{minimax filter}, which is basically an estimator minimising the maximal expected loss over a range of possible models. This idea emerged in Wald~\cite{W1945} in 1945, in which the problem is to find a distribution minimising the maximum risk, which is a general \textit{statistical inference problem}. Therein, the risk is defined as an integral function of the unknown parameters and weighted \textit{statistical decision functions}. Martin and Mintz~\cite{MM1983} examined the existence and behaviour of game-theoretic solutions for robust linear filters and predictors in the context of time-discrete models. They discovered that robust Kalman-Bucy filters can be realised when the least favourable prior distribution is either independent, of, or only weakly dependent upon the specific decision interval. Moreover, they concluded, based on practical experience with times series data, that uncertain dynamics (drifts) can have far greater effect on filter and predictor performance than typical uncertainties in either the signal or observation noise covariances. Verd\'{u} and Poor~\cite{VP1984} noted that minimax estimators are criticised as being too pessimistic and having a poor performance in the most statistically probable model, since they are dependent on the specification of an often arbitrary uncertainty class, likely taking implausible models into consideration.\par
Allan and Cohen~\cite{AC2018} have recently discussed some other filter techniques and proposed a new approach to parameter uncertainty in stochastic filtering, specifically when working with the time-continuous Kalman-Bucy filter by making evaluations via a non-linear expectation, represented in terms of a \textit{penalty function}. The \textit{penalty} is a measure for the error evolving in time caused by the uncertainty, and is calculated by propagating the a priori uncertainty forward through time using filter dynamics. An idea, that has been taken from Cohen~\cite{C2017} and \cite{C2018}, in which the investigation concerned time-discrete models in a binomial and Markov chain framework, respectively.\par
We proposed in Remark~\ref{rem:OU process} an Ornstein-Uhlenbeck model for the signal process (MPRs) and mentioned the parameter uncertainty issue. This model is more complicated than the constant signal $\Lambda$ filtered with the determined Gaussian prior in the sense, that it has multiple parameter uncertainties. An alternative model is the linear equation
\begin{equation*}
\d\lambda_t^i=\alpha_t\lambda_t^i\d t+\beta_t\d W_t^i,\quad i=S,Y,
\end{equation*}
with Gaussian prior $\Lambda\,\vert\,\widehat{\mathcal{F}}_0\sim\mathcal{N}(\Lambda_0,\Sigma_0)$ as defined in (\ref{eq:prior}), and measurable, locally bounded, deterministic functions $\alpha$ and $\beta$ of time on appropriate real intervals as defined in \cite{AC2018}. The parameter functions $\alpha$ and $\beta$ are assumed to be uncertain. Through following the methods from Allan and Cohen~\cite{AC2018}, its feasible to tackle this issue by formulating the penalty problem and measuring penalties dependent of different true and estimated parameters. Robust upper and lower expectations of the signal can provide error bounds for the Kalman-Bucy filter. If the signal $\Lambda$ follows the Ornstein-Uhlenbeck process (\ref{eq:Ornstein-Uhlenbeck}), then one could try to apply the theory of \cite{AC2018} with the aim to analyse penalties and calculate robust bounds for the Kalman-Bucy filter.\par
In addition, one may consider a broader class of prior distributions for the Kalman-Bucy filter. For instance, Bene\v{s} and Karatzas \cite{BK1983} analysed filtering with non-Gaussian prior distribution and showed that the conditional distribution is a mixture of Gaussians, which is propagated by two sets of \textit{sufficient statistics}. These statistics obey usually non-linear SDEs implementable of a filter. For a Gaussian initial distribution, there is only one random sufficient statistic propagating the conditional density, in accordance with the classical theory.\par
Mostovyi and S\^{i}rbu~\cite{MS2017} have recently studied the sensitivity of an expected utility maximisation problem in a continuous semi-martingale market with respect to small changes in the MPR. They analyse the stochastic control problem under the perturbation and give an explicit form of the correction terms for an example with power utility. Eventually, this discussion brings up the question, whether and how parameter uncertainty of the Kalman-Bucy filter affects the forward utilities, (optimal) hedging strategies, residual risk, forward indifference price and claim representations treated in this dissertation, which is a good topic for future research.
\subsection{Utility maximisation in semi-martingale financial models}
For our dual performance maximisation problems, we expressed in Definition~\ref{def:Radon-Nikodym derivative process} the equivalent local martingale measures (ELMMs) $\mathbb{Q}\in\mathcal{M}_{e,f}$ by the Radon-Nikodym derivative processes $Z^\mathbb{Q}$ under $\widehat{\mathbb{F}}$. Since Karatzas and Kardaras~\cite{KK2007}, it has been acknowledged that one does not need ELMMs. Karatzas and Kardaras studied optimal utility-based hedging strategies in a general semi-martingale model  with a weaker assumption, the \textit{``No Unbounded Profit with Bounded Risk'' (NUPBR)} instead of the stronger \textit{``No Free Lunch with Vanishing Risk'' (NFLVR)} condition. They proved, that the optimal portfolio even exists, when the NFLVR assumption is replaced by NUPBR and filled the gap between the \textit{``No Arbitrage'' (NA)} and NFLVR conditions. The NUPBR rule involves the boundedness in probability of the terminal values of wealth processes and is the minimal a priori assumption required in order to proceed with utility optimisation (cf. \cite[p.~449]{KK2007}). Using semi-martingale models with NUPBR is a topic of current research, for example, treated by Mostovyi and S\^{i}rbu~\cite{MS2017}, \cite{MS2018} and Mostovyi~\cite{Mo2018}.
\subsection{Numerical simulations}
In Theorem~\ref{thm:forward indifference price valuation}, we gave the forward indifference price for the European option in terms of a control problem (\ref{eq:forward indifference price control problem}), solving the semi-linear PDE (\ref{eq:p PDE}), which is
\begin{equation*}
p_t+\mathcal{A}_{S,Y}^{\mathbb{Q}^M}p+\frac12\gamma(1-\rho^2)\left(\sigma^Yyp_y\right)^2=0,\quad p(T,s,y)=C(y).
\end{equation*}
In our partial information model from Subsection~\ref{subsec:Kalman-Bucy filtering}, it was not possible to derive a closed probabilistic representation with the distortion method similar to the full information scenario (\ref{eq:indifference price probabilistic representation}) in Remark~\ref{rem:distortion}. The same issue was present under classical utility in Monoyios~\cite{M2010}.\par
A potential further action is the derivation of a numerical solution to the aforementioned PDE for the forward indifference price and investigate valuation and hedging performances inspired by the classical case from \cite[Section~6]{M2010}. Comparable to \cite[Section~5]{M2010}, one can try to obtain an analytic formula for the conditional variance of the claim $\operatorname{Var}^{\mathbb{Q}^M}\left[C(Y_T)\,\middle\vert\,\widehat{\mathcal{F}}_t\right]$ in the asymptotic expansion of the indifference price (\ref{eq:asymptotic expansion of the forward indifference price}) from Proposition~\ref{thm:asymptotic expansion of the forward indifference price}, and to specify the distribution parameters of $\log Y_T$ in terms of the partial information model parameters from Proposition~\ref{thm:model under partial information}. The next step is the attempt to give the Black-Scholes representations of the marginal forward indifference price $p^M$ and marginal hedging strategy $\theta^M$ , and approximate the indifference price by the asymptotic expansion to obtain the derivatives of $p^M$ and of the claim variance $\operatorname{Var}^{\mathbb{Q}^M}\left[C(Y_T)\,\middle\vert\,\widehat{\mathcal{F}}_t\right]$. Using the explicit formulas for the marginal indifference price and hedging strategy, the final step is to give an integral representation of the expected covariation of the profit and loss $\mathbb{E}^{\mathbb{Q}^M}\left[\langle X^M\rangle_{t,T}\,\middle\vert\,\widehat{\mathcal{F}}_t\right]$ in (\ref{eq:asymptotic expansion of the forward indifference price}). If this approach succeeds, then one can try to numerically evaluate this expression using a \textit{Monte-Carlo simulation}. One expects to find that the forward utility approach is like a low risk aversion limit of the classical approach, since the minimal martingale measure $\mathbb{Q}^M$ is used.\par
In the case of an American option from Section~\ref{sec:american option}, a further approach is to derive the \textit{variational inequality} for the forward indifference price $p(t,s,y)$ under partial information similar to the one given by Leung et al.~\cite[Subsection~3.1]{LSZ2012} under full information. One expects, that the indifference price solves the free boundary problem
\begin{equation}
\label{eq:pde american}
\begin{aligned}
\begin{cases}
&\!\!\!\!p_t+\mathcal{A}_{S,Y}^{\mathbb{Q}^M}p+\frac12\gamma(1-\rho^2)\left(\sigma^Yyp_y\right)^2\leq 0,\\
&\!\!\!\!p(t,s,y)\geq C(t,y),\\
&\!\!\!\!\left(p_t+\mathcal{A}_{S,Y}^{\mathbb{Q}^M}p+\frac12\gamma(1-\rho^2)\left(\sigma^Yyp_y\right)^2\right)(C(t,y)-p(t,s,y))=0,\\
&\!\!\!\!p(T,s,y)=C(T,y),
\end{cases}
\end{aligned}
\end{equation}
for $(t,s,y)\in[0,\infty)\times\mathbb{R}\times[0,T]$. The crucial difference to \cite{LSZ2012} is that the indifference price under our partial information model depends additionally on $S$ and $Y$, rather than only on $Y$. Nevertheless, \cite[equation~(32)]{LSZ2012} displays the variational inequality in the general case, when $p$ as well as $C$ depend on $S$ and $Y$. To derive a numerical solution for the indifference price, one needs to solve the free boundary problem (\ref{eq:pde american}) in three dimensions, which is a non-trivial task. In \cite[Section~4]{LSZ2012} early exercise problems of \textit{employee stock options (ESOs)} are modelled under the full information scenario with constant MPRs $\lambda^S, \lambda^Y$, and solved numerically using a \textit{fully explicit finite-difference scheme} for the exponential forward performance case. Full and partial information models of ESOs are analysed, for instance, by Henderson et al.~\cite{HKM2018} and Monoyios and NG~\cite{MN2011}.
\subsection{Utility maximisation in larger markets}
In our basis risk market model, we considered a single European (American) option $C$ with fixed expiry $T$ (early exercise time $\tau$) on the non-traded asset $Y$. This market model can be enlarged by offering $n$ ($n>1$) European (American) claims $C_1,\dots,C_n$ written on $Y$ with expiries $T_1,\dots,T_n$ (early exercise times $\tau_1,\dots,\tau_n$) and pay-offs $C_1(Y_{T_1}),\dots,C_n(Y_{T_n})$ ($C_1(Y_{\tau_1}),\dots,C_n(Y_{\tau_n})$). Furthermore, a setting of mixed European and American claims may be considered. When creating the hedging portfolio, the market agent must decide, how many and which claims he is going to include. One may simply value each single option $C_j,\;j=1,\dots,n$ separately by setting up $n$ different hedging portfolios with a long position in the stock $S$ and a short position with a unit of the option $C_j$, but it is not clear, if one misses on possible effects in the risk management strategies.  Specifically, with forward utility and different, flexible exercise times, the number of options held in the portfolio at the same time can vary.\par
Generally, one may enlarge the financial market by adding more assets. The concept of a large security market was described by Kabanov and Kramkov~\cite{KK1998} as a sequence of probability spaces (general models), whereas Bj\"ork and N\"aslund~\cite{BN1998} defined a large market to be one probability space with countably many assets $((S_t^i)_{0\leq t\leq T})_{i=1}^\infty$. Donno, Guasoni and Pratelli~\cite{DGP2005} applied the classical utility maximisation theory on a large market, studied them with duality methods and characterised replicable claims. Mostovyi~\cite{Mo2018} considered the model from \cite{DGP2005} with stochastic utility. He concluded, that the value function with countably many assets is the limit of the value functions of the finite-dimensional models, but the optimal strategy with infinite assets is not a limit of the trading strategies of the finite-dimensional markets, in general.
\subsection{General utility random fields}
We used the forward utility $U_t(x)=-e^{-\gamma x+\frac12 \int_0^t\left(\widehat{\lambda}_u^S\right)^2\d u}$ of exponential type, defined in Subsection~\ref{subsec:Forward performance problem in an incomplete market}, for the utility optimisation problems. It is a specific forward utility derived by the class of asymptotically linear local risk tolerance functions dealt in Subsection~\ref{subsec:forward utility and local risk tolerance}. Musiela and Zariphopoulou~\cite{MZ2007}, \cite{MZ2010} suggested this model to give more flexibility to the individual risk preferences of an investor adapting the market development. El Karoui and M'Rad~\cite{EM2013a}, \cite{EM2013b} studied the consistency of dynamic utilities. They introduced the general notion of \textit{progressive utility}, which is a collection of It\^{o} semi-martingales with dynamics
\begin{equation}
\label{eq:dU}
\d U(t,X_t)= \beta(t,X_t)\d t+\gamma(t,X_t)\d W_t,
\end{equation}
including drift and volatility processes $\beta, \gamma$. The stochastic utilities are often referred to as \textit{utility random fields}. Utility random fields of investment and consumption were considered at first by Berrier and Tehranchi~\cite{BT2009} and Berrier et al.~\cite{BRT2009}. El Kaouri et al.~\cite{EHM2018} extended the forward utility setting through market-consistent utility random fields that are calibrated to a given learning $\sigma$-algebra. They provided differential regularity conditions on stochastic utility properties ensuring the existence of consistency and optimal strategies. Defining utility random fields through SDEs of the form (\ref{eq:dU}) offers an opportunity for future research in utiliy-based valuation and hedging.
\newpage
\thispagestyle{plain}
\addcontentsline{toc}{section}{References}

\end{document}